\newcommand{\Ap}{A_{\pi}}
\newcommand{\Bp}{B_{\pi}}
\newcommand{\Jp}{J^{\pi}}
\newcommand{\E}{\mathbbm{E}}
\newcommand{\N}{\mathbbm{N}}
\newcommand{\R}{\mathbbm{R}}
\newcommand{\1}{\mathbbm{1}}
\newcommand{\la}{\lambda^{\ast}}
\newcommand{\Rex}{\bar{\R}}
\newcommand{\diag}{\text{diag}}
\theoremstyle{plain}
\newtheorem{theorem}{Theorem}[section]
\newtheorem{corollary}[theorem]{Corollary}
\newtheorem{lemma}[theorem]{Lemma}
\newtheorem{assumption}[theorem]{Assumption}
\theoremstyle{definition}
\theoremstyle{remark}
\newtheorem{remark}{Remark}
\begin{document}


\title{Mind the Cap! - Constrained Portfolio Optimisation in Heston's Stochastic Volatility Model}

\author{M. ESCOBAR-ANEL$\dag$, M. KSCHONNEK$^{\ast}$$\ddag$\thanks{$^\ast$Corresponding author.
Email: michel.kschonnek@tum.de} and R. ZAGST$\ddag$\\
\affil{$\dag$ Department of Statistical and Actuarial Sciences, Western University, London, ON, Canada \\
$\ddag$ Chair of Mathematical Finance, Technical University of Munich, D-80333 Munich, Germany \received{\today} }}

\maketitle

\begin{abstract}
We consider a portfolio optimisation problem for a utility-maximising investor who faces convex constraints on his portfolio allocation in Heston's stochastic volatility model. We apply the duality methods developed in \citet{EKZ2023} to obtain a closed-form expression for the optimal portfolio allocation. In doing so, we observe that allocation constraints impact the optimal constrained portfolio allocation in a fundamentally different way in Heston's stochastic volatility model than in the Black Scholes model. In particular, the optimal constrained portfolio may be different from the naive \enquote{capped} portfolio, which caps off the optimal unconstrained portfolio at the boundaries of the constraints. Despite this difference, we illustrate by way of a numerical analysis that in most realistic scenarios the capped portfolio leads to slim annual wealth equivalent losses compared to the optimal constrained portfolio. During a financial crisis, however, a capped solution might lead to compelling annual wealth equivalent losses.
\end{abstract}

\begin{keywords}
Portfolio Optimisation, Allocation Constraints, Dynamic Programming, Heston's Stochastic Volatility Model, Incomplete Markets
\end{keywords}

\begin{classcode}
	G11, C61
\end{classcode}

\section{Introduction}\label{sec: Introduction}

Despite its widespread use in both academia and the financial industry, it has been well-documented in the mathematical finance literature that a variety of properties of financial time series, so-called stylised facts, are not captured by the Black-Scholes model (see e.g. \citet{Cont2000}, \citet{Lux2008}, \citet{Weatherall2018}). A major point of criticism is the constant  volatility of modelled log returns, whereas empirical returns appear to be time-dependent and random (\citet{Taylor1994}). The stochastic volatility model proposed by \citet{Heston1993} aims to avoid this gap by modelling the volatility of log returns as a Cox-Ingersoll-Ross process (CIR process). While the Heston model was originally proposed in the context of option pricing, its analytical tractability has led to insightful applications in continuous-time portfolio optimisation (see e.g. \citet{Kraft2005}), which is the subject of this article. \\
Specifically, we consider the portfolio optimisation problem of an investor trading continuously in a financial market consisting of one risk-free asset and one risky asset with stochastic Heston volatility. The investor seeks to maximise his expected utility derived from terminal wealth at a finite time point $T>0$ under the condition that his portfolio allocation $\pi$ abides by given convex allocation constraints $K.$ The considered optimisation problem involves two major facets, which differ from the original portfolio optimisation setting of \citet{merton1971}: stochasticity of the volatility of risky asset log returns (\enquote{stochastic volatility}) and the presence of convex allocation constraints. The following paragraphs present a brief overview of the relevant literature, with respect to both of these facets.\\

{\emph{(i) Portfolio Optimisation in Financial Markets with Stochastic Volatility.}}\\

Our continuous time set-up can be traced back to the seminal work of \citet{merton1971}, who used dynamic programming methods to derive explicit solutions to the unconstrained dynamic portfolio optimisation problem for an investor with HARA utility function in a Black-Scholes model with constant volatility. Generalisations for different utility functions and more complex financial markets were achieved by employing martingale methods (\citet{Pliska1986} and \citet{karatzas1987}), which rely heavily on the assumption that all contingent claims in the financial market are replicable. However, this assumption is not generally satisfied for financial markets with stochastic volatility, which means that martingale methods are not directly applicable, unless the financial market is artificially completed by the addition of fictitious, volatility-dependent assets (see e.g. \citet{Liu2003}, \citet{Branger2008}, \citet{Egloff2010}, \citet{Rubtsov2017}, and \citet{Chen2021}). Without such completion, the solvability of the portfolio optimisation problem in financial markets with stochastic volatility is often directly linked to the solvability of the associated HJB PDE.\footnote{Note that obtaining and formally verifying the optimality of a candidate portfolio process requires more than just a solution to the associated HJB PDE, as pointed out by \citet{Korn2004}.} Solutions to such portfolio optimisation problems were first characterised in terms of viscosity solutions to the associated HJB PDE for multi-factor models in \citet{Zariphopoulou2001}, and explicit closed-form solutions for Heston's stochastic volatility model were first derived and formally verified in \citet{Kraft2005}. Subsequently, \citet{Liu2006} derived explicit solution formulae for an optimal consumption and portfolio allocation problem in a general multi-factor model, where the factor is a quadratic diffusion. \citet{Kallsen2010a} used the notion of an opportunity process and semi-martingale characteristics to develop an approach which leads to closed-form solutions for the optimal portfolio process in a range of exponentially affine stochastic volatility models, including the Heston model and the jump model of \citet{Carr2003}. The PCSV model and the Wishart process were considered as multi-dimensional extensions of the Heston model for multidimensional asset universes with stochastic correlation in \citet{Rubtsov2017} and \citet{baeuerle2013}. An extensive overview of related papers in the field of dynamic portfolio optimisation in stochastic factor models can be found in \citet{Zariphopoulou2009}. \\

{\emph{(ii) Portfolio Optimisation in Financial Markets with Convex Allocation Constraints.}}\\

Convex allocation constraints were first considered in the context of continuous-time dynamic portfolio optimisation in \citet{karatzas1991} and \citet{cvitanic1992}. The authors showed that a solution to the primal constrained portfolio optimisation problem could be derived by considering a family of \enquote{auxiliary} unconstrained portfolio optimisation problems, determining the optimal portfolios for these problems, and selecting the least favourable portfolio among them. For CRRA-utility functions, \citet{cvitanic1992} determined the optimal constrained portfolio process in closed-form up to a deterministic minimiser of a real convex optimisation problem. \citet{Zariph1994} considered a similar setting in a one-dimensional Black Scholes market, but did not employ any duality techniques. Rather, the author characterised the value function as the unique viscosity solution of the associated HJB PDE and gave a semi-explicit expression of the optimal portfolio allocation in terms of the value function. From both \citet{cvitanic1992} and \citet{Zariph1994}, one can easily see that the optimal constrained portfolio allocation for an investor with a CRRA utility function in the Black-Scholes model is equal to the unconstrained optimal solution if the constraints are satisfied, and is otherwise capped at the boundary of the constraint. Ever since, allocation constraints have been integrated into portfolio optimisation problems in a myriad of ways (see e.g. \citet{Cuoco1997}, \citet{Pham2002}, \citet{Lindberg2006a}, \citet{Mnif2007}, \citet{Zheng2011}, \citet{Nutz2012} and \citet{Dong2020}). Regardless, explicit solutions for the optimal portfolio allocation were obtained only on rare occasions. The availability of explicit expressions appears to strongly depend on the interplay between the chosen financial model, utility function and constraint. In this spirit, \citet{EKZ2023} recently developed a duality approach, which enabled them to state a condition under which the exponentially affine separability structure of the value function (as discussed in \citet{Liu2006}) is retained under the addition of convex allocation constraints. If this condition is satisfied, then a candidate for the optimal constrained portfolio allocation is known up to the solution of a Riccati ODE and a deterministic minimiser of a convex optimisation problem. Further, the authors showed that this condition is satisfied for Heston's stochastic volatility model, but failed to determine explicit solutions to the associated ODEs and did not formally verify the optimality of the candidate portfolio allocation.\\

In this paper, we make a threefold contribution to this literature:
\begin{itemize}
	\item We complement the work of \citet{EKZ2023} by deriving the first explicit, closed-form expression for the optimal portfolio allocation $\pi^{\ast}$ in Heston's stochastic volatility model under the presence of convex allocation constraints. This optimality is verified formally in a verification theorem.
	\item We show that the optimal portfolio allocation $\pi^{\ast}$ may be different from the capped optimal unconstrained portfolio allocation $\pi_u$ for Heston's stochastic volatility model. In particular, we prove an equivalent characterisation which describes when these two portfolio are different. We conduct a numerical study to show that this difference is slim for calm market scenarios, but can lead to significant annual wealth equivalent losses during turbulent market scenarios.
	\item We extend these results from the one-dimensional Heston's stochastic volatility model to a multi-dimensional PCSV model with constraints on the exposures to individual stochastic market factors and to generalised financial markets with inverse volatility constraints.
\end{itemize}

The remainder of this paper is structured as follows: In Section \ref{sec: Heston's Stochastic Volatility Model}, we introduce and solve the constrained portfolio optimisation problem $\mathbf{(P)}$ in Heston's stochastic volatility model. Specifically, we derive a solution to the HJB PDE associated with $\mathbf{(P)}$ and the candidate optimal portfolio $\pi^{\ast}$ in Section \ref{subsec: Solution to HJB PDE}, verify its optimality formally by proving a verification theorem in Section \ref{subsec: Verification Theorem for Heston's Model} and discuss its relation to the optimal unconstrained portfolio $\pi_u$ in Section \ref{subsec: Comparison to Unconstrained Portfolio}. In Section \ref{sec: Implications for Related Models}, we consider a generalised financial market model which depends on a multi-dimensional CIR process and derive the optimal portfolio in the PCSV model under exposure constraints (Section \ref{subsec: PCSV Model}) and in general financial markets with inverse volatility constraints (Section \ref{subsec: Inverse Volatility Constraints}). In Section \ref{sec: Numerical Studies}, we illustrate our theoretical results for Heston's stochastic volatility model in a numerical analysis, where we analyse the wealth equivalent loss of the optimal constrained portfolio for a Black Scholes model (Section \ref{subsec: Optimal Constrained Merton}) and the capped optimal unconstrained portfolio for Heston's stochastic volatility model (Section \ref{subsec: Capped Optimal Unconstrained Heston Portfolio}). Section \ref{sec: Conclusion} concludes the paper. All proofs relating to this paper can be found in Appendix \ref{sec: App. Proofs}.

\section{Heston's Stochastic Volatility Model}\label{sec: Heston's Stochastic Volatility Model}
We consider a finite time horizon $T>0$ and a complete, filtered probability space $(\Omega, \mathcal{F}_T,\mathbbm{F} = (\mathcal{F}_t)_{t \in [0,T]}, Q)$, in which the filtration $\mathbbm{F}$ is generated by two independent Wiener processes $\big(\hat{W},W^z) = \big(\hat{W}(t),W^z(t)\big)_{t \in [0,T]}.$ We define a financial market $\mathcal{M}_H$, consisting of one risk-free asset $P_0$ and one risky asset $P_1$, where the risky asset's instantaneous variance is driven by a CIR process $z$. 
Specifically, we assume a market consisting of one risk-free asset $P_0$ and one risky asset $P_1$, which satisfy $P_0(0)=P_1(0)=1$ and follow the dynamics
\begin{align*}
	dP_0(t) &= P_0(t) r dt,\\
	dP_1(t) &= P_1(t) \big(r + \eta \cdot z(t)\big)dt +  \sqrt{z(t)}\underbrace{\left(\rho dW^z(t)+\sqrt{1-\rho^2} d\hat{W}(t)\right)}_{=:dW(t)},
\end{align*}
where $z(0)=z_0>0$ and 
\begin{align*}
	dz(t) &= \kappa(\theta-z(t))dt+\sigma\sqrt{z(t)}dW^z(t).
\end{align*}

The market coefficients $r, \eta,\kappa, \theta, \sigma,z_0$ are assumed to be positive constants, and $\rho \in (-1,1).$ Lastly, it is assumed that Feller's condition holds for the parameters of $z$, i.e. $2\kappa \theta > \sigma^2,$ and, therefore, $z(t)$ is guaranteed to take only positive values with probability 1 (see \citet{Gikhman2011}). The wealth process $V^{v_0,\pi}$ of an investor trading in $\mathcal{M}_H$ according to a relative portfolio process $\pi$ and initial wealth $v_0>0$ now satisfies the usual SDE
\begin{align}\label{eq: SDE V original market}
	dV^{v_0, \pi}(t) = V^{v_0, \pi}(t) \Big( [r + (\eta z(t)\pi(t)]dt + \pi(t)\cdot \sqrt{z(t)}dW(t)\Big).
\end{align}

In this context, the portfolio process $\pi= \big(\pi(t)\big)_{t\in [0,T]}$ is one-dimensional, and represents the fraction of wealth invested in the risky asset $P_1$ at time $t$. The remaining fraction $1-\pi(t)$ is invested in the risk-free asset $P_0$. We restrict our analysis to the portfolio processes $\pi$, which guarantees that a unique solution to (\ref{eq: SDE V original market}) exists, i.e. to $\pi$ in 
\begin{align}\label{eq: def. admissible strategy Heston}
	\Lambda = \left \{ \pi = \big(\pi(t)\big)_{t \in [0,T]} \ \text{progr. measurable} \  \Big | \  \int_0^T z(t) \big(\pi(t) \big)^2 dt < \infty \ Q-a.s.\right \} 
\end{align}
If $\pi \in \Lambda$, it is straightforward to show that the unique solution $V^{v_0,\pi}(t)$ to (\ref{eq: def. admissible strategy Heston}) can be expressed in closed-form as 
\begin{align}\label{eq: wealth process explicitly original market}
	V^{v_0, \pi}(t) = v_0 \exp \Big( \int_0^t r+ \eta z(s) \pi(s)- \frac{1}{2}z(s)\pi(s)^2 ds + \int_0^t \pi(s)\sqrt{z(s)} dW(s)\Big). \
\end{align}

For a closed convex set $K\subset \R\cup\{\infty, -\infty\}=:\Rex$ with non-empty interior and CRRA utility function $U(v)= \frac{1}{b}v^b$ with $b<1$ and $b \neq 0,$ we consider the portfolio optimisation problem
\begin{equation*}
	\mathbf{(P)}
	\begin{cases}
		\Phi(v_0) &= \underset{\pi \in \Lambda_K}{\sup} \mathbbm{E}\big[U(V^{v_0, \pi}(T)) \big] \\
		\Lambda_K &= \left \{ \pi \in \Lambda \ \big | \ \pi(t) \in K \ \mathcal{L}[0,T]\otimes Q-\text{a.e.} \right \}.
	\end{cases}
\end{equation*}
As the considered financial market contains only one risky asset, the set of allocation constraints $K$ is a subset of the extended real numbers $\Rex$. However, in this one-dimensional setting, any such closed convex set $K\subset \Rex$ with non-empty interior can be expressed as an interval of the form\footnote{As any $\pi \in \Lambda$ can only take finite values $\mathcal{L}[0,T]\otimes Q$-a.s., we do not need to distinguish between $(-\infty,\beta]$ and $[-\infty,\beta]$ or $[\alpha,\infty)$ and $[\alpha,\infty]$ for any $-\infty \leq \alpha, \beta \leq \infty$.}
\begin{align}\label{eq: 1-d convex constraints as interval}
	K = [\alpha,\beta], \quad \text{with} \quad -\infty \leq \alpha < \beta \leq \infty. 
\end{align}
We make substantial use of this in the subsequent analysis.

\subsection{Solution to HJB PDE}\label{subsec: Solution to HJB PDE}
We approach $(\mathbf{P})$ using classic stochastic optimal control methods. For this purpose, let us introduce the generalised primal portfolio optimisation problem $\mathbf{(P^{(t,v,z)})}$ as 
\begin{equation*}
	\mathbf{(P^{(t,v,z)})}
	\begin{cases}
		\Phi(t,v,z) &= \underset{\pi \in \Lambda_K(t)}{\sup} \mathbbm{E}\big[U(V^{v_0, \pi}(T))\  \big | \ V^{v_0, \pi}(t) = v, \ z(t) = z \big] \\
		\Lambda_K(t) &= \left \{ \left(\pi(s)\right)_{s\in [t,T]} \ \big | \ \pi \in \Lambda_K \right \}.
	\end{cases}
\end{equation*} 

Then, the HJB equation associated with $\mathbf{(P^{(t,v,z)})}$ is given by
\begin{alignat}{2}\label{eq: constr. HJB PDE Heston}
	0&= \sup_{\pi \in K}\Big( && G_t + v(r+\eta  z  \pi)G_v +  \kappa (\theta-z) G_{z} + \frac{1}{2}v^2z \pi^2 G_{vv} +\rho v z \pi  \sigma G_{vz}+\frac{1}{2}\sigma^2 z G_{z z} \Big)  \\
	G(T,v,z) &= U(v), \label{eq: terminal condition constr. HJB equation}
\end{alignat}
Any (sufficiently regular) solution $G$ to (\ref{eq: constr. HJB PDE Heston}) naturally yields a candidate optimal portfolio $\pi^{\ast}$ to $\mathbf{(P^{(t,v,z)})}$ (and therefore $\mathbf{(P)}$) as the maximising argument of (\ref{eq: constr. HJB PDE Heston}). In their work on dynamic stochastic-factor models, \citet{EKZ2023} characterised $G$ as an exponentially affine function, whose exponents satisfy certain Riccati ODEs. To recall their results, we need to introduce the support function $\delta_K:\R\rightarrow \Rex$ of $K$ as 
\begin{align*}
	\delta_K(x) = -\inf_{y \in K}\left(x \cdot y\right) \overset{K = [\alpha, \beta]}{=} - \alpha x \1_{\{x>0\}} - \beta x \1_{\{x<0\}}.
\end{align*}
\begin{lemma}[]\label{lem: dual ODEs Heston Model MK23}
	Let $A$ and $B$ be solutions to the system of ODEs
	\begin{align}
		A'(\tau) &= br +\kappa \theta  B(\tau), \label{eq:  ODE A constr. Heston} \\
		B'(\tau) &= -\kappa  B(\tau) + \frac{1}{2}\sigma^2  \left(B(\tau)\right)^2  + \frac{1}{2}\frac{b}{1-b}\inf_{\lambda\in \R}\left(2(1-b)\delta_K(\lambda) + \left(\eta + \lambda + \sigma \rho  B(\tau) \right)^2  \right), \label{eq:  ODE B constr. Heston} 
	\end{align}
	with initial condition $A(0)=B(0) = 0.$ Then, $G(t,v,z) = \frac{1}{b}v^b\exp \left(A(T-t) + B(T-t)z\right)$ is a solution to (\ref{eq: constr. HJB PDE Heston}).
\end{lemma}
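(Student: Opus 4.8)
The plan is to verify the lemma by substituting the exponentially affine ansatz $G(t,v,z)=\frac1b v^b\exp\!\big(A(\tau)+B(\tau)z\big)$, with $\tau := T-t$, into the HJB PDE (\ref{eq: constr. HJB PDE Heston}) and checking that it collapses, term by term, onto the Riccati system (\ref{eq:  ODE A constr. Heston})--(\ref{eq:  ODE B constr. Heston}). The only genuinely non-mechanical ingredient is a convex-duality identity that rewrites the supremum over $\pi$ appearing in the reduced PDE as the infimum over $\lambda$ appearing in (\ref{eq:  ODE B constr. Heston}); everything else is bookkeeping.

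Writing $E := \exp\!\big(A(\tau)+B(\tau)z\big)$, I would first record the partial derivatives $G_v = v^{b-1}E$, $G_{vv} = (b-1)v^{b-2}E$, $G_{vz}=v^{b-1}B(\tau)E$, $G_z = \frac1b v^b B(\tau)E$, $G_{zz}=\frac1b v^b B(\tau)^2 E$ and $G_t = \frac1b v^b E\big(-A'(\tau)-B'(\tau)z\big)$, substitute them into the bracket of (\ref{eq: constr. HJB PDE Heston}), and divide through by the strictly positive factor $v^b E$. Separating the $\pi$-independent terms and using $z>0$ (Feller's condition) to pull $z$ out of the supremum, the PDE becomes equivalent to the requirement that, for every $\tau$ and every $z>0$,
\begin{align*}
	0 = {}&\Big(-\tfrac1b A'(\tau) + r + \tfrac1b\kappa\theta B(\tau)\Big) \\
	&+ z\Big(-\tfrac1b B'(\tau) - \tfrac1b\kappa B(\tau) + \tfrac1{2b}\sigma^2 B(\tau)^2 + \sup_{\pi\in K}\big(\eta\pi + \tfrac12(b-1)\pi^2 + \rho\sigma B(\tau)\pi\big)\Big).
\end{align*}
Since both bracketed expressions depend on $\tau$ only and an affine function of $z$ that vanishes on $(0,\infty)$ is identically zero, this holds iff each bracket vanishes. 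The $z^0$ bracket vanishes precisely by (\ref{eq:  ODE A constr. Heston}); and, after multiplying by $b\neq 0$, the $z^1$ bracket vanishes iff
\begin{align*}
	B'(\tau) = -\kappa B(\tau) + \tfrac12\sigma^2 B(\tau)^2 + b\sup_{\pi\in K}\Big(\eta\pi + \tfrac12(b-1)\pi^2 + \rho\sigma B(\tau)\pi\Big).
\end{align*}

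It thus remains to prove the scalar identity: for $m\in\R$ and $c := 1-b > 0$,
\begin{align*}
	\sup_{\pi\in K}\Big(m\pi - \tfrac c2\pi^2\Big) = \frac1{2c}\inf_{\lambda\in\R}\Big(2c\,\delta_K(\lambda) + (m+\lambda)^2\Big),
\end{align*}
since applying it with $m = \eta+\sigma\rho B(\tau)$ turns the preceding display into exactly (\ref{eq:  ODE B constr. Heston}) — note that $b<1$ is what guarantees $c>0$, hence finiteness of the left-hand supremum even for unbounded $K$. To establish the identity I would use the elementary conjugacy relation $-\tfrac c2\pi^2 = \inf_{\lambda\in\R}\big(\tfrac1{2c}\lambda^2 - \lambda\pi\big)$ to write $\sup_{\pi\in K}\big(m\pi - \tfrac c2\pi^2\big) = \sup_{\pi\in K}\inf_{\lambda\in\R}\big(m\pi + \tfrac1{2c}\lambda^2 - \lambda\pi\big)$; the integrand is affine in $\pi$, strictly convex and coercive in $\lambda$, and $K$ is closed and convex, so after interchanging $\sup$ and $\inf$ the inner supremum evaluates to $\sup_{\pi\in K}(m-\lambda)\pi = \delta_K(\lambda-m)$, and the substitution $\lambda\mapsto\lambda+m$ produces the right-hand side. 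The terminal condition (\ref{eq: terminal condition constr. HJB equation}) is immediate from $A(0)=B(0)=0$, which gives $G(T,v,z)=\tfrac1b v^b=U(v)$.

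I expect the main obstacle to be justifying the $\sup$--$\inf$ interchange when the constraint interval $K=[\alpha,\beta]$ of (\ref{eq: 1-d convex constraints as interval}) is unbounded, as Sion's minimax theorem does not apply verbatim; I would handle this by applying the minimax theorem on the truncations $K\cap[-n,n]$ and passing to the limit $n\to\infty$, or, more pedestrianly, by a direct case distinction comparing the explicit piecewise-quadratic expressions for the two sides as functions of $m$ (the cases being $m/c$ to the left of, inside, or to the right of $K$). Alternatively, the entire statement can be obtained by specialising the general exponentially affine result of \citet{EKZ2023} to Heston's model once one verifies its structural hypotheses and reads off the coefficients of the Riccati system.
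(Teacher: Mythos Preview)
Your argument is correct. The computation of the partial derivatives, the separation into $z^0$-- and $z^1$--coefficients, and the duality identity
\[
\sup_{\pi\in K}\Big(m\pi-\tfrac{c}{2}\pi^2\Big)=\tfrac{1}{2c}\inf_{\lambda\in\R}\Big(2c\,\delta_K(\lambda)+(m+\lambda)^2\Big),\qquad c=1-b>0,
\]
all check out, and your suggested handling of the $\sup$--$\inf$ interchange for unbounded $K$ (truncation or a direct case split on the position of $m/c$ relative to $[\alpha,\beta]$) is adequate here, since the left-hand supremum is always finite and attained by strict concavity and coercivity of $\pi\mapsto m\pi-\tfrac{c}{2}\pi^2$.

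The paper, however, does not carry out any of this. Its proof is a one-line citation: the statement is obtained by specialising Lemma~4.5 of \citet{EKZ2023} to the Heston setting with $d=m=1$ and $\Sigma_1=1$. You anticipate this alternative in your final sentence. So your approach is a genuinely different, self-contained verification, whereas the paper simply invokes the general exponentially affine result. Your route has the advantage of being independent of the external reference and of making the role of the duality identity completely explicit; the paper's route is shorter and places the lemma in the broader framework of \citet{EKZ2023}, to which the subsequent Lemma~\ref{lem: optimal portfolio Heston MK23} also appeals.
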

Given the minimising argument and the solution $B$ from (\ref{eq:  ODE B constr. Heston}), a candidate optimal portfolio $\pi^{\ast}$ for $(\mathbf{P})$ is known. If we define the sequence of stopping times $\tau_{n,t}$ as $\tau_{n,t}=\min(T,\hat{\tau}_{n,t})$, with
\begin{align*}
	\hat{\tau}_{n,t} = \inf \Big \{t\leq u \leq T \ \Big | \ &\int_t^u \left(b \cdot  \sqrt{z(s)}\cdot \pi(s)\cdot G(s,V^{v_0, \pi^{\ast}}(s),z(s))\right)^2ds \geq n, \\ 
	&\int_t^u  \left(\sigma \sqrt{z(s)} \cdot B(T-s)\cdot G(s,V^{v_0, \pi^{\ast}}(s),z(s))\right)^2ds \geq n \Big \},
\end{align*}
then we can give a uniform integrability condition which guarantees that the candidate optimal portfolio $\pi^{\ast}$ is indeed optimal for $(\mathbf{P}).$
\begin{lemma}\label{lem: optimal portfolio Heston MK23}
	Consider $A, B$ and $G$ from Lemma \ref{lem: dual ODEs Heston Model MK23}. Moreover, define
	\begin{align}\label{eq: optimal lambda Heston}
		\la(B) = \underset{\lambda \in \R}{\text{argmin}}\left \{ 2(1-b)\delta_K(\lambda) + \left(\eta + \lambda + \sigma \rho B \right)^2  \right\}
	\end{align}
	\begin{align}\label{eq: optimal portfolio Heston implicit}
		\pi^{\ast}(t) &= \frac{1}{1-b}\Big(\eta + \la(B(T-t)) + \sigma \rho B(T-t)\Big).
	\end{align}
	If $\left(G\left(\tau_{n,t}, V^{v_0, \pi^{\ast}}(\tau_{n,t}), z(\tau_{n,t})\right) \right)_{n \in \N}$
	is uniformly integrable for every $t \in [0,T],$ then $\pi^{\ast}$ is optimal for $\mathbf{(P)}$ and $\Phi(t,z,v) = G(t,v,z)$ for all $(t,v,z)\in [0,T]\times (0,\infty)\times(0,\infty).$
\end{lemma}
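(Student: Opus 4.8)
The plan is to run a classical stochastic-control verification argument on top of the classical solution $G(t,v,z)=\frac1b v^b\exp\big(A(T-t)+B(T-t)z\big)$ of the HJB equation (\ref{eq: constr. HJB PDE Heston})--(\ref{eq: terminal condition constr. HJB equation}) provided by Lemma~\ref{lem: dual ODEs Heston Model MK23}, in three stages: (i) show $\pi^{\ast}$ from (\ref{eq: optimal portfolio Heston implicit}) is the pointwise maximiser in (\ref{eq: constr. HJB PDE Heston}) and that $\pi^{\ast}\in\Lambda_K$; (ii) show $\Phi(t,v,z)\le G(t,v,z)$ for every admissible $\pi$; (iii) show $\pi^{\ast}$ attains this bound, using the uniform-integrability hypothesis. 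For (i), substituting the partial derivatives of $G$ (notably $G_{vv}=\frac{b(b-1)}{v^2}G$, $G_{vz}=\frac bv BG$, $G_z=BG$) into (\ref{eq: constr. HJB PDE Heston}) shows that the coefficient of $\pi^2$ inside the supremum equals $\frac12 zb(b-1)G<0$ for both $0<b<1$ and $b<0$, so the inner problem is the maximisation of a strictly concave quadratic over $K=[\alpha,\beta]$; dualising the constraint $\pi\in K$ via a Lagrange multiplier and the support function $\delta_K$ reduces this supremum, up to a positive $G$-proportional prefactor, to $\inf_{\lambda\in\R}\big(2(1-b)\delta_K(\lambda)+(\eta+\lambda+\sigma\rho B)^2\big)$ — precisely the term appearing in (\ref{eq:  ODE B constr. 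Heston}) — with dual minimiser $\la(B(T-t))$ as in (\ref{eq: optimal lambda Heston}) and primal maximiser $\pi^{\ast}(t)$ as in (\ref{eq: optimal portfolio Heston implicit}); the KKT conditions of this one-dimensional convex program force $\pi^{\ast}(t)\in K$. Since $B$ is continuous on $[0,T]$ (it solves (\ref{eq:  ODE B constr. Heston}) there) and the strictly convex objective in (\ref{eq: optimal lambda Heston}) has a unique minimiser depending continuously on the parameter $B$, the map $t\mapsto\pi^{\ast}(t)$ is continuous, hence bounded; together with $\int_0^T z(s)\,ds<\infty$ $Q$-a.s.\ this gives $\int_0^T z(s)\pi^{\ast}(s)^2\,ds<\infty$ $Q$-a.s., i.e.\ $\pi^{\ast}\in\Lambda_K$.

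For (ii), fix $(t,v,z)$ and $\pi\in\Lambda_K(t)$, write $\mathcal L^{\pi}G$ for the differential operator inside the supremum in (\ref{eq: constr. HJB PDE Heston}), apply Itô's formula to $s\mapsto G(s,V^{v_0,\pi}(s),z(s))$ on $[t,T]$, and localise the two continuous stochastic integrals (against $W^z$ and $\hat W$) by a sequence of stopping times. The finite-variation part is $\int_t^{\cdot}\mathcal L^{\pi(s)}G(s,V^{v_0,\pi}(s),z(s))\,ds$; since $\pi(s)\in K$ $\mathcal{L}[0,T]\otimes Q$-a.e.\ and $G$ solves (\ref{eq: constr. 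HJB PDE Heston}), this drift is nonpositive, so $G(\cdot,V^{v_0,\pi}(\cdot),z(\cdot))$ is a continuous local supermartingale. For $0<b<1$ it is nonnegative, hence a genuine supermartingale, and applying the supermartingale inequality between $t$ and $T$ with the terminal condition $G(T,\cdot,\cdot)=U$ yields $\E\big[U(V^{v_0,\pi}(T))\mid\mathcal F_t\big]\le G(t,v,z)$; for $b<0$, where $G$ has the less convenient sign, the same bound follows by a more careful localisation/Fatou (or monotone-convergence) argument based on $\mathcal L^{\pi}G\le0$. Taking the supremum over $\pi$ gives $\Phi(t,v,z)\le G(t,v,z)$.

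For (iii), with $\pi=\pi^{\ast}$ the drift in (ii) vanishes identically by (i), so $G(\cdot,V^{v_0,\pi^{\ast}}(\cdot),z(\cdot))$ is a local martingale whose two continuous stochastic integrals (against $W^z$ and $\hat W$) have quadratic variations controlled, through $b\sqrt z\,\pi^{\ast}G$ and $\sigma\sqrt z\,B\,G$, by the two conditions defining $\tau_{n,t}$; hence $G(\cdot\wedge\tau_{n,t},V^{v_0,\pi^{\ast}}(\cdot\wedge\tau_{n,t}),z(\cdot\wedge\tau_{n,t}))$ is a true martingale and $\E\big[G(\tau_{n,t},V^{v_0,\pi^{\ast}}(\tau_{n,t}),z(\tau_{n,t}))\mid\mathcal F_t\big]=G(t,v,z)$ for every $n$. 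Path-continuity of $V^{v_0,\pi^{\ast}}$, $z$, $A$ and $B$ makes both quadratic-variation integrals in the definition of $\tau_{n,t}$ finite on $[t,T]$, so $\tau_{n,t}\uparrow T$ — indeed $\tau_{n,t}=T$ for $n$ large — $Q$-a.s., and therefore $G(\tau_{n,t},V^{v_0,\pi^{\ast}}(\tau_{n,t}),z(\tau_{n,t}))\to U(V^{v_0,\pi^{\ast}}(T))$ $Q$-a.s.; the assumed uniform integrability of this sequence then allows passing to the limit under the conditional expectation, giving $\E\big[U(V^{v_0,\pi^{\ast}}(T))\mid\mathcal F_t\big]=G(t,v,z)$. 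Combined with (ii) this yields $\Phi(t,v,z)=G(t,v,z)$ with the supremum attained at $\pi^{\ast}$, which proves the statement. The main obstacle is precisely the two limiting steps — the supermartingale upgrade in (ii) and, above all, the $n\to\infty$ passage in (iii): without uniform integrability the local-martingale identity only gives $\E[U(V^{v_0,\pi^{\ast}}(T))\mid\mathcal F_t]\le G(t,v,z)$, so this hypothesis is exactly what closes the duality gap and forces $\Phi=G$.
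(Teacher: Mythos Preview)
The paper does not prove this lemma from scratch: it simply records that $G_v=\tfrac{b}{v}G$ and $G_z=B(T-t)G$ and then invokes Theorem~3.12 and Lemma~4.5 of \citet{EKZ2023} in the specialisation $d=m=1$, $\Sigma_1=1$. Your proposal instead reconstructs the underlying verification argument, which is essentially what that cited theorem packages. Steps (i) and (iii) are sound and match the intended mechanism: the duality between the constrained supremum over $K$ and the $\delta_K$-penalised infimum is exactly what produces (\ref{eq: optimal lambda Heston})--(\ref{eq: optimal portfolio Heston implicit}), and the localisation by $\tau_{n,t}$ together with the assumed uniform integrability is precisely what upgrades the local-martingale identity to $\E[U(V^{v_0,\pi^{\ast}}(T))\mid\mathcal F_t]=G(t,v,z)$.

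The one place where your sketch is too optimistic is step~(ii) in the case $b<0$. There $G<0$, so the ``nonnegative local supermartingale $\Rightarrow$ supermartingale'' route you use for $0<b<1$ is unavailable, and your fallback --- ``a more careful localisation/Fatou (or monotone-convergence) argument'' --- does not deliver the inequality in the direction you need. From $\E[G(\tau_n,\ldots)]\le G(t,v,z)$ one gets $\E[-G(\tau_n,\ldots)]\ge -G(t,v,z)$ for the nonnegative sequence $-G(\tau_n,\ldots)$; Fatou then only yields $\E[-U(V^{v_0,\pi}(T))]\le \liminf_n \E[-G(\tau_n,\ldots)]$, which combined with the previous lower bound says nothing about whether $\E[-U(V^{v_0,\pi}(T))]\ge -G(t,v,z)$. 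Nor is there a monotone structure in $n$ for $G(\tau_n,\ldots)$ itself. The upper bound $\Phi\le G$ for negative-power utility is a genuine technical hurdle in verification proofs and is typically handled by exploiting the exponentially-affine form of $G$, by additional integrability restrictions on admissible $\pi$, or by duality methods --- not by Fatou alone. The paper sidesteps this by outsourcing the whole verification to \citet{EKZ2023}; if you want a self-contained proof you will need to supply that missing ingredient explicitly.
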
 
Lemma \ref{lem: dual ODEs Heston Model MK23} and Lemma \ref{lem: optimal portfolio Heston MK23} naturally lead to a three-step procedure for finding the optimal portfolio $\pi^{\ast}$ for $\mathbf{(P)}$:
\begin{itemize}
	\item[(i)] Determine the minimising argument $\la$ in (\ref{eq: optimal lambda Heston}).
	\item[(ii)] Determine the solution $B$ to ODE (\ref{eq:  ODE B constr. Heston})  and thereby the candidate optimal portfolio $\pi^{\ast}$ from (\ref{eq: optimal portfolio Heston implicit}).
	\item[(iii)] Verify that $\pi^{\ast}$ satisfies the uniform integrability condition from Lemma \ref{lem: optimal portfolio Heston MK23}.
\end{itemize}
In the following, we complete these steps consecutively and complete the results of \citet{EKZ2023} by providing a fully closed-form solution for the optimal allocation-constrained portfolio in Heston's stochastic volatility model via steps (i) and (ii) and formally verifying its optimality in step (iii). \\

As $K$ is an interval, as specified in (\ref{eq: 1-d convex constraints as interval}), the ODE (\ref{eq:  ODE B constr. Heston}) can be written as a composition of three Riccati ODEs - each with constant coefficients. 

\begin{lemma}\label{lem: ODE B for Box constraints}
	Define $B_{-} = \frac{(1-b)\alpha-\eta}{\sigma}$ and $B_{+} = \frac{(1-b)\beta-\eta}{\sigma}$. Then, the minimising argument $\la,$ as in (\ref{eq: optimal lambda Heston}), is given as
	\begin{align}\label{eq: optimal lambda Heston explicit}
		\la(B) =& \big[(1-b)\alpha - \left( \eta+\sigma \rho B\right) \big]\1_{\{\rho B < B_- \}}+\big[(1-b)\beta - \left( \eta+\sigma \rho B\right) \big]\1_{\{\rho B > B_+ \}}.
	\end{align}
	Moreover, $B(\tau)$ is a solution to $(\ref{eq:  ODE B constr. Heston})$ if and only if  $B(0)= 0$ and
	\begin{align}\label{eq: ODE zones}
			B'(\tau) = \quad &\Big(- \underbrace{\frac{1}{2}b\alpha\big((1-b)\alpha-2 \eta\big)}_{=:r_0^-} + \underbrace{\big(b\sigma \rho \alpha-\kappa\big)}_{=:r_1^-}B(\tau) + \frac{1}{2}\underbrace{\sigma^2}_{=:r_2^-}\big(B(\tau)\big)^2\Big)\1_{\{ \rho B(\tau) < B_{-} \}}  \nonumber \\
			\quad  +& \Big( -\underbrace{\frac{-b}{2(1-b)}\eta^2}_{=:r_0} + \underbrace{\big(\frac{b}{1-b}\eta \sigma \rho-\kappa\big)}_{=:r_1}B(\tau) + \frac{1}{2}\underbrace{\sigma^2\big(1+\frac{b}{1-b}\rho^2\big)}_{=:r_2}\big(B(\tau)\big)^2\Big)\1_{\{ B_{-} \leq \rho B(\tau)\leq B_{+}\}}  \nonumber  \\
			\quad  +&\Big(- \underbrace{ \frac{1}{2}b\beta\big((1-b)\beta-2 \eta\big)}_{=:r_0^+} + \underbrace{\big(b\sigma \rho \beta-\kappa\big)}_{=:r_1^+}B(\tau) + \frac{1}{2}\underbrace{\sigma^2}_{=:r_2^+}\big(B(\tau)\big)^2\Big)\1_{\{B_{+} < \rho B(\tau) \}}  \\
			= \quad &\Big(-r_0^- + r_1^- B(\tau) + \frac{1}{2}r_2^-+\big(B(\tau)\big)^2\Big)\1_{\{ \rho B(\tau) < B_{-} \}}  \nonumber  \\
			\quad + &\Big( -r_0 + r_1 B(\tau) + \frac{1}{2}r_2\big(B(\tau)\big)^2\Big)\1_{\{ B_{-} \leq \rho B(\tau)\leq B_{+}\}}  \nonumber \\
			\quad + &\Big(-r_0^+ +r_1^+ B(\tau) + \frac{1}{2}r_2^+\big(B(\tau)\big)^2 \Big)\1_{\{B_{+} < \rho B(\tau) \}}. \nonumber 
	\end{align}
\end{lemma}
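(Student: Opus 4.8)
The plan is to prove the two assertions in turn: first the closed form of the inner minimiser $\la$ in (\ref{eq: optimal lambda Heston}), and then, by substituting $\la$ back into (\ref{eq:  ODE B constr. Heston}), the piecewise--Riccati form (\ref{eq: ODE zones}). For the minimiser, fix $\tau$ and abbreviate $c:=\eta+\sigma\rho B(\tau)$, so the objective in (\ref{eq: optimal lambda Heston}) is $g(\lambda):=2(1-b)\delta_K(\lambda)+(\lambda+c)^2$. Since $1-b>0$, $\delta_K$ is convex (being a support function), and the quadratic term is strictly convex and coercive, $g$ is strictly convex and coercive on its effective domain and hence has a unique minimiser. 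Using the piecewise-linear formula $\delta_K(\lambda)=-\alpha\lambda\1_{\{\lambda>0\}}-\beta\lambda\1_{\{\lambda<0\}}$, I would minimise $g$ separately over $\{\lambda\ge 0\}$ and over $\{\lambda\le 0\}$. On $\{\lambda\ge 0\}$, $g(\lambda)=(\lambda+c)^2-2(1-b)\alpha\lambda$ is a parabola with unconstrained vertex $(1-b)\alpha-c$, which is strictly positive exactly when $\sigma\rho B(\tau)<(1-b)\alpha-\eta=\sigma B_-$, i.e. $\rho B(\tau)<B_-$; symmetrically, on $\{\lambda\le 0\}$ the vertex $(1-b)\beta-c$ is strictly negative exactly when $\rho B(\tau)>B_+$. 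Because $\alpha<\beta$ forces $B_-<B_+$, these two events are disjoint, and if neither occurs both restricted minima are attained at $\lambda=0$. Comparing the attained values — the gap between the value at $0$ and at an interior vertex equals the square of that vertex's distance from $0$ — shows that the global minimiser equals the interior vertex when it exists and $0$ otherwise, which is precisely (\ref{eq: optimal lambda Heston explicit}). The degenerate cases $\alpha=-\infty$ or $\beta=+\infty$ need no separate treatment: there $\delta_K\equiv+\infty$ on the corresponding half-line, the associated vertex can never be interior, $B_\mp=\mp\infty$, and the matching indicator in (\ref{eq: optimal lambda Heston explicit}) vanishes identically.

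For the ODE, I would substitute $\lambda=\la(B(\tau))$ into the bracketed infimum in (\ref{eq:  ODE B constr. Heston}) and evaluate it in each of the three regimes. On $\{\rho B(\tau)<B_-\}$ one has $\la(B(\tau))>0$ and $\eta+\la(B(\tau))+\sigma\rho B(\tau)=(1-b)\alpha$, so $\delta_K(\la(B(\tau)))=-\alpha\,\la(B(\tau))$ and the bracket collapses to an affine function of $B(\tau)$; multiplying by $\tfrac12\tfrac{b}{1-b}$ and adjoining the remaining terms $-\kappa B(\tau)+\tfrac12\sigma^2B(\tau)^2$ reproduces exactly $-r_0^-+r_1^-B(\tau)+\tfrac12 r_2^-B(\tau)^2$; the regime $\{\rho B(\tau)>B_+\}$ is handled identically with $\beta$ in place of $\alpha$. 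On $\{B_-\le\rho B(\tau)\le B_+\}$ the minimiser is $0$, the bracket equals $(\eta+\sigma\rho B(\tau))^2$, and expanding the square produces the $\rho^2$-contribution that turns the quadratic coefficient $\sigma^2$ into $r_2=\sigma^2(1+\tfrac{b}{1-b}\rho^2)$, together with $r_0$ and $r_1$. Since the three indicators partition $\R$ — with $\rho B\in\{B_-,B_+\}$ assigned to the middle regime, where the three expressions agree, the minimal value being continuous in $B$ — the right-hand sides of (\ref{eq: ODE zones}) and (\ref{eq:  ODE B constr. Heston}) coincide pointwise in $B$, and together with $B(0)=0$ this yields the claimed equivalence of solutions.

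I do not expect a conceptual obstacle; the work is essentially bookkeeping. The two points that require care are the case analysis of the inner minimiser — in particular keeping straight which vertex is interior in which regime and checking disjointness via $\alpha<\beta$ — and the term-by-term matching of the substituted bracket against the definitions of $r_0^{\pm},r_1^{\pm},r_2^{\pm}$ and $r_0,r_1,r_2$, where the $\sigma\rho B(\tau)$ coupling between the constraint boundary and $B(\tau)$ is the main source of potential sign errors.
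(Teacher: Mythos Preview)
Your proposal is correct and follows essentially the same approach as the paper: split the minimisation over $\lambda\ge 0$ and $\lambda\le 0$ using the piecewise-linear form of $\delta_K$, locate the vertex of each resulting parabola, decide when it lies in the relevant half-line via the thresholds $B_-$ and $B_+$, and then substitute the resulting $\la(B)$ into (\ref{eq:  ODE B constr. Heston}) to obtain the three Riccati regimes. The paper carries out the same two-half-line comparison (via first-order conditions rather than vertex formulas) and the same substitution, so there is no substantive difference in method.
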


	\begin{remark}\label{rem: monotonicty of B}
		By restricting the minimisation in ODE (\ref{eq:  ODE B constr. Heston}) from $\lambda \in \R$ to one of the three optimal values $\lambda \in \{ (1-b)\alpha - \left( \eta+\sigma \rho B\right), (1-b)\beta - \left( \eta+\sigma \rho B\right), 0 \}$ (cf.\ (\ref{eq: optimal lambda Heston explicit})), we may use (\ref{eq: ODE zones}) to write
		\begin{align*}
			&B'(\tau) = -\kappa  B(\tau) + \frac{1}{2}\sigma^2  \left(B(\tau)\right)^2  + \frac{1}{2}\frac{b}{1-b}\inf_{\lambda\in \R}\left(2(1-b)\delta_K(\lambda) + \left(\eta + \lambda + \sigma \rho  B(\tau) \right)^2  \right) \\
			&= \min \left(-r_0^- + r_1^- B(\tau) + \frac{1}{2}r_2^-\big(B(\tau)\big)^2,  -r_0 + r_1 B(\tau) + \frac{1}{2}r_2\big(B(\tau)\big)^2, -r_0^+ +r_1^+ B(\tau) + \frac{1}{2}r_2^+\big(B(\tau)\big)^2 \right)\\
			&=: f(B(\tau)).
		\end{align*}
		The coefficients $r_2^-,$ $r_2$ and $r_2^+$ are non-negative, and therefore $f$ is the minimum of three convex functions. As real convex functions are locally Lipschitz continuous and Lipschitz continuity is preserved when taking the minimum over a finite number of functions, $f$ is locally Lipschitz continuous too. Hence, by the existence and uniqueness theorem of Picard-Lindel\"{o}f, there exists a unique solution $B$ to (\ref{eq:  ODE B constr. Heston}) for small $\tau >0.$ Moreover, as $f$ does not depend on $\tau,$ the ODE for $B$ is autonomous and its solution $B$ is either constant (if $f(0)=0$) or strictly monotone in $\tau$ (if $f(0)\neq 0$). Analogous arguments can be used to conclude the (strict) monotonicity of $B_u(\tau)$ from Corollary \ref{cor: unconstrained Heston} in Section \ref{subsec: Comparison to Unconstrained Portfolio}.
	\end{remark}

\begin{remark}\label{rem: optimal constrained pi is capped on ODE solution}
	Note that if $B$ is a solution to (\ref{eq:  ODE B constr. Heston}), then Lemma \ref{lem: optimal portfolio Heston MK23} and Lemma \ref{lem: ODE B for Box constraints} imply
	\begin{align*}
		\pi^{\ast}(t) = \frac{\eta + \la(B(T-t)) + \sigma \rho B(T-t)}{1-b} = 
		\begin{cases}
			\alpha, &\quad \rho B(T-t) < B_{-} \\
			\frac{\eta + \sigma \rho B(T-t)}{1-b}, &\quad B_{-} \leq \rho B(T-t) \leq B_+ \\
			\beta,  &\quad B_+ < \rho B(T-t). \\
		\end{cases}
	\end{align*}
	Therefore, the zones $Z_{-}=(-\infty, B_{-})$, $Z_{0}=[B_{-}, B_{+}]$ and $Z_{+}=(B_{+},\infty)$ determine whether the allocation constraint $K=[\alpha,\beta]$ is enforced for the candidate optimal portfolio process $\pi^{\ast}$ from Lemma \ref{lem: optimal portfolio Heston MK23}. Moreover, we may define $\hat{\pi}^{\ast}(t):= \frac{1}{1-b}\left(\eta + \sigma \rho B(T-t)\right)$ and express $\pi^{\ast}$ as a capped version of $\hat{\pi}^{\ast}$, i.e.
	$$
	\pi^{\ast}(t)= \text{Cap}(\hat{\pi}^{\ast}(t),\alpha, \beta):= \begin{cases}
		\alpha, &\quad \hat{\pi}^{\ast}(t) < \alpha \\
		\hat{\pi}^{\ast}(t), & \quad \alpha \leq \hat{\pi}^{\ast}(t) \leq \beta \\
		\beta, & \quad \beta < \hat{\pi}^{\ast}(t).
	\end{cases}
	$$
\end{remark}

In a true constrained context, i.e.\ $K\neq \R$, we may either determine an approximation of $B$ by using  a suitable numerical ODE solver (such as an Euler method) to solve the ODE (\ref{eq: ODE zones}) or directly derive an explicit expression for $B$ by individually solving each of the three Riccati ODEs in (\ref{eq: ODE zones}) and merging the solutions at the transition points between the zones $Z_{-}$, $Z_0$ and $Z_{+}$. To ensure that such solutions exist and do not explode before time $T$, we need to make the following assumption on $\mathcal{M}_H$ and the constraints $K = [\alpha, \beta].$
\newpage 
\begin{assumption}\label{ass: condition box constraints}
	\textcolor{white}{1}
	\begin{itemize}
		\item[(i)] \underline{Existence of Solution}:
		\begin{align*}
			\max \left\{\begin{array}{lr}
				\frac{b}{1-b}\eta \Big( \frac{\kappa \rho}{\sigma} + \frac{\eta}{2} \Big), \\
				b\alpha \left(\eta - \frac{1}{2}\alpha +\frac{\kappa \rho}{\sigma} + \frac{1}{2}\alpha b(1-\rho^2)\right), \\
				b\beta \left(\eta - \frac{1}{2}\beta +\frac{\kappa \rho}{\sigma} + \frac{1}{2}\beta b(1-\rho^2)\right)
			\end{array}\right\} < \frac{\kappa^2}{2\sigma^2}
		\end{align*}
		\item[(ii)] \underline{No Blow-Up}: \vspace*{0.25cm}\\
		The coefficients of each of the three Riccati ODEs satisfy $t_{+}(B_0) > T$ (cf.\ Lemma \ref{lem: Riccati ODE Filipovic}, (ii)) for each initial value
		$$
		B_0 \in \left \{\left(\frac{B_{-}}{\rho}\right)\1_{\{\rho \neq 0\}},\left(\frac{B_{+}}{\rho}\right)\1_{\{\rho \neq 0\}},0 \right \}.
		$$
	\end{itemize}
\end{assumption}

Provided that Assumption \ref{ass: condition box constraints} holds, the coefficients
\begin{align}\label{eq: definition r3 for three ODEs}
	r_3^{-} = \sqrt{(r_1^-)^2+2r_0^-r_2^-}, \quad  r_3 = \sqrt{(r_1)^2+2r_0r_2}, \quad r_3^{+} = \sqrt{(r_1^+)^2+2r_0^+r_2^+}
\end{align}
are well-defined and the solutions to each of the Riccati ODEs (\ref{eq: ODE zones}) do not blow up before time $T$ when started at any of the transition points between the zones $Z_{-}$, $Z_{0}$ and $Z_{+}$.\footnote{Technically, one can formulate this assumption less restrictively by expressing \enquote{No Blow-Up} in terms of the time spent in each of the zones $Z_{-}$, $Z_{0}$ and $Z_{+}$. However, as this would significantly complicate the presentation without adding major additional insights, it is omitted here.} For this reason, we define the following auxiliary functions: 

\begin{itemize}
	\item Let $\hat{B}^+,$ $\hat{B},$ and $\hat{B}^-$ be the solution to Riccati ODE (\ref{eq: Riccati Filipovic ODE}) with initial value $0$ as well as coefficients $r_0^+,$ $r_1^+,$ $r_2^+$, $r_0,$ $r_1,$ $r_2$ and $r_0^-,$ $r_1^-,$ $r_2^-,$ respectively.
	\item If $\rho \neq 0$, let $\hat{B}_{+}^+,$ $\hat{B}_{-}^-$ be the solution to Riccati ODE (\ref{eq: Riccati Filipovic ODE}) with initial value $\frac{B_{+}}{\rho}$, $\frac{B_{-}}{\rho}$ and coefficients $r_0^+,$ $r_1^+,$ $r_2^+$, $r_0^-,$ $r_1^-,$ $r_2^-,$ respectively.
	\item If $\rho \neq 0$, let $\hat{B}_{+},$ $\hat{B}_{-}$ be the solution to Riccati ODE (\ref{eq: Riccati Filipovic ODE}) with initial value $\frac{B_{+}}{\rho}$, $\frac{B_{-}}{\rho}$, respectively, and coefficients $r_0,$ $r_1,$ $r_2$.
\end{itemize}
 Moreover, if $\rho \neq 0$, we define the  transition times\footnote{If $\rho = 0$ all of these transition times will be infinite.}
\begin{align*}
	\tau_1^{+} &= \inf \left \{ \tau \ | \ \hat{B}^+(\tau) = \frac{B^+}{\rho}\right \}, \quad \tau_2^{+} = \inf \left \{ \tau \ | \ \hat{B}_{+}(\tau) = \frac{B^-}{\rho}\right \}, 
	\quad \tau_1 = \inf \left \{ \tau \ | \ \hat{B}(\tau) \in \left \{ \frac{B^-}{\rho}, \frac{B^+}{\rho} \right \} \right \}\\
	\tau_1^{-} &= \inf \left \{ \tau \ | \ \hat{B}^-(\tau) = \frac{B^-}{\rho}\right \} \quad \text{and} \quad \tau_2^{-} = \inf \left \{ \tau \ | \ \hat{B}_{-}(\tau) = \frac{B^+}{\rho}\right \}.
\end{align*}

Note that each of the above functions and transition times, if finite, admit a closed-form expression, which can be obtained via Lemma \ref{lem: Riccati ODE Filipovic} and Corollary \ref{cor: transition times Filipovic} in the supplementary material. Having introduced these auxiliary functions and transition times, we can finally express a closed-form solution for $B$ in terms of these processes.

\begin{theorem}\label{thm: solution constrained B in 1-D}
	Let Assumption \ref{ass: condition box constraints} hold. Then, 
	\begin{align*}
		B(\tau) = 
		\begin{cases}
			\hat{B}^-(\tau)\1_{\{\tau \leq \tau^-_1\}} + \hat{B}_{-}(\tau-\tau^-_1)\1_{\{\tau^-_1 < \tau \leq \tau^-_1 + \tau^-_2\}} + 	\hat{B}_+^+(\tau-(\tau^-_1 + \tau^-_2))\1_{\{\tau^-_1 + \tau^-_2 < \tau\}} , & \ \text{if} \  0 \in Z_{-} \\
			\hat{B}(\tau)\1_{\{\tau \leq \tau_1\}} + \hat{B}_-^-(\tau-\tau_1) \1_{\{ \tau > \tau_1, \rho 	\hat{B}(\tau_1) = B_{-} \}} + \hat{B}_+^+(\tau-\tau_1) \1_{\{ \tau > \tau_1, \rho 	\hat{B}(\tau_1) = B_{+} \}} , & \ \text{if} \  0 \in Z_{0} \\
			\hat{B}^+(\tau)\1_{\{\tau \leq \tau^+_1\}} + \hat{B}_{+}(\tau-\tau^+_1)\1_{\{\tau^+_1 < \tau \leq \tau^+_1 + \tau^+_2\}} + 	\hat{B}_-^-(\tau-(\tau^+_1 + \tau^+_2))\1_{\{\tau^+_1 + \tau^+_2 < \tau \}} , & \ \text{if} \  0 \in Z_{+} \\
		\end{cases}	
	\end{align*}
	satisfies ODE (\ref{eq:  ODE B constr. Heston}) for $0\leq \tau \leq T.$\footnote{Using a similar separation with respect to the zones $Z_{-},$ $Z_0,$ and $Z_{+}$ and equation (\ref{eq: integrated Riccati Filipovic}), it is also possible to determine a closed-form expression for $A$ from Lemma \ref{lem: dual ODEs Heston Model MK23}.}
\end{theorem}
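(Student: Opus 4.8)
The plan is to verify directly that the piecewise function on the right-hand side of the displayed identity --- call it $\tilde B$ --- is a solution of (\ref{eq:  ODE B constr. Heston}) on $[0,T]$, and then to invoke the uniqueness already noted in Remark \ref{rem: monotonicty of B}. The first step is to recall from that remark that (\ref{eq:  ODE B constr. Heston}) is the autonomous scalar ODE $B' = f(B)$, $B(0)=0$, with $f = \min(q_-,q_0,q_+)$, where $q_-(x) = -r_0^- + r_1^- x + \frac12 r_2^- x^2$, $q_0(x) = -r_0 + r_1 x + \frac12 r_2 x^2$, $q_+(x) = -r_0^+ + r_1^+ x + \frac12 r_2^+ x^2$, and that by (\ref{eq: ODE zones}) $f$ equals $q_-$ when $\rho x \in Z_- = (-\infty, B_-)$, $q_0$ when $\rho x \in Z_0 = [B_-, B_+]$, and $q_+$ when $\rho x \in Z_+ = (B_+, \infty)$. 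Since $f$ is continuous (a pointwise minimum of polynomials), the adjacent quadratics must agree at the shared boundaries, i.e.\ $q_- = q_0$ when $\rho x = B_-$ and $q_0 = q_+$ when $\rho x = B_+$ (equivalently, the inner minimiser $\la$ in (\ref{eq: optimal lambda Heston explicit}) takes the value $0$ at these boundaries, as it does throughout $Z_0$); this is what will glue the pieces together in a $C^1$ fashion.

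Next I would dispose of the degenerate case $\rho = 0$, in which the indicators in (\ref{eq: ODE zones}) do not involve $B$: the active zone is then fixed once and for all by which of $Z_-,Z_0,Z_+$ contains $0$, all transition times are $+\infty$, and $\tilde B$ reduces to the single Riccati solution $\hat B^-$, $\hat B$, or $\hat B^+$ with initial value $0$, which exists on $[0,T]$ by Assumption \ref{ass: condition box constraints}(ii). For $\rho \neq 0$ I would carry out the case $0 \in Z_-$ in full and note that $0 \in Z_0$ and $0 \in Z_+$ follow by the same reasoning after relabelling --- for $0 \in Z_0$ with the extra feature that the (monotone) solution can exit $Z_0$ through either boundary, which is precisely what the two indicators $\1_{\{\rho\hat B(\tau_1)=B_-\}}$, $\1_{\{\rho\hat B(\tau_1)=B_+\}}$ in that line of the formula record.

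For $0 \in Z_-$: the first piece $\hat B^-$ solves the $Z_-$-Riccati ODE from $0$; being the solution of an autonomous scalar ODE it is monotone, and since $\rho\hat B^-(0) = 0 \in Z_-$ and $\{x:\rho x < B_-\}$ is a half-line with finite endpoint $B_-/\rho$, we have $\rho\hat B^-(\tau)\in Z_-$ for every $\tau$ before $\hat B^-$ first reaches $B_-/\rho$, i.e.\ before $\tau_1^-$, so $(\hat B^-)' = q_-(\hat B^-) = f(\hat B^-)$ on $[0,\tau_1^-)$; if $\tau_1^-\ge T$ we are done. Otherwise $\hat B^-(\tau_1^-)=B_-/\rho$ by continuity, monotonicity pins the sign of $q_-(B_-/\rho)=(\hat B^-)'(\tau_1^-)$ to the direction of motion, and the second piece $\hat B_-$, which solves the $Z_0$-Riccati ODE from $B_-/\rho$ with $(\hat B_-)'(0) = q_0(B_-/\rho) = q_-(B_-/\rho)$ (boundary matching), continues monotonically in that same direction into $Z_0$; a one-line check from $B_-<B_+$ and the sign of $\rho$ shows the thresholds $B_-/\rho$, $B_+/\rho$ are ordered consistently with the direction of motion, so $\hat B_-$ stays in $Z_0$ until it first reaches $B_+/\rho$ at $\tau_2^-$, giving $\tilde B' = q_0(\tilde B) = f(\tilde B)$ on $(\tau_1^-,\tau_1^-+\tau_2^-)$. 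Running the argument a third time puts the solution into $Z_+$ via $\hat B_+^+$ with no further return (by monotonicity), so $\tilde B' = q_+(\tilde B) = f(\tilde B)$ on the last interval. Continuity of $\tilde B$ at $\tau_1^-$ and $\tau_1^-+\tau_2^-$ is immediate from the pieces' initial values, the quadratic matching makes these joins $C^1$, and since each piece starts at one of $0$, $B_-/\rho$, $B_+/\rho$, Assumption \ref{ass: condition box constraints}(ii) guarantees it exists without blow-up on the needed subinterval of $[0,T]$; hence $\tilde B$ solves (\ref{eq:  ODE B constr. Heston}) on $[0,T]$ and equals $B$ by uniqueness.

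I expect the main obstacle to be the zone bookkeeping: showing that the monotone solution, once it leaves a zone, enters the \emph{adjacent} zone and never returns, so that exactly the three-phase pattern in the statement is realised (and, for $0 \in Z_0$, that only one of the two listed continuations occurs). This rests on the autonomy of the ODE --- which forces monotonicity of the solution and of each individual Riccati piece --- together with the ordering of the thresholds $B_-/\rho$, $B_+/\rho$ (this ordering flips with the sign of $\rho$, but the zones flip with it, so the hitting-time definitions stay valid) and with the continuity of $f$ at the zone boundaries, which is the only point where the coefficient algebra $r_i^{\pm}, r_i$ genuinely enters.
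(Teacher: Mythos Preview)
Your proposal is correct and follows the same approach as the paper: verify that the piecewise construction is continuous, satisfies the appropriate Riccati ODE within each zone, and has matching one-sided derivatives at the transition points because the right-hand side $f$ is continuous. The paper's own proof is extremely terse (three sentences asserting exactly these facts), whereas you supply the supporting details the paper leaves implicit --- the monotonicity argument that keeps each piece in its zone until the declared transition time, the boundary matching $q_- = q_0$ at $\rho x = B_-$ and $q_0 = q_+$ at $\rho x = B_+$, the degenerate case $\rho = 0$, and the invocation of Assumption~\ref{ass: condition box constraints}(ii) to rule out blow-up on each subinterval --- so your write-up is a strict elaboration rather than a different route.
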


\subsection{Verification Theorem}\label{subsec: Verification Theorem for Heston's Model}
Combining Remark \ref{rem: optimal constrained pi is capped on ODE solution} with Theorem \ref{thm: solution constrained B in 1-D} immediately yields a closed-form expression for the candidate optimal portfolio process $\pi^{\ast}$. It now just remains to prove a verification theorem which verifies that this candidate is indeed the optimal portfolio process corresponding to $\mathbf{(P)}$. This proof requires an additional assumption on the constraints $K = [\alpha, \beta],$ which ensures a certain boundedness of $\pi^{\ast}(t)$ for $t$ close to maturity $T$ as well as two auxiliary lemmas.
\begin{assumption}\label{ass: boundedness near maturity}
	\begin{align}\label{eq: boundedness assumption on constraints at t=T}
		\max \left \{ \frac{b\rho}{\kappa}\alpha, \  \frac{b\rho}{\kappa}\beta \right \} \leq \frac{\kappa}{\sigma^2}, 
	\end{align}
\end{assumption}

\begin{lemma}\label{lem: A1}
	Let Assumptions \ref{ass: condition box constraints} and \ref{ass: boundedness near maturity} hold and let $B$ be given as in Theorem \ref{thm: solution constrained B in 1-D}. Then, the following inequality holds for all $t\in [0,T]$
	\begin{align*}
		\frac{b\rho}{\sigma}\pi^{\ast}(t) + B(T-t) \leq \frac{\kappa}{\sigma^2}.
	\end{align*}
\end{lemma}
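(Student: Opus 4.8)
The plan is to reduce the assertion to three zone-wise affine bounds on $B$ and then to control $B$ inside each zone by combining the autonomous Riccati structure of Lemma~\ref{lem: ODE B for Box constraints} and Remark~\ref{rem: monotonicty of B} with Assumptions~\ref{ass: condition box constraints} and~\ref{ass: boundedness near maturity}. First I would set $\tau := T-t$ and substitute the explicit form of $\pi^{\ast}$ from Remark~\ref{rem: optimal constrained pi is capped on ODE solution}. Since $\pi^{\ast}(T-\tau)$ equals $\alpha$, $\tfrac{\eta+\sigma\rho B(\tau)}{1-b}$ or $\beta$ according to whether $\rho B(\tau)$ lies in $Z_-$, $Z_0$ or $Z_+$, the asserted bound $\tfrac{b\rho}{\sigma}\pi^{\ast}(t)+B(T-t)\le\tfrac{\kappa}{\sigma^2}$ becomes $B(\tau)\le\tfrac{\kappa}{\sigma^2}-\tfrac{b\rho}{\sigma}\alpha$ on $Z_-$, $B(\tau)\le\tfrac{\kappa}{\sigma^2}-\tfrac{b\rho}{\sigma}\beta$ on $Z_+$, and $\bigl(1+\tfrac{b\rho^2}{1-b}\bigr)B(\tau)\le\tfrac{\kappa}{\sigma^2}-\tfrac{b\rho\eta}{\sigma(1-b)}$ on $Z_0$, where $1+\tfrac{b\rho^2}{1-b}=\tfrac{1-b(1-\rho^2)}{1-b}>0$ because $b<1$. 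So in every zone the statement is a one-sided affine bound on $B(\tau)$.

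I would then invoke Remark~\ref{rem: monotonicty of B}: $B$ solves the autonomous ODE $B'=f(B)$ with $B(0)=0$, is therefore monotone on $[0,T]$ (increasing, constant or decreasing according to the sign of $f(0)$), and on each maximal sub-interval where its graph stays in one zone it solves the associated constant-coefficient Riccati ODE $B'=\tfrac12 r_2 B^2+r_1 B-r_0$ from (\ref{eq: ODE zones}). For such an ODE with $r_2\ge 0$, the right-hand side equals $\tfrac12 r_2(B-\phi_-)(B-\phi_+)$ with roots $\phi_\pm=\tfrac{-r_1\pm r_3}{r_2}$, $r_3^2=r_1^2+2r_0r_2$; a short computation shows $r_3^2\ge 0$ is precisely Assumption~\ref{ass: condition box constraints}(i) for the corresponding zone (so the square roots in (\ref{eq: definition r3 for three ODEs}) are well defined and the roots are real), and then a trajectory starting at or below $\phi_+$ never exceeds $\phi_+$, while one starting above it decreases toward it. Together with Lemma~\ref{lem: Riccati ODE Filipovic} and the No-Blow-Up clause Assumption~\ref{ass: condition box constraints}(ii) at the transition values $0$, $B_\pm/\rho$, this yields the a priori zone-wise bound on $B$ that the first step needs, with the larger root $\phi_+$ as the decisive quantity when $B$ is increasing and $B(0)=0$ as the decisive quantity when $B$ is decreasing.

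The core of the argument is a barrier/propagation step for $\Psi(\tau):=\tfrac{b\rho}{\sigma}\pi^{\ast}(T-\tau)+B(\tau)$, which is continuous and piecewise $C^1$. One checks (a) $\Psi(0)=\tfrac{b\rho}{\sigma}\,\text{Cap}\bigl(\tfrac{\eta}{1-b},\alpha,\beta\bigr)\le\tfrac{\kappa}{\sigma^2}$ --- this is where Assumption~\ref{ass: boundedness near maturity} enters, bounding $\Psi(0)$ through $\alpha$ and $\beta$ --- and (b) whenever $\Psi(\tau_0)=\tfrac{\kappa}{\sigma^2}$, the one-sided derivative of $\Psi$ at $\tau_0$ is $\le 0$. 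Part (b) reduces, inside a zone, to $\Psi'=c_{\mathrm{zone}}B'$ with $c_{\mathrm{zone}}>0$ (equal to $1$ on $Z_\pm$ and to $\tfrac{1-b(1-\rho^2)}{1-b}$ on $Z_0$), hence to checking that the active Riccati right-hand side is $\le 0$ at the value of $B$ forced by $\Psi(\tau_0)=\tfrac{\kappa}{\sigma^2}$ --- a concrete quadratic inequality relating $\phi_+$, $\tfrac{\kappa}{\sigma^2}$ and the cap values, which follows from Assumption~\ref{ass: condition box constraints}(i); at the finitely many transition times $\tau_1^\pm,\tau_1,\tau_2^\pm$ from Theorem~\ref{thm: solution constrained B in 1-D}, continuity of $B$ and of the cap (so $\Psi$ has no jumps) lets the inequality pass from one zone to the next. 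A standard viability argument then delivers $\Psi\le\tfrac{\kappa}{\sigma^2}$ on $[0,T]$.

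The step I expect to be the main obstacle is precisely this last verification: it splits into cases according to the signs of $b$ and $\rho$, to which of $Z_-,Z_0,Z_+$ contains $0$, and to whether the monotone trajectory of $B$ actually crosses a zone boundary before $T$; in each branch one must verify a concrete inequality involving a Riccati root, the level $\tfrac{\kappa}{\sigma^2}$ and the cap values, and it is only the joint force of Assumptions~\ref{ass: condition box constraints} and~\ref{ass: boundedness near maturity} that makes every branch hold. Monotonicity of $B$ and continuity at the transition times are what keep the case analysis finite and exclude a violation occurring strictly between the extremal configurations.
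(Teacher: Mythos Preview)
Your barrier approach for $\Psi(\tau)=\tfrac{b\rho}{\sigma}\pi^{\ast}(T-\tau)+B(\tau)$ is correct and will go through. The verification you flag as the main obstacle is in fact uniform across zones and across the signs of $b,\rho$: in each zone the condition $\Psi(\tau_0)=\kappa/\sigma^2$ forces $B(\tau_0)=-r_1^{\mathrm{zone}}/r_2^{\mathrm{zone}}$, the vertex of the active Riccati parabola, and there $B'=-(r_3^{\mathrm{zone}})^2/(2r_2^{\mathrm{zone}})$. Multiplying by your $c_{\mathrm{zone}}=r_2^{\mathrm{zone}}/\sigma^2>0$ gives $\Psi'(\tau_0)=-(r_3^{\mathrm{zone}})^2/(2\sigma^2)<0$ strictly, since Assumption~\ref{ass: condition box constraints}(i) is precisely the statement $(r_3^{\mathrm{zone}})^2>0$. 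The strict negativity makes the viability step immediate, with no sign-dependent case split needed.

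The paper's proof is organised differently and is shorter. It uses one global observation you stop just short of: since $\Psi$ is continuous in $B$ and your zone-wise slopes $c_{\mathrm{zone}}$ are all positive, $\Psi$ is a globally non-decreasing function of $B$. Combined with the monotonicity of $\tau\mapsto B(\tau)$ from Remark~\ref{rem: monotonicty of B}, this collapses $\sup_{t\in[0,T]}\Psi$ to the two endpoints $t=T$ and $t=0$. The endpoint $t=T$ is exactly Assumption~\ref{ass: boundedness near maturity} (because $B(0)=0$). For $t=0$ the paper bounds $B(T)$ above by its $T\to\infty$ limit, namely the attracting Riccati root $-(r_1+r_3)/r_2$ in whichever zone contains $\rho B(T)$, and a direct substitution yields $\Psi(T)\le\tfrac{\kappa}{\sigma^2}-\tfrac{r_3}{\sigma^2}\le\tfrac{\kappa}{\sigma^2}$. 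Both arguments rest on the same algebraic identity $r_3^2=r_1^2+2r_0r_2>0$, but the paper's monotonicity-in-$B$ reduction replaces your barrier machinery and anticipated case analysis by two direct endpoint checks.
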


\begin{lemma}\label{lem: A2}
	Let Assumptions \ref{ass: condition box constraints} and \ref{ass: boundedness near maturity} hold and let $B$ be given as in Theorem \ref{thm: solution constrained B in 1-D}. Then the following inequality holds for all $t\in [0,T]$
	\begin{align*}
		\frac{1}{2}\frac{b}{1-b}\eta^2-\frac{1}{2}\frac{b}{1-b}\left(\la\left(B(T-t)\right) + \sigma \rho B(T-t) \right)^2 - \frac{1}{2}b^2\rho^2\left(\pi^{\ast}(s)\right)^2 & \ \nonumber \\
		+ b \frac{\rho\kappa}{\sigma} \pi^{\ast}(t) + \frac{b}{1-b}\frac{\rho}{\sigma}\left[\left(\la\right)'\left(B(T-t)\right)+\sigma \rho\right] B'(T-t) \quad & < \frac{1}{2}\frac{\kappa^2}{\sigma^2}.
	\end{align*}
\end{lemma}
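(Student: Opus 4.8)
The plan is to write $\Psi(t)$ for the left-hand side of the asserted inequality (reading the $\pi^{\ast}(s)$ there as $\pi^{\ast}(t)$) and to verify $\Psi(t) < \tfrac{1}{2}\kappa^{2}/\sigma^{2}$ by splitting $[0,T]$ according to the regime in which $\rho B(T-t)$ lies, i.e.\ into the preimages of the three zones $Z_{-}=(-\infty,B_{-})$, $Z_{0}=[B_{-},B_{+}]$ and $Z_{+}=(B_{+},\infty)$ from Remark \ref{rem: optimal constrained pi is capped on ODE solution}. By Theorem \ref{thm: solution constrained B in 1-D} together with Remark \ref{rem: monotonicty of B}, $B$ passes through these zones in at most two transitions, so this is a partition of $[0,T]$ into finitely many subintervals, and it suffices to establish the bound on each zone and then separately at the finitely many transition times.

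On $Z_{-}$, equation (\ref{eq: optimal lambda Heston explicit}) and Remark \ref{rem: optimal constrained pi is capped on ODE solution} give $\la(B(T-t))=(1-b)\alpha-\eta-\sigma\rho B(T-t)$, hence $\la(B(T-t))+\sigma\rho B(T-t)=(1-b)\alpha-\eta$, $\bigl(\la\bigr)'(B(T-t))+\sigma\rho=0$, and $\pi^{\ast}(t)=\alpha$. Thus the $B'$-term in $\Psi(t)$ vanishes and, expanding $\bigl((1-b)\alpha-\eta\bigr)^{2}$ and collecting terms, $\Psi(t)$ simplifies to the constant
\[
 b\alpha\Bigl(\eta-\tfrac{1}{2}\alpha+\tfrac{\kappa\rho}{\sigma}+\tfrac{1}{2}\alpha b(1-\rho^{2})\Bigr),
\]
which is exactly the second entry in the maximum of Assumption \ref{ass: condition box constraints}(i) and hence $<\tfrac{1}{2}\kappa^{2}/\sigma^{2}$. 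On $Z_{+}$ the identical computation with $\beta$ in place of $\alpha$ and $\pi^{\ast}(t)=\beta$ yields the third entry of that maximum.

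On $Z_{0}$ we have $\la(B(T-t))=0$, $\bigl(\la\bigr)'(B(T-t))=0$, $\pi^{\ast}(t)=\bigl(\eta+\sigma\rho B(T-t)\bigr)/(1-b)$, and, by the middle line of (\ref{eq: ODE zones}), $B'(T-t)=-r_{0}+r_{1}B(T-t)+\tfrac{1}{2}r_{2}B(T-t)^{2}$. Substituting all of this into $\Psi(t)$ produces a quadratic polynomial in $B(T-t)$. The core of the argument --- routine but somewhat lengthy --- is to insert the explicit values $r_{0}=\tfrac{-b}{2(1-b)}\eta^{2}$, $r_{1}=\tfrac{b}{1-b}\eta\sigma\rho-\kappa$, $r_{2}=\sigma^{2}\bigl(1+\tfrac{b}{1-b}\rho^{2}\bigr)$ and to check that the coefficients of $B(T-t)^{1}$ and $B(T-t)^{2}$ cancel identically, so that $\Psi(t)$ collapses to the constant $\tfrac{b}{1-b}\eta\bigl(\tfrac{\kappa\rho}{\sigma}+\tfrac{\eta}{2}\bigr)$, the first entry in the maximum of Assumption \ref{ass: condition box constraints}(i), again $<\tfrac{1}{2}\kappa^{2}/\sigma^{2}$. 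In effect $\Psi$ is piecewise constant, taking on each of $Z_{-}$, $Z_{0}$, $Z_{+}$ one of the three numbers whose maximum Assumption \ref{ass: condition box constraints}(i) controls, which is where that assumption enters.

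The only genuine obstacle is this cancellation in the $Z_{0}$ case; everything else is bookkeeping. Finally, at the finitely many times $t$ with $\rho B(T-t)\in\{B_{-},B_{+}\}$ the map $\la$ has a kink, but its one-sided derivatives there equal $0$ or $-\sigma\rho$, so each admissible reading of $\bigl(\la\bigr)'(B(T-t))$ reproduces one of the three constants already shown to be strictly below $\tfrac{1}{2}\kappa^{2}/\sigma^{2}$; hence the strict inequality persists at these times as well, completing the proof.
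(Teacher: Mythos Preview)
Your proposal is correct and follows essentially the same approach as the paper: a three-case split according to the zone of $\rho B(T-t)$, in each of which the left-hand side reduces to one of the three constants appearing in Assumption \ref{ass: condition box constraints}(i). The paper organises the cases by the value of $\pi^{\ast}(t)$ rather than the zone, and on $Z_{0}$ it substitutes the ODE in the form $B'=-\kappa B+\tfrac{1}{2}\sigma^{2}B^{2}+\tfrac{1}{2}b(1-b)(\pi^{\ast})^{2}$ rather than via $r_{0},r_{1},r_{2}$, but the computation is the same; your separate treatment of the transition times, where $(\la)'$ is only one-sidedly defined, is a point the paper passes over silently.
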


\begin{theorem}[Verification Theorem in $\mathcal{M}_H$]\label{thm: strong verification in Heston market}
	\textcolor{white}{1}\\
	Consider the financial market $\mathcal{M}_H$, let Assumptions \ref{ass: condition box constraints} and \ref{ass: boundedness near maturity} hold and let $B$ be given as in Theorem \ref{thm: solution constrained B in 1-D}. Then,
	\begin{align}
		\pi^{\ast}(t) =
		\begin{cases}
			\alpha, &\quad \rho B(T-t) < B_{-} \\
			\frac{\eta + \sigma \rho B(T-t)}{1-b}, &\quad B_{-} \leq \rho B(T-t) \leq B_+ \\
			\beta,  &\quad B_+ < \rho B(T-t). \\
		\end{cases}
	\end{align}
	is optimal for $\mathbf{(P)}$.
\end{theorem}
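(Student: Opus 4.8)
The plan is to verify the uniform integrability hypothesis of Lemma \ref{lem: optimal portfolio Heston MK23} for the process $\pi^{\ast}$ produced by Theorem \ref{thm: solution constrained B in 1-D} through Remark \ref{rem: optimal constrained pi is capped on ODE solution}; once this hypothesis is checked, that lemma delivers both the optimality of $\pi^{\ast}$ for $\mathbf{(P)}$ and the identification $\Phi = G$. Two preliminary observations make the verification manageable. First, by Remark \ref{rem: optimal constrained pi is capped on ODE solution} and the continuity of $B$ on $[0,T]$, the candidate $\pi^{\ast}(t)=\mathrm{Cap}(\hat{\pi}^{\ast}(t),\alpha,\beta)$ is a \emph{bounded, deterministic} function of $t$, and so are $B(T-\cdot)$ and $A(T-\cdot)$. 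Second, since $\pi^{\ast}$ attains the supremum in the HJB equation (\ref{eq: constr. HJB PDE Heston}) and $G$ solves it, $\big(G(s,V^{v_0,\pi^{\ast}}(s),z(s))\big)_{s\in[t,T]}$ is a local martingale, and the stopping times $\tau_{n,t}$ are designed precisely so that each stopped process is a true $L^2$-martingale (they bound the two quadratic variations $\langle\,\cdot\,\rangle$ of the stochastic-integral pieces of $G$); hence $\E\big[G(\tau_{n,t},V^{v_0,\pi^{\ast}}(\tau_{n,t}),z(\tau_{n,t}))\big]=G(t,v,z)$ for every $n$, and since $\tau_{n,t}\uparrow T$ and $G(\tau_{n,t},\cdot,\cdot)\to U(V^{v_0,\pi^{\ast}}(T))$ almost surely, the only thing left to prove is that $\big(G(\tau_{n,t},V^{v_0,\pi^{\ast}}(\tau_{n,t}),z(\tau_{n,t}))\big)_{n\in\N}$ is uniformly integrable. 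I would obtain this from a uniform bound $\sup_{n}\E\big[|G(\tau_{n,t},V^{v_0,\pi^{\ast}}(\tau_{n,t}),z(\tau_{n,t}))|^{p}\big]<\infty$ for some $p>1$ (de la Vall\'{e}e Poussin).

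To get such a bound I would substitute the explicit wealth formula (\ref{eq: wealth process explicitly original market}) into $G(t,v,z)=\tfrac1b v^b e^{A(T-t)+B(T-t)z}$, so that along the optimal path $G$ equals $\tfrac1b v_0^{b}$ times the exponential of a bounded term plus $B(T-s)z(s)$ plus $b\!\int_0^s\!\big(\eta\pi^{\ast}_u-\tfrac12(\pi^{\ast}_u)^2\big)z_u\,du$ plus the stochastic integral $b\!\int_0^s\!\pi^{\ast}_u\sqrt{z_u}\,dW_u$. Writing $\exp(pb\!\int\!\pi^{\ast}\sqrt z\,dW)$ as a Dol\'{e}ans exponential times the deterministic-coefficient correction $\exp(\tfrac12 p^2b^2\!\int(\pi^{\ast})^2 z\,du)$, and peeling off that Dol\'{e}ans factor by H\"{o}lder's inequality (it is a nonnegative supermartingale, so has expectation at most $1$ at the bounded stopping time $\tau_{n,t}$; alternatively it induces a Girsanov change of measure under which $z$ is again a CIR-type process with mean-reversion speed shifted to $\widetilde\kappa(u)=\kappa-\sigma^2 B(T-u)-\sigma\rho b\,\pi^{\ast}(u)$), the task collapses to bounding, uniformly in $n$, an affine exponential moment of the CIR factor of the form $\E\big[\exp\big(\nu\,z(\tau_{n,t})+\int_0^{\tau_{n,t}}\mu(u)\,z(u)\,du\big)\big]$ with bounded deterministic coefficients $\nu$ (from the $B(T-s)z(s)$ term, possibly after an integration by parts that moves $B'(T-\cdot)$ into the running part) and $\mu(\cdot)$.

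This reduction is where Lemmas \ref{lem: A1} and \ref{lem: A2} and Assumptions \ref{ass: condition box constraints} and \ref{ass: boundedness near maturity} are consumed, and it is the heart of the argument. Lemma \ref{lem: A1} states exactly that $\tfrac{b\rho}{\sigma}\pi^{\ast}(u)+B(T-u)\le\tfrac{\kappa}{\sigma^2}$, i.e.\ that the shifted speed $\widetilde\kappa(u)$ is nonnegative and that the endpoint coefficient stays at or below the critical level $\kappa/\sigma^2$ for exponential moments of the terminal CIR value; Lemma \ref{lem: A2}, whose inequality is \emph{strict}, bounds the running coefficient $\mu(\cdot)$ strictly below the critical level $\kappa^2/(2\sigma^2)$ for exponential moments of $\int z$, which is precisely the slack needed to afford an integrability exponent $p$ strictly larger than $1$; and the ``No Blow-Up'' part of Assumption \ref{ass: condition box constraints} guarantees that the Riccati ODE governing the exponent of this affine moment does not explode on $[0,T]$. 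Feeding these bounds into the standard affine/Riccati-comparison machinery for the CIR process (the same lemmas from the supplementary material invoked to state Assumption \ref{ass: condition box constraints}) produces a finite bound that is uniform in $n$ because $\tau_{n,t}\le T$; this yields the uniform $L^{p}$-bound, hence uniform integrability, and Lemma \ref{lem: optimal portfolio Heston MK23} then concludes.

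I expect the main obstacle to be this last, analytic step: keeping careful track of the time-dependent coefficients $\nu$ and $\mu(\cdot)$ produced by the H\"{o}lder/Girsanov manipulation and matching them against the exact thresholds $\kappa/\sigma^2$ and $\kappa^2/(2\sigma^2)$ supplied by Lemmas \ref{lem: A1} and \ref{lem: A2}, distinguishing the signs of $b$ (for $b<0$ one works with $-G>0$), choosing an admissible exponent $p>1$, and invoking the correct non-explosion statement for the auxiliary CIR process on all of $[0,T]$. By comparison, the reduction itself, the local-martingale and optional-stopping bookkeeping, and the almost-sure convergence $\tau_{n,t}\uparrow T$ should be routine.
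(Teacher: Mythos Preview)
Your high-level plan coincides with the paper's: verify the uniform-integrability hypothesis of Lemma~\ref{lem: optimal portfolio Heston MK23} by establishing an $L^{q}$ bound for some $q>1$, using that $\pi^{\ast}$, $A$, $B$ are bounded deterministic and that along $\pi^{\ast}$ the process $G$ is a local martingale. Where you diverge is in the analytic manipulation that produces the running and endpoint coefficients to which Lemmas~\ref{lem: A1} and~\ref{lem: A2} are applied.

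The paper does \emph{not} use H\"{o}lder or Girsanov. After splitting $dW=\rho\,dW^{z}+\sqrt{1-\rho^{2}}\,d\hat W$, the crucial step (inherited from \citet{Kraft2005}) is It\^{o}'s product rule applied to $\pi^{\ast}(t)z(t)$, which converts the correlated stochastic integral $\int\pi^{\ast}\sqrt{z}\,dW^{z}$ into the pathwise term $\tfrac{1}{\sigma}\pi^{\ast}(t)z(t)$ plus $dt$-integrals of $z$; this is what introduces the derivative $(\lambda^{\ast})'B'$ that appears in Lemma~\ref{lem: A2}. Lemma~\ref{lem: A1} then bounds the coefficient of the endpoint term $z(t)$ by $\kappa/\sigma^{2}$, after which one substitutes the SDE for $z$ to replace $z(t)$ by $z_{0}+\int\kappa(\theta-z)\,du+\sigma\int\sqrt{z}\,dW^{z}$. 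Only now do two stochastic integrals remain, $D$ (driven by $\hat W$) and $M=\tfrac{q\kappa}{\sigma}\int\sqrt{z}\,dW^{z}$ (with a \emph{constant} integrand coefficient), and one arrives at $|G|^{q}\le e^{\mathrm{det}}\,\mathcal{E}(D)\,\mathcal{E}(M)\,\exp\!\big(q\int z\,\{\ast\}\big)$ with $\{\ast\}$ exactly the expression in Lemma~\ref{lem: A2} minus $\tfrac{\kappa^{2}}{2\sigma^{2}}$ plus $O(\epsilon)$. The strict inequality in Lemma~\ref{lem: A2} makes $\{\ast\}<0$ for small $\epsilon$, so the exponential factor is $\le 1$, and the product $\mathcal{E}(D)\mathcal{E}(M)$ of independent Dol\'{e}ans exponentials is a supermartingale; Doob's optional sampling finishes.

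Your route (integration by parts on $B(T-\cdot)z(\cdot)$, then H\"{o}lder or Girsanov with the ``natural'' Dol\'{e}ans density having integrand $p(\sigma B+b\rho\pi^{\ast})\sqrt{z}$) is the opposite move: it eliminates the $z(t)$ term rather than the $\int\pi^{\ast}\sqrt{z}\,dW^{z}$ term. If you carry it out, the residual running coefficient at $p=1$ is identically \emph{zero} (as befits a local martingale), not strictly negative, so there is no slack to push to $p>1$; and under the Girsanov shift the CIR process acquires a time-dependent mean-reversion $\widetilde{\kappa}(u)$ for which Lemma~\ref{lem: A1} only gives $\widetilde{\kappa}\ge 0$ (non-strict). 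Lemmas~\ref{lem: A1} and~\ref{lem: A2} are therefore not the ``exact thresholds'' for your decomposition---their specific forms, including the $(\lambda^{\ast})'B'$ and $b\tfrac{\rho\kappa}{\sigma}\pi^{\ast}$ terms, are artefacts of the Kraft product-rule step and the subsequent replacement of $z(t)$ via its SDE, neither of which you perform. Your approach can be made to work, but it requires a separate Riccati non-explosion argument with time-dependent coefficients rather than a direct appeal to these two lemmas.
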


\subsection{Comparison to Unconstrained Portfolio}\label{subsec: Comparison to Unconstrained Portfolio}
Unsurprisingly, we can immediately recover the solution to the unconstrained optimisation problem, as discussed in \citet{Kraft2005}, from Lemma \ref{lem: optimal portfolio Heston MK23} and Lemma \ref{lem: ODE B for Box constraints}.

\begin{corollary}\label{cor: unconstrained Heston}[Closed-form Unconstrained Optimal Portfolio as in \citet{Kraft2005}]
	\textcolor{white}{1}\\
	Let $K= \R$ (i.e. $\alpha = -\infty,$ $\beta = \infty)$ and $B_u:[0,T]\rightarrow \R$ with $B_u(0)=0$ satisfy
	\begin{align}\label{eq: Unconstr ODE B}
		B_u'(\tau) = -r_0 + r_1 B_u(\tau) + \frac{1}{2}r_2B_u(\tau)^2 \qquad \forall \tau \in [0,T].
	\end{align}
	Then, $\la(B) = 0$ $\forall B\in \R$ and the candidate optimal portfolio $\pi^{\ast}$ is given by
	$$
	\pi_{u}(t):= \pi^{\ast}(t) = \frac{1}{1-b}\left(\eta +  \sigma \rho  B_u(T-t)\right).
	$$
\end{corollary}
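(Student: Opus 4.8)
The statement is a direct specialisation of Lemma \ref{lem: optimal portfolio Heston MK23} and Lemma \ref{lem: ODE B for Box constraints} to the degenerate constraint set $K=\R$, so the plan is simply to carry out that specialisation. The first step is to show that $\la(B)\equiv 0$. For $K=\R$ the support function $\delta_K(x)=-\inf_{y\in\R}(xy)$ equals $0$ at $x=0$ and $+\infty$ otherwise (it is the convex indicator of $\{0\}$); hence in (\ref{eq: optimal lambda Heston}) the summand $2(1-b)\delta_K(\lambda)$ forces $\lambda=0$, and since $1-b>0$ the remaining objective $(\eta+\lambda+\sigma\rho B)^2$ is minimised there, giving $\la(B)=0$ for every $B\in\R$. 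Equivalently, one may let $\alpha\to-\infty$ and $\beta\to\infty$ in the explicit formula (\ref{eq: optimal lambda Heston explicit}): using $b<1$ one gets $B_-=\tfrac{(1-b)\alpha-\eta}{\sigma}\to-\infty$ and $B_+\to+\infty$, so both indicators $\1_{\{\rho B<B_-\}}$ and $\1_{\{\rho B>B_+\}}$ vanish and again $\la(B)\equiv 0$.

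Next I would identify the ODE. Substituting $\la\equiv 0$ into (\ref{eq:  ODE B constr. Heston}) the inner infimum collapses to $(\eta+\sigma\rho B(\tau))^2$, so that
\[
B'(\tau) = -\kappa B(\tau) + \tfrac12\sigma^2 B(\tau)^2 + \tfrac12\tfrac{b}{1-b}\bigl(\eta+\sigma\rho B(\tau)\bigr)^2 .
\]
Expanding the square and collecting powers of $B(\tau)$ reproduces exactly the coefficients $r_0,r_1,r_2$ of the middle regime of (\ref{eq: ODE zones}), i.e. this is ODE (\ref{eq: Unconstr ODE B}); together with the shared initial condition $B(0)=0$ and the Picard--Lindel\"of uniqueness already invoked in Remark \ref{rem: monotonicty of B}, this forces $B=B_u$. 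Consistently, $K=\R$ is precisely the branch $0\in Z_0$ of Theorem \ref{thm: solution constrained B in 1-D} with $Z_0=\R$ and $\tau_1=\infty$, so that $B=\hat B$ there. Finally, inserting $\la(B)=0$ and $B=B_u$ into (\ref{eq: optimal portfolio Heston implicit}) yields $\pi^{\ast}(t)=\tfrac{1}{1-b}\bigl(\eta+\sigma\rho B_u(T-t)\bigr)=:\pi_u(t)$.

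The only point that calls for a little care in an otherwise routine corollary is the treatment of the limiting values $\alpha=-\infty$, $\beta=\infty$ in the box-constraint formulas of Lemma \ref{lem: ODE B for Box constraints}; I would sidestep this by arguing directly from the support-function definition of $\delta_K$ rather than by passing to a limit in formulas derived under $\alpha,\beta\in\R$. Note also that neither Assumption \ref{ass: condition box constraints} nor the uniform integrability condition of Lemma \ref{lem: optimal portfolio Heston MK23} is needed here, since the corollary only recovers the \emph{candidate} optimal portfolio, its optimality in the unconstrained case being the content of \citet{Kraft2005}.
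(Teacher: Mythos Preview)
Your proposal is correct and follows essentially the same route as the paper: the paper also observes that $\delta_K(\lambda)=\infty$ for $\lambda\neq 0$ when $K=\R$, forcing $\la(B)=0$, then simplifies ODE (\ref{eq:  ODE B constr. Heston}) to the Riccati ODE with coefficients $r_0,r_1,r_2$, and reads off $\pi_u$ from Lemma \ref{lem: optimal portfolio Heston MK23}. Your additional remarks (the alternative limiting argument in $\alpha,\beta$, the Picard--Lindel\"of uniqueness, and the observation that no verification assumption is needed for a mere candidate) are correct but go slightly beyond what the paper's own proof records.
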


\begin{remark}\label{rem: closed-form unconstrained portfolio}
	If the market parameters satisfy (cf.\ Assumption \ref{ass: condition box constraints})
	\begin{align}\label{eq: inequality ensuring solution to Riccati ODE}
		\frac{b}{1-b}\eta \Big( \frac{\kappa \rho}{\sigma} + \frac{\eta}{2} \Big) <  \frac{\kappa^2}{2\sigma^2},
	\end{align}
	then
	\begin{align}\label{eq: unconstr. optimal B}
		B_u(\tau) = \frac{2r_0(e^{ r_3\tau}-1)}{(r_1- r_3)(e^{ r_3\tau}-1)-2 r_3}
	\end{align}
	and the optimality of $\pi_u$ for the unconstrained portfolio optimisation problem can be verified formally (see e.g. Theorem 5.3 in \citet{Kraft2005}).\footnote{Equation (\ref{eq: inequality ensuring solution to Riccati ODE}) corresponds to part (i) of Assumption \ref{ass: condition box constraints}. In the setting of \cite{Kraft2005}, part (ii) of Assumption \ref{ass: condition box constraints} is also implied by (\ref{eq: inequality ensuring solution to Riccati ODE}) and so does not have to be mentioned explicitly.}
\end{remark}

On an abstract level, when adding (allocation) constraints $K=[\alpha, \beta]$ to a portfolio optimisation problem, the optimal constrained portfolio $\pi^{\ast}$ for $(\mathbf{P})$ will be given by a projection $\mathcal{P}_K: \Lambda \rightarrow \Lambda_K$ which maps the optimal unconstrained portfolio $\pi_u$ onto $\Lambda_K,$ i.e.
$$
\pi^{\ast} = \pi_u + \left(\pi^{\ast}-\pi_u\right) =: \mathcal{P}_K\left(\pi_u\right).
$$
In a Black-Scholes financial market $\mathcal{M}_{BS}$ with constant market coefficients (i.e. $\mathcal{M}_H$ with $\sigma=\kappa= \rho = 0.$), the optimal unconstrained portfolio is a constant-mix strategy $\pi_u(t):= \pi_M = \frac{1}{1-b}\eta,$ the so-called \enquote{Merton portfolio}. Setting $\sigma = \rho = 0$ and $B\equiv 0$ in Remark \ref{rem: optimal constrained pi is capped on ODE solution}, one can easily see that the projection $\mathcal{P}=\mathcal{P}^{BS}$ in the Black-Scholes market simply caps off $\pi_M$ at the boundaries if $\pi_M \notin K = [\alpha, \beta],$ i.e.
$$
\mathcal{P}^{BS}_K(\pi_M) =  \text{Cap}\left(\pi_M,\alpha, \beta\right)= \begin{cases}
	\alpha, &\quad \pi_M < \alpha \\
	\pi_M, &\quad \alpha \leq \pi_M \leq \beta \\
	\beta, &\quad \beta < \pi_M.
\end{cases}
$$
Given a solution $B$ to (\ref{eq:  ODE B constr. Heston}) and considering Remark \ref{rem: optimal constrained pi is capped on ODE solution}, it initially appears that the optimal constrained portfolio $\pi^{\ast}$ in $\mathcal{M}_H$ can be obtained from the same projection. However, if $K \neq \R,$ then $B_u$ as in Corollary \ref{cor: unconstrained Heston} and $B$ as in Theorem \ref{thm: solution constrained B in 1-D} are solutions to possibly different ODEs. In particular, this implies that the portfolios $\pi_u$ and $\hat{\pi}^{\ast}$ may not be identical, in which case the projection $\mathcal{P}^H_K$ for the Heston market does not necessarily coincide with the projection $\mathcal{P}^{BS}_K$ for the Black-Scholes market either. In other words, in a financial market with Heston stochastic volatility we in general have
\begin{align*}
	\pi^{\ast} = \mathcal{P}^H_K\left(\pi_u\right) = \text{Cap}\big(\pi_u + \underbrace{\left(\hat{\pi}^{\ast}-\pi_u\right)}_{\neq 0},\alpha,\beta\big) \neq \text{Cap}\big(\pi_u,\alpha,\beta\big) = \mathcal{P}^{BS}_K(\pi_u).
\end{align*}
In the following, we render this observation more precise by providing both conditions under which $\mathcal{P}^H_K = \mathcal{P}^{BS}_K$ and conditions under which $\mathcal{P}^H_K \neq \mathcal{P}^{BS}_K.$ The former case is true, whenever either $\rho = 0$ or $\pi_M \in K.$
\begin{lemma}\label{lem: constrained portfolio equal capped portfolio}
	Let $\pi^{\ast}$ be as in Lemma \ref{lem: optimal portfolio Heston MK23}, $\hat{\pi}^{\ast}$ be as in Remark \ref{rem: optimal constrained pi is capped on ODE solution} and $\pi_u$ as in Corollary \ref{cor: unconstrained Heston}. If either
	$$
	\rho = 0 \qquad \text{or} \qquad \pi_M \in K,
	$$
	then
	$$
	\pi^{\ast}=\mathcal{P}^H_K\left(\pi_u\right) = \text{Cap}\left(\pi_u,\alpha, \beta\right) =  \mathcal{P}^{BS}_K\left(\pi_u\right).
	$$
\end{lemma}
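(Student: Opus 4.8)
The plan is to reduce the whole statement to the single pointwise identity $\pi^{\ast}(t)=\text{Cap}(\pi_u(t),\alpha,\beta)$ for all $t\in[0,T]$. Indeed, $\mathcal{P}^H_K(\pi_u)=\pi_u+(\pi^{\ast}-\pi_u)=\pi^{\ast}$ by definition, while the display preceding the lemma identifies $\mathcal{P}^{BS}_K(\pi_u)$ with $\text{Cap}(\pi_u,\alpha,\beta)$, so both asserted equalities are equivalent to this identity. By Remark~\ref{rem: optimal constrained pi is capped on ODE solution} we already know $\pi^{\ast}(t)=\text{Cap}(\hat{\pi}^{\ast}(t),\alpha,\beta)$ with $\hat{\pi}^{\ast}(t)=\frac{1}{1-b}\big(\eta+\sigma\rho B(T-t)\big)$ and $\pi_u(t)=\frac{1}{1-b}\big(\eta+\sigma\rho B_u(T-t)\big)$, so it suffices to compare the two capped processes, i.e.\ essentially to compare $B$ from Theorem~\ref{thm: solution constrained B in 1-D} with $B_u$ from Corollary~\ref{cor: unconstrained Heston}.

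If $\rho=0$, then $\hat{\pi}^{\ast}\equiv\pi_u\equiv\pi_M=\tfrac{\eta}{1-b}$ irrespective of $B$ and $B_u$, and the identity is immediate. It remains to treat $\pi_M\in K$; note first that $\pi_M\in K=[\alpha,\beta]$ is equivalent to $B_{-}\le 0\le B_{+}$, i.e.\ to $0\in Z_0$. Write $g_0(x):=-r_0+r_1x+\tfrac{1}{2}r_2x^2$ for the right-hand side of ODE~(\ref{eq: Unconstr ODE B}), and recall from Remark~\ref{rem: monotonicty of B} that $B$ solves the autonomous ODE $B'=f(B)$ with $f$ locally Lipschitz, that $f$ coincides with $g_0$ on $\{x:\rho x\in Z_0\}$, and that both $B$ and $B_u$ are strictly monotone (or constant). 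Let $\sigma_0:=\inf\{\tau\ge 0:\rho B_u(\tau)\notin Z_0\}\in[0,\infty]$. On $[0,\sigma_0]$ the trajectory of $\rho B_u$ stays in the closed set $Z_0$, so there $B_u$ solves $B_u'=g_0(B_u)=f(B_u)$; since also $B(0)=B_u(0)=0$, Picard--Lindel\"{o}f uniqueness gives $B\equiv B_u$ on $[0,\sigma_0]$. Hence for every $t$ with $T-t\le\sigma_0$ we get $\rho B(T-t)=\rho B_u(T-t)\in Z_0$, so $\hat{\pi}^{\ast}(t)=\pi_u(t)\in K$ and $\text{Cap}(\hat{\pi}^{\ast}(t),\alpha,\beta)=\text{Cap}(\pi_u(t),\alpha,\beta)=\pi_u(t)$.

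If $\sigma_0\ge T$ we are done; otherwise fix $\tau\in(\sigma_0,T]$. By strict monotonicity of $B_u$ and $0\in Z_0$, the trajectory $\rho B_u(\cdot)$ leaves $Z_0$ through exactly one endpoint, say $B_{+}$ (the case $B_{-}$ is symmetric, with $\alpha$ in place of $\beta$). Then $\rho B_u(\tau)>B_{+}$, hence $\pi_u(t)>\beta$ and $\text{Cap}(\pi_u(t),\alpha,\beta)=\beta$ for $t<T-\sigma_0$, so it suffices to show $\rho B(\tau)>B_{+}$, for then $\pi^{\ast}(t)=\beta$ by Remark~\ref{rem: optimal constrained pi is capped on ODE solution}. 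Since $\rho B(\sigma_0)=\rho B_u(\sigma_0)=B_{+}$ and $f$ is continuous across the zone boundary, $B'(\sigma_0)=f(B(\sigma_0))=g_0(B_u(\sigma_0))=B_u'(\sigma_0)\neq 0$ carries the sign that pushes $\rho B$ strictly beyond $B_{+}$ at $\sigma_0$; on the zone $Z_{+}$ the ODE $B'=g_{+}(B)$ (the corresponding branch in~(\ref{eq: ODE zones})) is again autonomous, so $B$ is strictly monotone there, and having entered $Z_{+}$ moving away from the boundary point $B_{+}/\rho$ it cannot return to it before $T$, using that Assumption~\ref{ass: condition box constraints}(ii) precludes a blow-up on $[0,T]$. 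Thus $\rho B(\tau)>B_{+}$ on $(\sigma_0,T]$, and combining the three regimes yields $\pi^{\ast}=\text{Cap}(\pi_u,\alpha,\beta)=\mathcal{P}^{BS}_K(\pi_u)=\mathcal{P}^H_K(\pi_u)$.

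I expect the last paragraph to be the crux: the delicate point is that once the unconstrained Riccati solution leaves the middle zone $Z_0$, the constrained one leaves through the same endpoint and never comes back, so that $\hat{\pi}^{\ast}$ and $\pi_u$ are always capped to the same boundary value. This rests on two structural facts — the continuity of $f$ at the zone interfaces (so the constrained trajectory crosses into the adjacent zone in the direction dictated by $B_u$) and the strict monotonicity of the Riccati flow inside each individual zone — while the remaining bookkeeping (the two exit sides, and the degenerate sub-cases $\pi_M\in\partial K$ where $\sigma_0=0$) is routine once these are in hand.
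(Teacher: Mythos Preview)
Your proof is correct and follows essentially the same route as the paper: handle $\rho=0$ trivially, then for $\pi_M\in K$ observe that $0\in Z_0$ so that $B$ and $B_u$ satisfy the same ODE and hence coincide up to the first exit time $\sigma_0$ from $Z_0$, and finally use monotonicity to conclude that both $\hat{\pi}^{\ast}$ and $\pi_u$ lie outside $(\alpha,\beta)$ on the same side for $t<T-\sigma_0$. The paper compresses your entire last paragraph into the single sentence ``Due to the monotonicity of $\pi_u$ and $\hat{\pi}^{\ast}$, we have $\pi_u(t)\notin(\alpha,\beta)$ if and only if $\hat{\pi}(t)\notin(\alpha,\beta)$''; your version is more explicit about why the constrained trajectory crosses into the adjacent zone with the correct sign (continuity of $f$ at the interface) and stays there (autonomy), but the underlying idea is identical. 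One small simplification: since $B$ solves the single autonomous ODE $B'=f(B)$ on all of $[0,T]$ (Remark~\ref{rem: monotonicty of B}), global monotonicity of $B$ already prevents any return to $B_{+}/\rho$, so you need not argue monotonicity separately within $Z_{+}$ or invoke Assumption~\ref{ass: condition box constraints}(ii).
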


	If $\rho=0$, the stochasticity of the volatility is completely unhedgeable in $\mathcal{M}_H$. As a consequence, the optimal unconstrained portfolio processes coincide in $\mathcal{M}_{BS}$ and in $\mathcal{M}_H.$ Thus, the projections $\mathcal{P}^{BS}_K$ and $\mathcal{P}^H_K$ are identical if $\rho = 0$ too. In contrast, if $\rho \neq 0,$ then the projections can only be different if the underlying ODE solutions $B_u$ and $B$ are different, specifically when $\pi_u$ and $\hat{\pi}^{\ast}$ begin taking values inside $K.$ This is the case if and only if $\pi_u$ and $\hat{\pi}^{\ast}$ begin taking values inside $K$ at different time points. This observation  leads to an equivalent characterisation of when the projections $\mathcal{P}^{BS}_K$ and $\mathcal{P}^H_K$ are different.
	
	\begin{lemma}\label{lem: equivalence gap}
		Let $\pi^{\ast}$ be as in Lemma \ref{lem: optimal portfolio Heston MK23}, $\hat{\pi}^{\ast}$ be as in Remark \ref{rem: optimal constrained pi is capped on ODE solution} and $\pi_u$ as in Corollary \ref{cor: unconstrained Heston}. The following statements are equivalent:
		\begin{itemize}
			\item[(i)] 
			$$
			\pi^{\ast} = \mathcal{P}^H_K\left(\pi_u,\alpha, \beta\right) \neq \mathcal{P}^{BS}_K\left(\pi_u,\alpha, \beta\right) = \text{Cap}(\pi_u,\alpha,\beta)
			$$
			\item[(ii)]
			$$
			 \pi_M \notin [\alpha, \beta] \quad \text{and} \quad \exists t \in (0,T): \left | \Big \{\hat{\pi}^{\ast}(t), \pi_u(t) \Big \} \cap  (\alpha,\beta)\right | = 1
			$$
		\end{itemize}
	\end{lemma}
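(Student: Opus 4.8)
The plan is to exploit the explicit structure established in Remark \ref{rem: optimal constrained pi is capped on ODE solution} and Remark \ref{rem: monotonicty of B}, namely that $\pi^{\ast}(t) = \mathrm{Cap}(\hat{\pi}^{\ast}(t),\alpha,\beta)$ with $\hat{\pi}^{\ast}(t) = \tfrac{1}{1-b}(\eta + \sigma\rho B(T-t))$, while $\mathcal{P}^{BS}_K(\pi_u) = \mathrm{Cap}(\pi_u(t),\alpha,\beta)$ with $\pi_u(t) = \tfrac{1}{1-b}(\eta + \sigma\rho B_u(T-t))$, and that both $B$ and $B_u$ are monotone autonomous solutions started at $0$. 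First I would dispose of the case $\pi_M = \tfrac{1}{1-b}\eta \in [\alpha,\beta]$: by Lemma \ref{lem: constrained portfolio equal capped portfolio} the two projections agree, so (i) fails, and simultaneously the first conjunct of (ii) fails; hence in that regime both statements are false and the equivalence holds trivially. Likewise, if $\rho = 0$ then $\hat{\pi}^{\ast} \equiv \pi_u \equiv \pi_M$, both projections agree, (i) fails, and the set in (ii) is either empty (if $\pi_M\notin(\alpha,\beta)$, since then $\pi_M$ equals an endpoint or lies outside, and a single constant point cannot have intersection of cardinality exactly $1$ unless $\pi_M\in(\alpha,\beta)$) — so (ii) fails too. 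Thus the substantive content is the case $\rho \neq 0$ and $\pi_M \notin [\alpha,\beta]$.

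In that case, note first that $\hat\pi^\ast$ and $\pi_u$ both start (at $t = T$, i.e. $\tau = 0$) at the common value $\pi_M \notin [\alpha,\beta]$, and both are continuous and (by Remark \ref{rem: monotonicty of B}, via monotonicity of $B$ and $B_u$) monotone in $t$. So each of the sets $\{t : \hat\pi^\ast(t)\in(\alpha,\beta)\}$ and $\{t : \pi_u(t)\in(\alpha,\beta)\}$ is a (possibly empty, possibly half-open) subinterval of $[0,T]$ containing neither endpoint-time issue near $T$. The key observation is: $\mathrm{Cap}(\hat\pi^\ast(t),\alpha,\beta) = \mathrm{Cap}(\pi_u(t),\alpha,\beta)$ for a given $t$ if and only if $\hat\pi^\ast(t)$ and $\pi_u(t)$ lie in the same one of the three regions $(-\infty,\alpha)$, $[\alpha,\beta]$, $(\infty)$ — or, when both are in $[\alpha,\beta]$, are actually equal. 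I would show that as long as the two processes remain jointly outside $(\alpha,\beta)$ (on the same side, which follows from continuity plus the common starting point $\pi_M$), the capped values coincide and moreover $B = B_u$ on that initial time stretch, because on the constrained zone the ODE \eqref{eq: ODE zones} for $B$ reduces, after substituting the active optimal $\lambda$, to exactly the unconstrained Riccati ODE \eqref{eq: Unconstr ODE B} whenever $\hat\pi^\ast\notin(\alpha,\beta)$ — one checks that $-r_0^\pm + r_1^\pm B + \tfrac12 r_2^\pm B^2$ agrees with the relevant branch only when... actually, more carefully: the ODEs differ, but $B=B_u=0$ persists only if $f(0)=0$; the honest route is that the two processes separate exactly at the first time one of them enters $(\alpha,\beta)$.

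So the crux is: (i) holds $\iff$ there exists $t$ at which $\mathrm{Cap}(\hat\pi^\ast(t),\alpha,\beta)\neq\mathrm{Cap}(\pi_u(t),\alpha,\beta)$ $\iff$ there is a first time $t_0$ where one of $\hat\pi^\ast$, $\pi_u$ enters the open interval $(\alpha,\beta)$ while the other has not yet (by continuity and monotonicity, and the fact that both start outside at $\pi_M$) — and this last condition is precisely $\exists t: |\{\hat\pi^\ast(t),\pi_u(t)\}\cap(\alpha,\beta)| = 1$. I would prove the two implications of this final equivalence separately. For (ii)$\Rightarrow$(i): if at some $t$ exactly one of the two values is in $(\alpha,\beta)$, then at that $t$ the capped values differ (one equals an endpoint, the other is strictly interior), hence $\mathcal{P}^H_K(\pi_u)\neq\mathcal{P}^{BS}_K(\pi_u)$; that $\pi^\ast=\mathcal{P}^H_K(\pi_u)$ is Remark \ref{rem: optimal constrained pi is capped on ODE solution} together with the display defining $\mathcal{P}^H_K$. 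For (i)$\Rightarrow$(ii): the first conjunct $\pi_M\notin[\alpha,\beta]$ holds by the case reduction above (Lemma \ref{lem: constrained portfolio equal capped portfolio} forces it, since otherwise (i) fails); for the second conjunct, suppose for contradiction that for every $t$ the set $\{\hat\pi^\ast(t),\pi_u(t)\}\cap(\alpha,\beta)$ has cardinality $0$ or $2$ — then I claim the capped processes agree everywhere, contradicting (i): on times where both values are outside $(\alpha,\beta)$, continuity from the common start $\pi_M$ (which is on one definite side) plus monotonicity keeps them on the same side, so both capped values equal the same endpoint; on times where both are inside $(\alpha,\beta)$, the uncapped values equal the capped ones, but then on that open time-interval the defining ODEs for $B$ and $B_u$ coincide (on the interior zone $Z_0$, ODE \eqref{eq: ODE zones} is literally \eqref{eq: Unconstr ODE B}) and they share the value at the interval's left endpoint by continuity from the agreeing outside-region, forcing $B=B_u$ hence $\hat\pi^\ast=\pi_u$ there; stitching these pieces across $[0,T]$ gives $\hat\pi^\ast\equiv\pi_u$ up to capping, so the projections agree — contradiction.

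The main obstacle I anticipate is making the stitching argument in the last step fully rigorous: one must rule out pathological alternation (the processes crossing in and out of $(\alpha,\beta)$ finitely or infinitely often) and carefully track that at each transition time the two solutions $B$, $B_u$ still agree, so that the "first discrepancy time" is well-defined. Monotonicity of $B$ and $B_u$ (Remark \ref{rem: monotonicty of B}) is what saves the day here — it forces each of $\{\hat\pi^\ast(t)\in(\alpha,\beta)\}$ and $\{\pi_u(t)\in(\alpha,\beta)\}$ to be a single subinterval, so there is at most one entry time for each, and comparing those two entry times reduces the whole argument to a clean one-dimensional comparison of monotone functions starting at the common value $\pi_M$. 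I would set up the proof by first stating this interval structure as a preliminary observation, then the equivalence becomes almost immediate.
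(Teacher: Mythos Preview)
Your proposal is correct and follows essentially the same route as the paper: reduce via Lemma \ref{lem: constrained portfolio equal capped portfolio} to the case $\rho\neq 0$, $\pi_M\notin[\alpha,\beta]$; handle (ii)$\Rightarrow$(i) by noting that a strictly interior capped value cannot equal an endpoint; and handle (i)$\Rightarrow$(ii) by contradiction, using that on the zone $Z_0$ the ODEs \eqref{eq: ODE zones} and \eqref{eq: Unconstr ODE B} coincide, so that if $\hat\pi^\ast$ and $\pi_u$ enter $(\alpha,\beta)$ simultaneously they must agree thereafter --- with monotonicity (Remark \ref{rem: monotonicty of B}) ruling out multiple crossings. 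The paper organises the contradiction around the single time $\hat t=\sup\{t:\{\hat\pi^\ast(t),\pi_u(t)\}\cap K\neq\emptyset\}$ rather than your global interval-structure description, but the substance is identical; your self-correction about which zone makes the ODEs agree (it is $Z_0$, not $Z_\pm$) lands on the right statement.
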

	
	We can construct an extreme case which satisfies the requirements of Lemma \ref{lem: equivalence gap} by choosing $\alpha$ such that $B(\tau)$ is constant and choosing the market parameters such that $\pi_u$ changes sufficiently during the investment horizon to ensure that $\pi_u(t^{\ast})\in (\alpha, \beta)$ for some $t^{\ast}\in [0,T].$
	
	\begin{corollary}\label{cor: stationary case}
		Let $\pi^{\ast}$ be as in Lemma \ref{lem: optimal portfolio Heston MK23}, $\hat{\pi}^{\ast}$ be as in Remark \ref{rem: optimal constrained pi is capped on ODE solution} and $\pi_u$ as in Corollary \ref{cor: unconstrained Heston}. Let $\text{sign}(x)\in \{-1,0,1\}$ denote the sign of $x\in \R.$
		\begin{itemize}
			\item[(i)] If 
			$$
			0< \pi_M = \frac{\alpha}{2}< \alpha \quad \text{and} \quad \alpha < \pi_u(t^{\ast})< \beta \quad \text{for some} \ t^{\ast}\in [0,T],
			$$
			then $B(\tau) = \alpha$ for all $\tau \in [0,T],$ $\pi^{\ast}(t) = \alpha$ for all $t\in[0,T]$ and $\pi^{\ast} = \mathcal{P}^H_K\left(\pi_u,\alpha, \beta\right) \neq \mathcal{P}^{BS}_K\left(\pi_u,\alpha, \beta\right) = \text{Cap}\left(\pi_u,\alpha, \beta\right).$
			\item[(ii)] If $\pi_M > \beta >0,$ then
			$$
			\text{sign}\left(\frac{\partial}{\partial t} \hat{\pi}^{\ast}(t)\right) = \text{sign}\left(\frac{\partial}{\partial t} \pi_u(t)\right) = - \text{sign}(\rho b) \quad \forall t\in [0,T].
			$$
			Hence, if in addition $b<0$ and $\rho <0,$ then $\mathcal{P}^H_K = \mathcal{P}^{BS}_K.$
		\end{itemize}
	\end{corollary}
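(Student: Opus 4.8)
The plan is to treat parts~(i) and~(ii) separately; in both, everything reduces to evaluating the autonomous right‑hand side of ODE~(\ref{eq:  ODE B constr. Heston}) (equivalently~(\ref{eq: ODE zones})) at $\tau=0$, using the monotonicity of $B$ and $B_u$ recorded in Remark~\ref{rem: monotonicty of B}, and reading $\pi^{\ast}$ off its piecewise form in Remark~\ref{rem: optimal constrained pi is capped on ODE solution}.

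For part~(i) I would first substitute the hypothesis $\pi_M=\eta/(1-b)=\alpha/2$, i.e.\ $(1-b)\alpha=2\eta$, into the coefficients of the lower Riccati ODE in~(\ref{eq: ODE zones}). This makes $r_0^-=\tfrac12 b\alpha\big((1-b)\alpha-2\eta\big)=0$, and since $\alpha>0$, $1-b>0$ it gives $B_{-}=\frac{(1-b)\alpha-\eta}{\sigma}=\frac{\eta}{\sigma}>0$. Hence $B(0)=0$ lies strictly inside $Z_{-}$, where the right‑hand side of~(\ref{eq:  ODE B constr. Heston}) reduces to $r_1^- B+\tfrac12\sigma^2 B^2$, so $B\equiv 0$ is a solution that never leaves $Z_{-}$; by the Picard‑Lindel\"{o}f uniqueness recalled in Remark~\ref{rem: monotonicty of B} it is the unique solution, and then Remark~\ref{rem: optimal constrained pi is capped on ODE solution} yields $\pi^{\ast}(t)=\alpha$ for all $t$. (I would note in passing that the displayed ``$B(\tau)=\alpha$'' should read ``$B(\tau)=0$'', since $B(0)=0$ by Lemma~\ref{lem: dual ODEs Heston Model MK23}.) Finally, the hypothesis supplies $t^{\ast}$ with $\pi_u(t^{\ast})\in(\alpha,\beta)$, so $\text{Cap}(\pi_u(t^{\ast}),\alpha,\beta)=\pi_u(t^{\ast})\neq\alpha=\pi^{\ast}(t^{\ast})$; hence $\mathcal{P}^H_K(\pi_u)=\pi^{\ast}$ and $\mathcal{P}^{BS}_K(\pi_u)=\text{Cap}(\pi_u,\alpha,\beta)$ differ, while $\pi_M=\alpha/2\notin[\alpha,\beta]$ also confirms the hypotheses of Lemma~\ref{lem: equivalence gap}.

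For part~(ii) the key is to fix the signs of $B'$ and $B_u'$. From $\pi_M>\beta>0$ I obtain $\eta>(1-b)\beta>0$, hence $B_{+}=\frac{(1-b)\beta-\eta}{\sigma}<0$, so $B(0)=0$ starts in $Z_{+}$ and $B'(0)=-r_0^+=-\tfrac12 b\beta\big((1-b)\beta-2\eta\big)$; as $\beta>0$ and $(1-b)\beta-2\eta<0$, this gives $\text{sign}(B'(0))=\text{sign}(b)\neq 0$. Since~(\ref{eq:  ODE B constr. Heston}) is autonomous with locally Lipschitz right‑hand side, $B$ is monotone (Remark~\ref{rem: monotonicty of B}) and $B'$ cannot vanish on $[0,T]$ — a zero of $B'$ would make $B$ constant by uniqueness, forcing $B\equiv B(0)=0$ and $B'(0)=0$, a contradiction — so $\text{sign}(B'(\tau))=\text{sign}(b)$ throughout; the same computation applied to~(\ref{eq: Unconstr ODE B}) gives $B_u'(0)=-r_0=\tfrac{b}{2(1-b)}\eta^2$, hence $\text{sign}(B_u'(\tau))=\text{sign}(b)$ throughout. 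Differentiating $\hat{\pi}^{\ast}(t)=\frac{\eta+\sigma\rho B(T-t)}{1-b}$ and $\pi_u(t)=\frac{\eta+\sigma\rho B_u(T-t)}{1-b}$ in $t$ gives $\frac{\partial}{\partial t}\hat{\pi}^{\ast}(t)=-\frac{\sigma\rho}{1-b}B'(T-t)$ and the analogue for $\pi_u$, so both derivatives have sign $-\text{sign}(\rho)\,\text{sign}(b)=-\text{sign}(\rho b)$, which is the first assertion. For the second assertion, $b<0$ and $\rho<0$ make $-\text{sign}(\rho b)=-1$, so $\hat{\pi}^{\ast}$ and $\pi_u$ are strictly decreasing in $t$ with $\hat{\pi}^{\ast}(T)=\pi_u(T)=\pi_M>\beta$; thus $\hat{\pi}^{\ast}(t),\pi_u(t)\geq\pi_M>\beta$ for all $t\in[0,T]$, so Remark~\ref{rem: optimal constrained pi is capped on ODE solution} gives $\pi^{\ast}(t)=\beta=\text{Cap}(\pi_u(t),\alpha,\beta)$ for every $t$, i.e.\ $\mathcal{P}^H_K(\pi_u)=\mathcal{P}^{BS}_K(\pi_u)$.

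The only step requiring genuine care — the main obstacle — is the claim in part~(ii) that $\text{sign}(B')$ stays constant on all of $[0,T]$ even though $B$ may migrate between the zones $Z_{-}$, $Z_0$, $Z_{+}$. This is precisely what the autonomous representation $B'=h(B)$ of Remark~\ref{rem: monotonicty of B} delivers, with $h$ the continuous, locally Lipschitz minimum of three convex quadratics: monotonicity of solutions of scalar autonomous ODEs, together with the uniqueness‑based exclusion of interior equilibria, pins $\text{sign}(B')$ to $\text{sign}(h(0))=\text{sign}(-r_0^+)=\text{sign}(b)$. Everything else is routine bookkeeping with the definitions of $B_{\pm}$, $\pi_M$, $\text{Cap}$ and Remark~\ref{rem: optimal constrained pi is capped on ODE solution}.
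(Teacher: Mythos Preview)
Your proposal is correct and follows essentially the same route as the paper: in part~(i) you show $r_0^-=0$ so that $B\equiv 0$ is the unique (stationary) solution in $Z_-$, whence $\pi^{\ast}\equiv\alpha$, and then use the point $t^{\ast}$ to separate the two projections; in part~(ii) you compute $\text{sign}(B'(0))=\text{sign}(B_u'(0))=\text{sign}(b)$ from $-r_0^+$ and $-r_0$, invoke the autonomous monotonicity of Remark~\ref{rem: monotonicty of B} to propagate this sign, and conclude via $\frac{\partial}{\partial t}\hat{\pi}^{\ast}=-\frac{\sigma\rho}{1-b}B'(T-t)$. Your observation that the displayed ``$B(\tau)=\alpha$'' in the statement should read ``$B(\tau)=0$'' is also correct.
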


Clearly, the requirements on $\alpha$ in Corollary \ref{cor: stationary case}, (i) are quite restrictive, but they still provide a valuable insight into when we can expect to see a large difference between the projections $\mathcal{P}^{BS}_K$ and $\mathcal{P}^H_K.$ Namely, if 
\begin{itemize}
	\item the optimal unconstrained portfolio $\pi_u$ violates the constraint at maturity (i.e. $\pi_u(T)=\pi_M \notin K$) and there is sufficient change in $\pi_u(t)$ during the investment period such that $\pi_u(t^{\ast})\in K$ for some $t^{\ast} \in K.$
	\item the derivatives of $B(\tau)$ and $B_u(\tau)$ are considerably different while $\pi_u \notin K$ (constant $B$ being the extreme case). 
\end{itemize}
As a matter of fact, we will later see in the numerical experiments in Section \ref{sec: Numerical Studies} that it is sufficient if $\alpha \approx 2\pi_M$ (i.e. $B(\tau)$ is nearly constant) to cause a considerable difference between the two projections. \\

As evidenced by the majority of empirical calibrations of Heston's stochastic volatility model to financial time series (see e.g. \citet{Escobar-Anel2016} for an overview), the parameter $\rho$ is negative for most realistic applications. In the empirical study on risk preferences of mutual fund managers by \citet{Koijen2014}, it is reported that the risk aversion parameter $b$ has a median of $b = -1.43$ and a mean of $b = -4.8.$ For more risk averse investors, such as insurance companies, reinsurance companies or pension funds, one can thus realistically assume negative values for $b.$ Thus, for most realistic parameter configurations of $\mathcal{M}_H$ with $\pi_M> \beta,$ the projections $\mathcal{P}^{H}_K$ and $\mathcal{P}^{BS}_K$ coincide for investors with a high degree of risk aversion (i.e. for a low value of $b$).

\section{Implications for Related Models}\label{sec: Implications for Related Models}
	We now consider a generalised version of the financial market $\mathcal{M}_H$ with $d\in N$ risky assets, $d$ independent CIR processes as risk drivers and a generalised dependence of market price of risk and risky asset volatility on these risk drivers. From now on, let $W^z$ and $\hat{W}$ denote independent $d$-dimensional Wiener processes and consider parameters $\kappa,$ $\theta,$ $\sigma,$ $z_0 \in (0,\infty)^d$ such that their components satisfy $2\kappa_i \theta_i > \sigma_i^2$ for $i=1,...,d.$
	Then, we define the $d-$dimensional CIR process $z=\left(z_1,...,z_d\right)'$ through the dynamics
	\begin{align*}
		dz_i(t) = \kappa \left(\theta_i-z_i(t)\right)dt + \sigma_i\sqrt{z_i(t)}dW^z_i(t).
	\end{align*}
	
	In the following, let $\1\in \R^d$ be the $d$-dimensional vector of ones, let $x\odot y$ denote the element-wise multiplication of $x,y\in \R^d$ and let $\sqrt{x}$ denote the element-wise square root of $x\in \R^d.$ Consider a given correlation vector $\rho \in (-1,1)^d,$ market price of risk $\gamma:(0,\infty)^d\rightarrow \R^d$ and volatility $\Sigma:(0,\infty)^d\rightarrow \R^{d\times d}$ such that $\Sigma(z)$ is non-singular for all $z\in (0,\infty)^d.$ Then, we define the financial market $\mathcal{M}_{H}^{\gamma, \Sigma},$ consisting of one risk-free asset $P_0$ and $d$ risky assets $P=\left(P_1,\hdots, P_d\right)'$ with dynamics  
	\begin{align*}
		dP_0(t) &= P_0(t) r dt\\
		dP(t) &= P(t)\odot \big[\underbrace{\left(r\1 + \Sigma(z(t))\gamma(z(t))\right)}_{=:\mu(z(t))}dt + \Sigma(z(t))\underbrace{\left(\rho\cdot dW^z(t)+\sqrt{\1-\rho \odot \rho }d\hat{W}(t)\right)}_{=:dW(t)} \big].
	\end{align*}
	Clearly, we can recover the financial market $\mathcal{M}_H,$ as considered in Section \ref{sec: Heston's Stochastic Volatility Model}, by assuming $d=1$ and choosing $\gamma(z) = \eta \sqrt{z}$ and $\Sigma(z) = \sqrt{z}$ and the Black Scholes model $\mathcal{M}_{BS}$ if both $\gamma$ and $\Sigma$ are constants. Similar, but slightly more general financial market models than $\mathcal{M}_{H}^{\gamma, \Sigma}$ have been considered in \citet{Liu2006} or \citet{EKZ2023}, for example. In $\mathcal{M}_{H}^{\gamma, \Sigma},$ the wealth process $V^{v_0, \pi}$ of an investor with initial wealth $v_0$ who trades continuously in time with $\R^d$-valued relative portfolio process $\pi$, satisfies the SDE
	\begin{align*}
		dV^{v_0,\pi}(t) &= V^{v_0,\pi}(t)\left[\left(r+\gamma(z(t))'\Sigma(z(t))'\pi(t)\right)dt + \pi(t)'\Sigma(z(t))dW(t)\right] \\
		&= V^{v_0,\pi}(t)\left[\left(r+\left(\mu(z(t))-r\1\right)'\pi(t)\right)dt + \pi(t)'\Sigma(z(t))dW(t)\right].
	\end{align*}
	In $\mathcal{M}_{H}^{\gamma, \Sigma},$ the set of admissible portfolio processes naturally generalises to
	\begin{align*}
		\Lambda^{\gamma, \Sigma} = \left \{ \pi = \big(\pi(t)\big)_{t \in [0,T]} \ \text{progr. measurable} \  \Big | \  \int_0^T \Vert \Sigma(z(t))'\pi(t)\Vert^2 dt < \infty \ Q-a.s.\right \} 
	\end{align*}
	For a closed convex set with non-empty interior $K\subset \Rex^d,$ the portfolio optimisation problem $\mathbf{(P)}$ in $\mathcal{M}_{H}^{\gamma, \Sigma}$ is then defined as
	\begin{equation*}
		\mathbf{(P)}
		\begin{cases}
			\Phi(v_0) &= \underset{\pi \in \Lambda_K}{\sup} \mathbbm{E}\big[U(V^{v_0, \pi}(T)) \big] \\
			\Lambda_K &= \big \{\pi \in \Lambda^{\gamma, \Sigma} \ \big | \ \pi(t) \in K \ \mathcal{L}[0,T]\otimes Q-\text{a.e.} \big \}.
		\end{cases}
	\end{equation*}
	In the following two sections, we investigate the solvability of $\mathbf{(P)}$ for given choices of $\gamma,$ $\Sigma,$ and $K.$ In Section \ref{subsec: PCSV Model}, we consider the PCSV Model, as discussed in \citet{Rubtsov2017}, and in Section \ref{subsec: Inverse Volatility Constraints}, we consider inverse volatility constraints $K,$ which impose stronger restrictions on an investor's portfolio during periods of high volatility.
	
	\subsection{PCSV Model}\label{subsec: PCSV Model}
	
	We recover the PCSV (\enquote{Principal Component Stochastic Volatility}) model $\mathcal{M}^{PCSV},$ as proposed in \citet{Escobar2010}, from $\mathcal{M}_{H}^{\gamma, \Sigma}$ by considering an orthogonal matrix $A=\big(a_1,...,a_d\big)\in \R^{d\times d}$ and defining market price of risk and volatility as
	\begin{align}\label{eq: Definition PCSV Market}
		\gamma(z) = \diag(\sqrt{z})A'\eta = \Sigma(z)'\eta, \quad \Sigma(z)= A\diag(\sqrt{z}) \quad \forall z\in (0,\infty)^d,
	\end{align}
	where $\diag(x)\in \R^{d\times d}$ denotes the diagonal matrix with entries $x\in \R^d,$ and $\eta \in \R^d$ is a constant. \\ 
	If $A=I$, then $\mathcal{M}^{PCSV}$ can be regarded as a canonical generalisation of Heston's stochastic volatility model $\mathcal{M}_H$ for a $d$-dimensional asset universe, where each risky asset's volatility is determined as the square root of one of the $d$ independent CIR processes $z_i.$ However, in its general form, the independent components of the $d$-dimensional CIR process $z$ are not directly regarded as volatilities. Instead, the instantaneous covariance matrix of risky asset returns
	\begin{align*}
		\Sigma(z(t))\Sigma(z(t))' = A\diag(z(t))A'
	\end{align*}
	is decomposed into its principal components, i.e. the columns $a_i$ of the matrix $A$ represent its eigenvectors, and the independent CIR processes $z_i$ represent their (stochastic) eigenvalues. This approach not only enables the modelling of stochastic covariances of asset returns because of additional degrees of freedom in $A,$ but also allows for an interpretation of $z$ as hidden risk factors determining the volatility level in the financial market. Moreover, \citet{Rubtsov2017} demonstrated that several stylised facts are captured by the PCSV, such as stochasticity of volatilities and correlation of risky asset returns, volatility and correlation leverage effect, volatility spillovers, and increasing correlation in periods of high market volatility. \\
	Let $\Lambda^{PCSV}$ be the set of admissible portfolios in $\mathcal{M}^{PCSV}.$ Then, for any $\pi\in\Lambda^{PCSV},$ the wealth process $V^{v_0, \pi}$ satisfies the SDE
	\begin{align*}
		dV^{v_0,\pi}(t) = V^{v_0,\pi}(t)\left[\left(r+\eta'A\diag(z(t))A'\pi(t)\right)dt + \pi(t)'A\diag(\sqrt{z(t)})dW(t)\right],
	\end{align*}
	for $t\in[0,T].$ The instantaneous variance of $V^{v_0,\pi}(t)$ can therefore be decomposed into a weighted sum of the risk factors $z,$ since
	\begin{align*}
		\Vert \diag(\sqrt{z(t)})A'\pi(t) \Vert = \left \Vert \begin{pmatrix}
			a_1'\pi(t) \sqrt{z_1(t)} \\
			\vdots \\
			a_d'\pi(t)\sqrt{z_d(t)}
		\end{pmatrix}\right \Vert^2 = \sum_{i=1}^d \left(a_i'\pi(t)\right)^2z_i(t).
	\end{align*}
	In this sense, the portfolio weights determine a risk exposure $\left(a_i'\pi(t)\right)^2$ to the risk factor $z_i.$ Hence, it is very natural to impose risk limits on these exposures, i.e. for given upper bounds $\beta_1,...,\beta_d>0$ we require that
	\begin{align*}
		\left(a_i'\pi(t)\right)^2 \leq \beta_i \quad \forall i = 1,...,d \quad \Leftrightarrow \quad A'\pi(t)\in \bigtimes_{i=1}^d [0, \sqrt{\beta_i}] \quad \Leftrightarrow \quad \pi(t)\in \underbrace{A\cdot \left( \bigtimes_{i=1}^d [0, \sqrt{\beta_i}]\right)}_{=:K_{PCSV}}. 
	\end{align*}
	We can reuse the ideas and results from Section \ref{sec: Heston's Stochastic Volatility Model} to obtain the optimal portfolio to the portfolio optimisation problem $\mathbf{(P)}$ in $\mathcal{M}_{H}^{\gamma, \Sigma} = \mathcal{M}_{PCSV}$ with constraints $K= K_{PCSV}.$ 
	
	\begin{theorem}\label{thm: strong verification PCSV}
		Consider the portfolio optimisation problem $\mathbf{(P)}$ in $\mathcal{M}_{PCSV}$ with constraints $K= K_{PCSV}$, let the parameters $b,$ $\left(\eta_A\right)_i:=\left(A'\eta\right)_i$, $\kappa_i,$ $\theta_i,$ $\sigma_i,$ $\alpha_i = 0,$ $\sqrt{\beta_i}$ satisfy Assumptions \ref{ass: condition box constraints} and \ref{ass: boundedness near maturity} and $B_i$ be defined as in Theorem \ref{thm: solution constrained B in 1-D}. Define the portfolio $\pi^{\ast}_A(t)$ via
		\begin{align*}
			\left(\pi^{\ast}_A(t)\right)_i= \text{Cap}\left(\frac{1}{1-b}\left(\eta_i + \sigma\rho B_i(T-t)\right),0,\sqrt{\beta_i}\right).
		\end{align*}
		Then the portfolio $\pi^{\ast}(t) = A\cdot \pi^{\ast}_A(t)$ is optimal for $\mathbf{(P)}.$
	\end{theorem}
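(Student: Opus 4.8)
I would prove Theorem \ref{thm: strong verification PCSV} by reducing $\mathbf{(P)}$ in $\mathcal{M}_{PCSV}$ to a product of $d$ one-dimensional Heston problems via an orthogonal change of portfolio coordinates, and then transcribing the HJB and verification machinery of Section \ref{sec: Heston's Stochastic Volatility Model} to this product structure.

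\emph{Step 1: decoupling change of variables.} Substitute $\pi = A\tilde{\pi}$, i.e.\ $\tilde{\pi} = A'\pi$. Since $A$ is orthogonal this is a linear bijection of $\R^d$; using $\Sigma(z)'\pi = \diag(\sqrt{z})A'\pi = \diag(\sqrt{z})\tilde{\pi}$ and $\gamma(z)'\Sigma(z)'\pi = (A'\eta)'\diag(z)\tilde{\pi} = \sum_{i=1}^d (\eta_A)_i z_i\tilde{\pi}_i$, the wealth SDE becomes $dV = V\big[\big(r + \sum_{i=1}^d (\eta_A)_i z_i\tilde{\pi}_i\big)dt + \sum_{i=1}^d \tilde{\pi}_i\sqrt{z_i}\,dW_i\big]$, in which the pairs $(z_i,W_i)$, $i=1,\dots,d$, are mutually independent and each $W_i$ has instantaneous correlation $\rho_i$ with the CIR driver $W^z_i$ of $z_i$. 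The same substitution maps $\Lambda^{PCSV}$ bijectively onto $\{\tilde{\pi} : \int_0^T z_i(t)\tilde{\pi}_i(t)^2\,dt < \infty \ \forall i\}$ and the constraint $\pi(t)\in K_{PCSV}$ onto $\tilde{\pi}(t)\in \bigtimes_{i=1}^d [0,\sqrt{\beta_i}]$. Hence, up to this bijection, $\mathbf{(P)}$ in $\mathcal{M}_{PCSV}$ is a $d$-factor Heston problem with the decoupled box constraint $\bigtimes_{i=1}^d[0,\sqrt{\beta_i}]$, and the two problems have the same value.

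\emph{Step 2: separable HJB solution.} The crucial structural fact is that the instantaneous covariation of the transformed problem is block-diagonal: $d\langle z_i, z_j\rangle = 0$ for $i\ne j$, and $d\langle V,V\rangle$ carries no $\tilde{\pi}_i\tilde{\pi}_j$ cross terms. Therefore, inserting the separable ansatz $G(t,v,z) = \frac{1}{b} v^b\exp\big(A_{tot}(T-t) + \sum_{i=1}^d B_i(T-t)z_i\big)$ into the HJB PDE of the transformed problem and dividing by $G$, the $\tilde{\pi}$-dependent part equals $\sum_{i=1}^d z_i\big(b(\eta_A)_i\tilde{\pi}_i + \frac{1}{2} b(b-1)\tilde{\pi}_i^2 + b\sigma_i\rho_i B_i(T-t)\tilde{\pi}_i\big)$, so the supremum over $\tilde{\pi}\in \bigtimes_i[0,\sqrt{\beta_i}]$ splits into $d$ independent one-dimensional suprema, each equal to the supremum in (\ref{eq: constr. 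HJB PDE Heston}) for the $i$-th coordinate with $\alpha_i = 0$ and upper bound $\sqrt{\beta_i}$. Matching the coefficients of $z_i$ and the constant term, exactly as in Lemma \ref{lem: dual ODEs Heston Model MK23}, then shows that $G$ solves the transformed HJB PDE if and only if each $B_i$ solves the one-dimensional ODE (\ref{eq:  ODE B constr. Heston}) with parameters $(\eta_A)_i,\kappa_i,\theta_i,\sigma_i,\rho_i$ and $K = [0,\sqrt{\beta_i}]$, and $A_{tot}$ solves the associated linear ODE. Since Assumptions \ref{ass: condition box constraints} and \ref{ass: boundedness near maturity} hold for every coordinate by hypothesis, Theorem \ref{thm: solution constrained B in 1-D} supplies each $B_i$ in closed form on $[0,T]$, and by Remark \ref{rem: optimal constrained pi is capped on ODE solution} (applied per coordinate) the maximising argument is the portfolio $\tilde{\pi}^{\ast}$ with components $\tilde{\pi}^{\ast}_i(t) = \big(\pi^{\ast}_A(t)\big)_i$ as given in the statement.

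\emph{Step 3: verification --- the main obstacle.} It remains to carry out the supermartingale/martingale argument of Section \ref{subsec: Verification Theorem for Heston's Model} in the multi-factor setting. For any admissible $\tilde{\pi}$, It\^{o}'s formula and the HJB inequality show that $G(t,V^{v_0,\tilde{\pi}}(t),z(t))$ has non-positive drift, hence by the same localisation argument as in the one-dimensional case $\E[U(V^{v_0,\tilde{\pi}}(T))] \le G(0,v_0,z_0)$; for $\tilde{\pi} = \tilde{\pi}^{\ast}$ the drift vanishes identically, so this process is a local martingale. Upgrading it to a true martingale --- equivalently, checking the uniform integrability condition of Lemma \ref{lem: optimal portfolio Heston MK23} in its $d$-factor form --- is where the real work lies: the stochastic exponential driving $G(t,V^{v_0,\tilde{\pi}^{\ast}}(t),z(t))$ factorises over the index $i$ by independence of the pairs $(z_i,W_i)$, and each factor is controlled by Lemma \ref{lem: A1} and Lemma \ref{lem: A2} applied to the $i$-th coordinate's parameters, which is legitimate since Assumptions \ref{ass: condition box constraints} and \ref{ass: boundedness near maturity} are assumed for every coordinate. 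Thus the multi-dimensional integrability requirement decouples into the one-dimensional ones, giving $\E[U(V^{v_0,\tilde{\pi}^{\ast}}(T))] = G(0,v_0,z_0)$, so $\tilde{\pi}^{\ast}$ is optimal for the transformed problem; transforming back, $\pi^{\ast}(t) = A\tilde{\pi}^{\ast}(t) = A\pi^{\ast}_A(t)$ is optimal for $\mathbf{(P)}$ in $\mathcal{M}_{PCSV}$. I expect the joint integrability of the per-coordinate local martingales and the bookkeeping around the localising stopping times to be the only genuinely delicate point; everything else is a direct transcription of the one-dimensional results.
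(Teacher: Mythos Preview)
Your proposal is correct and follows essentially the same route as the paper: the orthogonal change of control $\tilde{\pi}=A'\pi$ to decouple $\mathbf{(P)}$ into $d$ independent one-dimensional Heston problems with box constraints, the separable exponentially affine HJB ansatz yielding the coordinate-wise ODEs (\ref{eq:  ODE B constr. Heston}), and the verification via $L^q$-boundedness obtained by bounding $|G|^q$ through a product of independent stochastic exponentials controlled coordinate-by-coordinate by Lemmas \ref{lem: A1} and \ref{lem: A2}. The paper frames the HJB and uniform-integrability steps by citing results from \citet{EKZ2023} rather than re-deriving them, but the substance of the argument is the same as yours.
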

	
	The key argument in the proof of Theorem \ref{thm: strong verification PCSV} lies in a change of control, which transforms the portfolio optimisation problem $\mathbf{(P)}$ into an equivalent portfolio optimisation problem $\mathbf{(P_A)}$ in a financial market, which consists of $d$ risky assets with independent Heston volatilities and interval constraints. Thanks to this familiar structure and the independence of the risky asset volatilities, we can extend the ideas from Section \ref{sec: Heston's Stochastic Volatility Model} to solve $\mathbf{(P_A)}$ and invert the change of control to obtain a solution to $\mathbf{(P)}.$
	
	\subsection{Inverse Volatility Constraints}\label{subsec: Inverse Volatility Constraints}
	
	We now discuss a related problem, which we approach with a similar methodology to the one in Section \ref{subsec: PCSV Model}. Consider again the one-dimensional setting with one risky asset, i.e. $d=1.$ In this section, we no longer assume that the convex constraints $K\subset \Rex$ are static, but allow them to depend on the stochastic factor $z.$ More specifically, we consider volatility-dependent constraints of the form
	\begin{align*}
		\pi(t) \in K(z(t)) \quad \mathcal{L}[0,T]\otimes Q-a.e.,
	\end{align*}
	where $K:(0,\infty)\rightarrow \mathcal{B}(\R)$ is a set-valued function, taking only closed-convex values in the Borel set $\mathcal{B}(\R).$ The motivation for such constraints is quite clear: Depending on the current state of the financial market $\mathcal{M}_{H}^{\gamma, \Sigma},$ in particular the level of risky asset volatility $\Sigma(z(t))$ and the market price of risk $\gamma(z(t)),$ investors may face different constraints on their portfolio, such as more relaxed bounds in periods of low volatility or stricter bounds in periods of high volatility. Further, in the spirit of mean-variance optimisation, we can think of an investor seeking an optimal portfolio allocation subject to constraints on his instantaneous portfolio volatility
	\begin{align*}
		0 \leq \pi(t)\Sigma(z(t)) \leq \beta_{z} \quad \mathcal{L}[0,T]\otimes Q-a.e. \quad \Leftrightarrow \quad 0 \leq \pi(t) \leq \frac{\beta_{z}}{\Sigma(z(t))} \quad \mathcal{L}[0,T]\otimes Q-a.e.,
	\end{align*}
	for a given volatility level $\beta_{z}>0.$\footnote{Note that this is different from classic mean-variance optimisation, where the variance of the terminal portfolio wealth $V^{v_0,\pi}(T)$ is constrained.}
	
	Keeping this motivation in mind, we thus define the portfolio optimisation problem with volatility-dependent constraints $\mathbf{(P^z)}$ as
	
	\begin{align*}
		\mathbf{(P^z)} = \begin{cases}
			\Phi^{z}(v_0) &= \underset{\pi \in \Lambda_{K(\cdot)}}{\sup} \mathbbm{E}\big[U(V^{v_0, \pi}(T)) \big] \\
			\Lambda_{K(\cdot)} &= \left \{ \pi \in \Lambda^{\gamma, \Sigma} \ \big | \ \pi(t) \in K(z(t)) \ \mathcal{L}[0,T]\otimes Q-\text{a.e.} \right \}.
		\end{cases}
	\end{align*}
	
	In its most general form, the portfolio optimisation $\mathbf{(P^z)}$ is highly non-trivial, since closed-form solutions for its optimal portfolio process $\pi^{\ast}_z$ can rarely be determined for general $\gamma$ and $\Sigma,$ even in the absence of (stochastic) allocation constraints. In particular, the portfolio optimisation $\mathbf{(P)}$ is included as a special case in the definition of $\mathbf{(P^z).}$ However, due to the results of \citet{cvitanic1992} for the Black-Scholes model $\mathcal{M}^{BS}$ with constant volatility, as well as the results from Section \ref{sec: Heston's Stochastic Volatility Model} for Heston's stochastic volatility model $\mathcal{M}^H,$ we know of at least two different models in which solutions to $\mathbf{(P)}$ (respectively $\mathbf{(P^z)}$) with static constraints can be obtained in closed form. Using another change of control argument, we can therefore derive conditions on the market parameters $\gamma,$ $\Sigma$ and the constraints $K,$ under which we can transform $\mathbf{(P^z)}$ into an equivalent, solvable portfolio optimisation problem $\mathbf{(P)}$ in either $\mathcal{M}^{BS}$ or $\mathcal{M}^{H}.$

\begin{theorem}\label{thm: optimal portfolio vola-inverse constraints}
	Consider the financial market $\mathcal{M}_{H}^{\gamma, \Sigma}$ and the portfolio optimisation problem $\mathbf{(P^{z})}.$ \linebreak Consider constants $-\infty \leq \alpha < \beta \leq \infty.$
	\begin{itemize}
		\item[(i)] If $\gamma(z)=\eta$ for some $\eta>0$ and $K(z) = \frac{1}{\Sigma(z)}[\alpha_z,\beta_z],$ then the portfolio process
		\begin{align*}
			\pi^{\ast}_{z}(t) = \frac{1}{\Sigma \left(z(t)\right)}\text{Cap}(\pi_M,\alpha,\beta)
		\end{align*}
		is optimal for $\mathbf{(P^{z})}.$
		\item[(ii)] If $\gamma(z) = \eta\sqrt{z(t)}$ for some constant $\eta > 0,$ $K(z) = \frac{\sqrt{z}}{\Sigma(z)}[\alpha,\beta],$ and Assumptions \ref{ass: condition box constraints} and \ref{ass: boundedness near maturity} are satisfied, then for $\pi^{\ast}$ defined as in Theorem \ref{thm: strong verification in Heston market}, the portfolio process
		\begin{align*}
			\pi^{\ast}_{z}(t) = \frac{\sqrt{z(t)}}{\Sigma\left(z(t)\right)}\pi^{\ast}(t)
		\end{align*}
		is optimal for $\mathbf{(P^{z})}.$
	\end{itemize}
\end{theorem}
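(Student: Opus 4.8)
The plan is to reduce both parts of the statement to already-solved problems by a pathwise change of control, in the same spirit as the argument behind Theorem \ref{thm: strong verification PCSV}: a substitution $\pi\mapsto\tilde\pi$ that turns $\mathbf{(P^{z})}$ into a portfolio optimisation problem with \emph{static} interval constraints in a market whose optimiser is known --- the constrained Merton problem in $\mathcal{M}_{BS}$ for part (i), and the problem $\mathbf{(P)}$ in $\mathcal{M}_H$ for part (ii). Recall that for $d=1$ the wealth dynamics in $\mathcal{M}_{H}^{\gamma, \Sigma}$ are
\begin{align*}
	dV^{v_0,\pi}(t) = V^{v_0,\pi}(t)\Big[\big(r+\gamma(z(t))\Sigma(z(t))\pi(t)\big)\,dt + \pi(t)\Sigma(z(t))\,dW(t)\Big],
\end{align*}
and that, since $\Sigma(z)\neq 0$ for all $z\in(0,\infty)$ and $z(t)>0$ $Q$-a.s.\ by Feller's condition, the factors used below are well-defined, finite and non-zero along $Q$-a.e.\ path.

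For part (i) I would set $\tilde\pi(t):=\Sigma(z(t))\pi(t)$; plugging $\gamma\equiv\eta$ into the wealth dynamics above gives $dV^{v_0,\pi}(t)=V^{v_0,\pi}(t)[(r+\eta\tilde\pi(t))\,dt+\tilde\pi(t)\,dW(t)]$, which is precisely the wealth equation of an investor using the one-dimensional portfolio $\tilde\pi$ in the constant-coefficient Black-Scholes market $\mathcal{M}_{BS}$ with market price of risk $\eta$ and unit volatility (the $\sigma=\kappa=\rho=0$ specialisation of $\mathcal{M}_H$ recalled after Corollary \ref{cor: unconstrained Heston}). For part (ii) I would set $\tilde\pi(t):=\tfrac{\Sigma(z(t))}{\sqrt{z(t)}}\pi(t)$; then $\pi(t)\Sigma(z(t))=\sqrt{z(t)}\,\tilde\pi(t)$, and plugging $\gamma(z)=\eta\sqrt{z}$ into the wealth dynamics turns them into $dV^{v_0,\pi}(t)=V^{v_0,\pi}(t)[(r+\eta z(t)\tilde\pi(t))\,dt+\sqrt{z(t)}\,\tilde\pi(t)\,dW(t)]$, which is exactly (\ref{eq: SDE V original market}). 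In both cases three things have to be verified: (a) $\tilde\pi$ is progressively measurable, which follows because $t\mapsto\Sigma(z(t))$ and $t\mapsto 1/\sqrt{z(t)}$ are; (b) the admissibility integral is invariant, since $\Vert\Sigma(z(t))'\pi(t)\Vert^2$ equals $\tilde\pi(t)^2$ in case (i) and $z(t)\tilde\pi(t)^2$ in case (ii) --- exactly the integrands defining admissibility in $\mathcal{M}_{BS}$ and in $\mathcal{M}_H$ --- so $\pi\mapsto\tilde\pi$ maps $\Lambda_{K(\cdot)}$ \emph{onto} the admissible set of the target market; and (c) the constraint $\pi(t)\in K(z(t))$ translates, for the prescribed maps $K(z)=\tfrac{1}{\Sigma(z)}[\alpha,\beta]$ respectively $K(z)=\tfrac{\sqrt{z}}{\Sigma(z)}[\alpha,\beta]$, into $\tilde\pi(t)\in[\alpha,\beta]$ $\mathcal{L}[0,T]\otimes Q$-a.e. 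Since, by uniqueness of the solution of the wealth SDE, $V^{v_0,\pi}$ coincides with the target-market wealth process generated by $\tilde\pi$, the map $\pi\mapsto\tilde\pi$ is a bijection from $\Lambda_{K(\cdot)}$ onto the target constrained admissible set that leaves $\mathbbm{E}[U(V^{v_0,\pi}(T))]$ unchanged; hence the two problems have the same value, and an optimiser of one produces an optimiser of the other by inverting the change of control.

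It then only remains to insert the known optimisers. For part (i), the transformed problem is the constrained Merton problem with interval constraint $[\alpha,\beta]$, whose optimal control is the constant $\tilde\pi^{\ast}\equiv\text{Cap}(\pi_M,\alpha,\beta)$ with $\pi_M=\tfrac{\eta}{1-b}$ (the Black-Scholes case recalled after Corollary \ref{cor: unconstrained Heston}, going back to \citet{cvitanic1992}; note the optimal control there is deterministic, so it stays optimal in the enlarged filtration $\mathbbm{F}$); inverting, $\pi^{\ast}_z(t)=\tfrac{1}{\Sigma(z(t))}\text{Cap}(\pi_M,\alpha,\beta)$. For part (ii), since $b,\eta,\kappa,\theta,\sigma$ together with $[\alpha,\beta]$ satisfy Assumptions \ref{ass: condition box constraints} and \ref{ass: boundedness near maturity} by hypothesis, Theorem \ref{thm: strong verification in Heston market} applies to the transformed problem $\mathbf{(P)}$ in $\mathcal{M}_H$ and yields the optimal control $\pi^{\ast}$ stated there; inverting, $\pi^{\ast}_z(t)=\tfrac{\sqrt{z(t)}}{\Sigma(z(t))}\pi^{\ast}(t)$. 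I expect the main obstacle to be of the bookkeeping kind in steps (b)--(c): one must check that the change of control maps $\Lambda_{K(\cdot)}$ \emph{onto} (not merely into) the target admissible set and that the constraint sets translate \emph{exactly} into $[\alpha,\beta]$, since it is precisely these two facts that allow the optimality established in $\mathcal{M}_{BS}$ and in $\mathcal{M}_H$ to transfer verbatim; the SDE manipulation itself is a one-line substitution.
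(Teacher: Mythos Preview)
Your proposal is correct and follows essentially the same change-of-control argument as the paper: define $\tilde\pi=\Sigma(z)\pi$ (resp.\ $\tilde\pi=\Sigma(z)\pi/\sqrt{z}$), rewrite the wealth SDE so that it coincides with the dynamics in $\mathcal{M}_{BS}$ (resp.\ $\mathcal{M}_H$), observe that the constraint becomes the static interval $[\alpha,\beta]$, invoke the known optimiser (the capped Merton portfolio, resp.\ Theorem \ref{thm: strong verification in Heston market}), and invert. Your explicit checks (a)--(c) on progressive measurability, admissibility invariance and bijectivity, as well as your remark that the deterministic Black-Scholes optimiser remains optimal in the enlarged filtration $\mathbbm{F}$, are in fact more careful than the paper's own write-up, which simply asserts the equivalence and cites Kraft's Corollary 5.4 for the underlying idea.
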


	The statements of Theorem \ref{thm: optimal portfolio vola-inverse constraints} can easily be easily generalised to financial markets with $d>1$ risky assets by an analogous change of control argument. Using the results for constant volatility markets from Example 15.2 in \citet{cvitanic1992}, one can prove a multi-dimensional analogue to statement (i) and using the results for the PCSV model from Section \ref{subsec: PCSV Model}, one can prove a multi-dimensional analogue to statement (ii). For ease of presentation, we refrain from a detailed discussion of this generalisation.

\section{Numerical Studies}\label{sec: Numerical Studies}
	In this section, we illustrate the properties of the optimal portfolio $\pi^{\ast}$ for $\mathbf{(P)}$ in Heston's stochastic volatility model $\mathcal{M}_H,$ using a numerical example. In particular, we analyse the difference between $\pi^{\ast}$ and two suboptimal naive portfolio processes $\pi,$ which either directly follow the optimal portfolio process in $\mathcal{M}_{BS}$ (i.e. $\pi = \text{Cap}(\pi_M, \alpha,\beta)$) or apply the projection $\mathcal{P}^{BS}_K$ from $\mathcal{M}_{BS}$ to the optimal unconstrained portfolio $\pi_u$ in $\mathcal{M}_H$ (i.e. $\pi = \text{Cap}(\pi_u, \alpha,\beta)$). The suboptimality of these portfolios will be quantified using the concept of wealth-equivalent loss. \\
	
	Such an analysis is only meaningful if the differences between a financial market with stochastic (Heston) volatility $\mathcal{M}_H$ and a financial market with constant volatility $\mathcal{M}_{BS}$ are already reflected in the optimal unconstrained portfolios $\pi_u$ for $\mathcal{M}_H$ and $\pi_M$ for $\mathcal{M}_{BS}.$ Since allocation constraints further restrict the set of admissible portfolio allocations, any existing differences between $\pi_u$ and $\pi_M$ tend to be diminished further when adding allocation constraints. From an investor's perspective, the distinction between $\mathcal{M}_{BS}$ and $\mathcal{M}_H$ is only relevant if the volatility of risky asset log returns $\sqrt{z(t)}$ changes significantly and these changes are partially hedgeable through trading in the risky asset. This is the case if the volatility of the volatility ($\sigma$) is large, the mean reversion speed ($\kappa$) is small, and the correlation between risky asset and volatility diffusion ($\rho$) is close to either $1$ or $-1$ (i.e. $|\rho|$ is large). \\
	Based on these requirements, we choose the market parameters (see Table \ref{tab: base market parameters}) for our numerical example such that the resulting market dynamics resemble a financial crisis.
	\begin{table}[h]
		\centering
		\begin{tabular}[H]{l c|| c |  l}
			\textbf{Parameter} & & \textbf{Value} & \textbf{Explanation} \\ 
			\hline
			End of Investment-Horizon &$T$ & $1$ & Limited duration of financial crises \\
			Risk Aversion Parameter & $b$ & $-2.5$ & Within ranges estimated in Table 1, \citet{Koijen2014}\\
			Initial Wealth & $v_0$ & $1$ & For convenience \\
			Risk-Free Interest Rate & $r$ &$0$ & For convenience \\
			Market Price of Risk Driver & $\eta$ & $3.0071$ & Table 2, \citet{Escobar2021}\\
			Mean Reversion Speed &$\kappa$  & $3.15$ & Table 3 \enquote{\%MSE}, \citet{Moyaert2011}\\
			Volatility of Volatility & $\sigma$ & $0.76$ & Table 3 \enquote{\%MSE}, \citet{Moyaert2011}\\
			Correlation & $\rho$ & $-0.81$ & Table 3 \enquote{\%MSE}, \citet{Moyaert2011}\\
			Long-Term Mean & $\theta$ &$0.35$ & Feller's Condition\\
			Initial Variance & $z_0$ & $0.35$ & Chosen equal to $\theta$ \\
		\end{tabular}
		\caption{Base parameters for the financial market $\mathcal{M}_H.$}\label{tab: base market parameters}
	\end{table}
	The only volatility parameters which influence the optimal portfolio allocation are $\sigma,$ $\kappa,$ and $\rho.$ Our choices for these parameters in Table \ref{tab: base market parameters}, follow the calibration results of \citet{Moyaert2011}, who calibrated Heston's stochastic volatility model using option prices on the Eurostoxx 50 during the 2008 financial crisis. The relatively short investment horizon of $T=1$ year is chosen to reflect the limited duration of most financial crises.\footnote{\citet{Forbes2022} reported that the average length of an S\&P500 bear market (defined as a period with drawdown in excess of $20\%$) was $289$ days.} \\
	We quantify the sub-optimality of both naive portfolio processes in comparison to the optimal constrained portfolio process using the concept of wealth-equivalent loss (\enquote{WEL}). For an arbitrary portfolio process $\pi \in \Lambda_K,$ we define the expected utility functional $\Jp:[0,T]\times(0,\infty)\times(0,\infty) \rightarrow \R$ as
	\begin{align}\label{eq: expected utility functional}
		\Jp(t,v,z) = \E \left[ U\left(V^{v_0,\pi}(T)\right) \ | \ V^{v_0,\pi}(t)=v, \ z(t)=z \right].
	\end{align}
	When considering the optimal portfolio process $\pi^{\ast},$ the expected utility functional coincides with the value function of $\mathbf{(P)},$ i.e. $J^{\pi^{\ast}}(t,v,z) = \Phi(t,v,z)$ for all $(t,v,z) \in [0,T]\times(0,\infty)\times (0,\infty).$ The WEL $L^{\pi}=L^{\pi}(t,z)$ of $\pi$ is then defined as the solution to the equation\footnote{Since we exclusively work with power utility functions in this paper, we may without loss of generality assume that the WEL is independent of wealth.}
	\begin{align*}
		\Phi(t,v(1-L^{\pi}(t,z)),z) = 	\Jp(t,v,z).
	\end{align*}
	An investor following the optimal portfolio allocation $\pi^{\ast}$ only needs $(1-L^{\pi}(t,z))$ as much capital to achieve the same average utility as an investor following the sub-optimal strategy $\pi.$ In this sense, $L^{\pi}(t,z)$ can be interpreted as a relative loss incurred for investing sub-optimally.\footnote{If $\pi$ is deterministic and $J^{\pi}$ is the unique solution to the Feynman-Kac PDE, one can use an exponentially affine ansatz to characterise $J^{\pi}$ in terms of the solutions to a system of ODEs. If the ODE solutions are given, then the WEL $L^{\pi}(0,z_0)$ is known in closed form. We provide a description of this approach in Lemma \ref{lem: utility sub-optimal strategy} and Corollary \ref{cor: Wealth-Equivalent Loss Deterministic Strategy} in the supplementary material. In our studies, we approximated the corresponding ODE solutions by an Euler method.}
	
	\subsection{Optimal Constrained Merton Portfolio $\pi = \text{Cap}\left(\pi_M,\alpha,\beta\right)$}\label{subsec: Optimal Constrained Merton}
	In this subsection, we compare $\pi^{\ast}$ with the first naive portfolio process $\pi= \text{Cap}(\pi_M, \alpha, \beta).$ Although $\pi$ is static, it is at least known that $\pi$ is optimal for an allocation constrained portfolio optimisation problem in the Black-Scholes market $\mathcal{M}_{BS},$ whereas no theoretical guarantees were available for the corresponding optimisation problem in $\mathcal{M}_H$ prior to this work. During this analysis, we consider the allocation constraint $K=[\alpha,\beta] = [0,1],$ which corresponds to a no-borrowing constraint that prevents short-selling in the risk-free and risky asset.

	\begin{figure}[H]
		\centering
		\includegraphics[width=.45\linewidth]{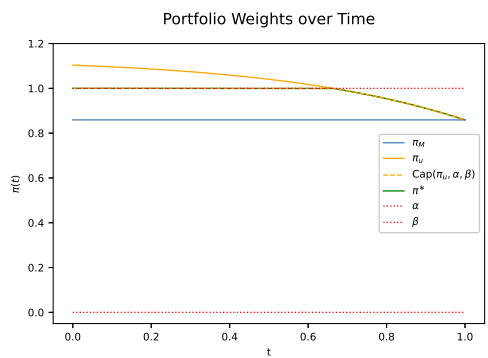}
		\caption{Portfolio weights $\pi(t)$ for $t\in [0,T]$, lower bound $\alpha = 0,$ upper bound $\beta = 1,$ and parameters as in Table \ref{tab: base market parameters}.}\label{fig: portfolio weights over time upper bound}
	\end{figure}
	
	Assuming the parameters in Table \ref{tab: base market parameters}, the Merton portfolio satisfies the constraints, i.e. $\pi_M \in [\alpha,\beta]$ and thus $\text{Cap}(\pi_M,\alpha,\beta) = \pi_M.$ In contrast to this constant allocation in the interior of $[\alpha,\beta],$ $\pi^{\ast}$ initially takes a constant value at the upper bound $\beta$ and later decreases towards $\pi_M$ at the end of the investment horizon. Therefore, other than $\text{Cap}(\pi_M,\alpha,\beta),$ $\pi^{\ast}$ is able to realise and benefit from a higher allocation to the risky asset. Note that $\pi_M \in [\alpha,\beta]$ implies $\pi^{\ast} = \text{Cap}(\pi_u,\alpha,\beta),$ as shown in Lemma \ref{lem: constrained portfolio equal capped portfolio}. \\

	In the following, we quantify the impact of the suboptimal allocations $\pi = \text{Cap}(\pi_M,\alpha,\beta)$ by computing the annual WEL $L^{\pi}(0,z)$ at the beginning of the investment horizon and analyse its sensitivity with respect to the risk-aversion parameter $b$ and the volatility drivers $\sigma,$ $\kappa$ and $\rho.$ The ranges of the volatility parameter are chosen to be within the minimum and maximum parameter values obtained in individual calibrations in Table 5 of \citet{Moyaert2011}.
	
	\begin{figure}[H]
		\begin{center}
			\subfigure[]{\label{fig: WEL sensitivity b}\includegraphics[width=.45\linewidth]{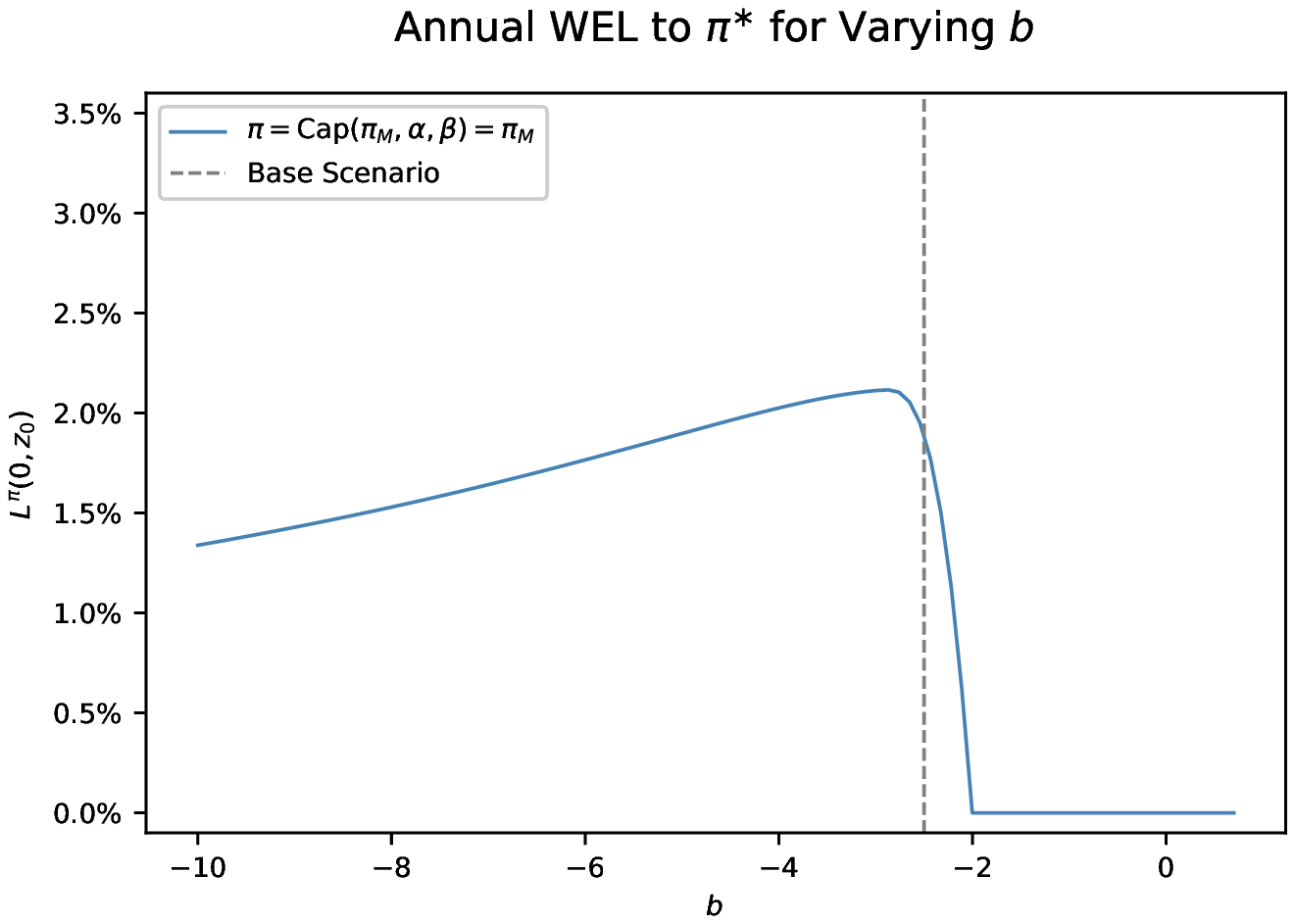}}
			\hspace{.05\linewidth}
			\subfigure[]{\label{fig: WEL sensitivity sigma}\includegraphics[width=.45\linewidth]{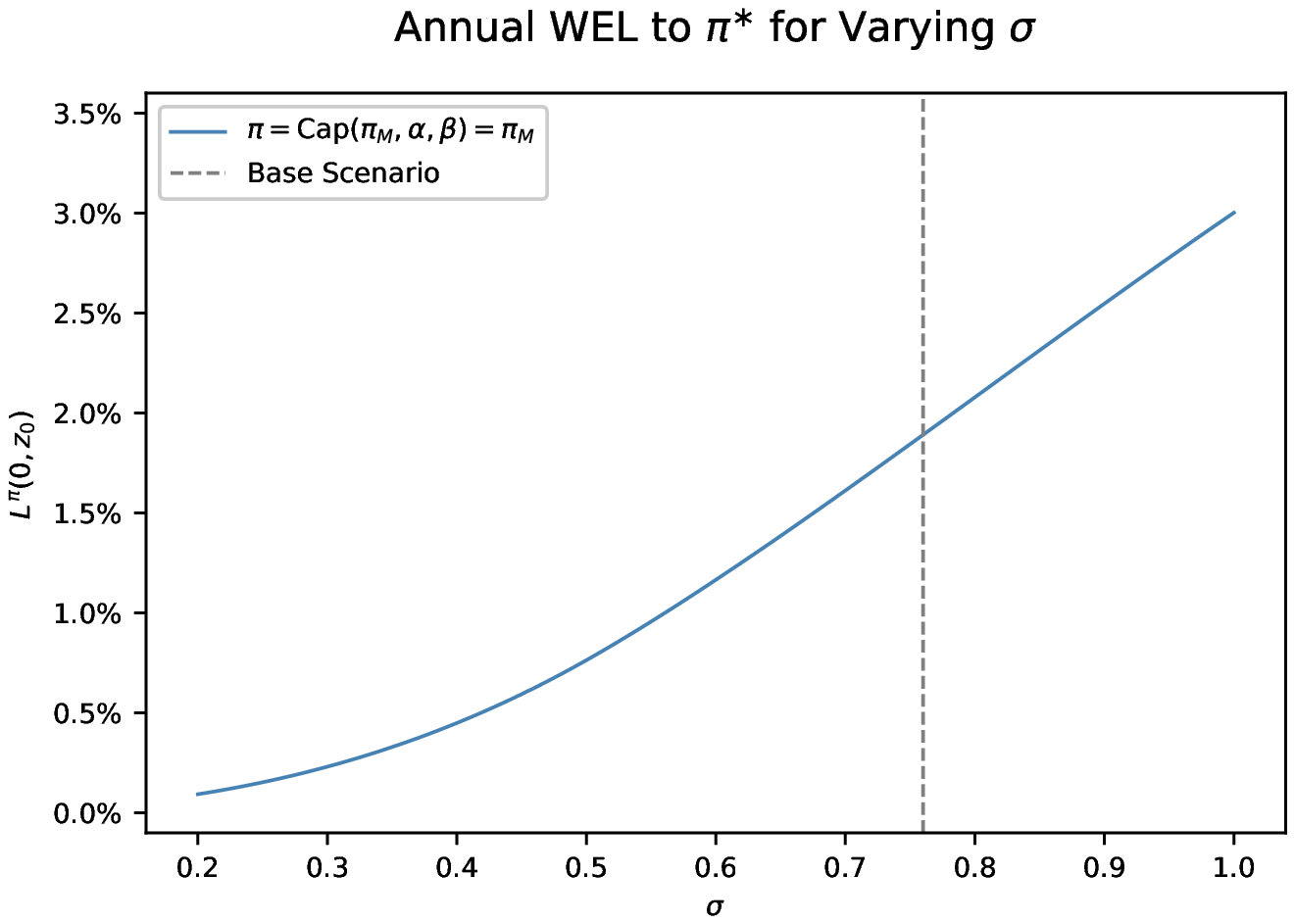}}
			\hspace{.05\linewidth}
			\subfigure[]{\label{fig: WEL sensitivity kappa}\includegraphics[width=.45\linewidth]{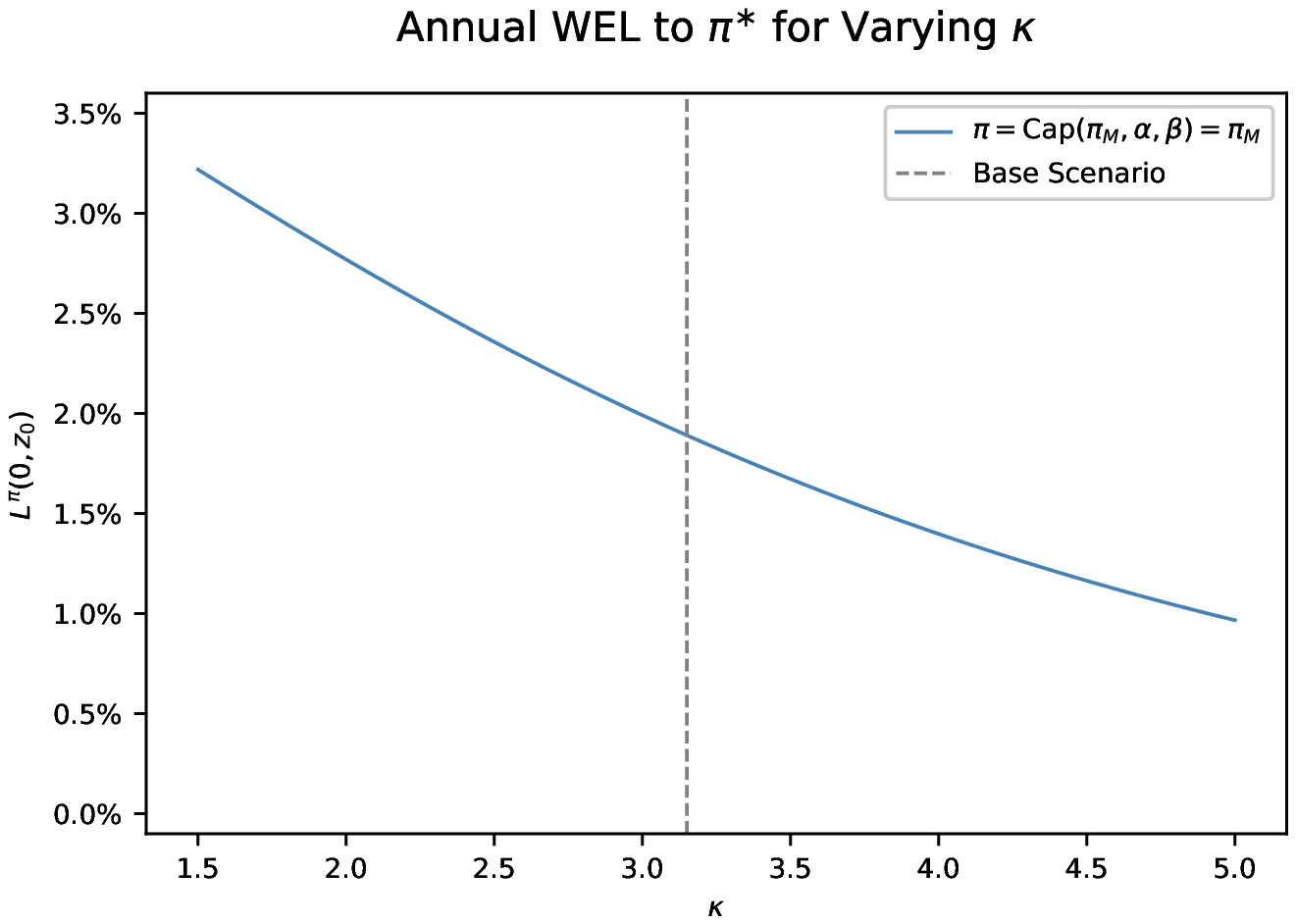}}
			\hspace{.05\linewidth}
			\subfigure[]{\label{fig: WEL sensitivity rho}\includegraphics[width=.45\linewidth]{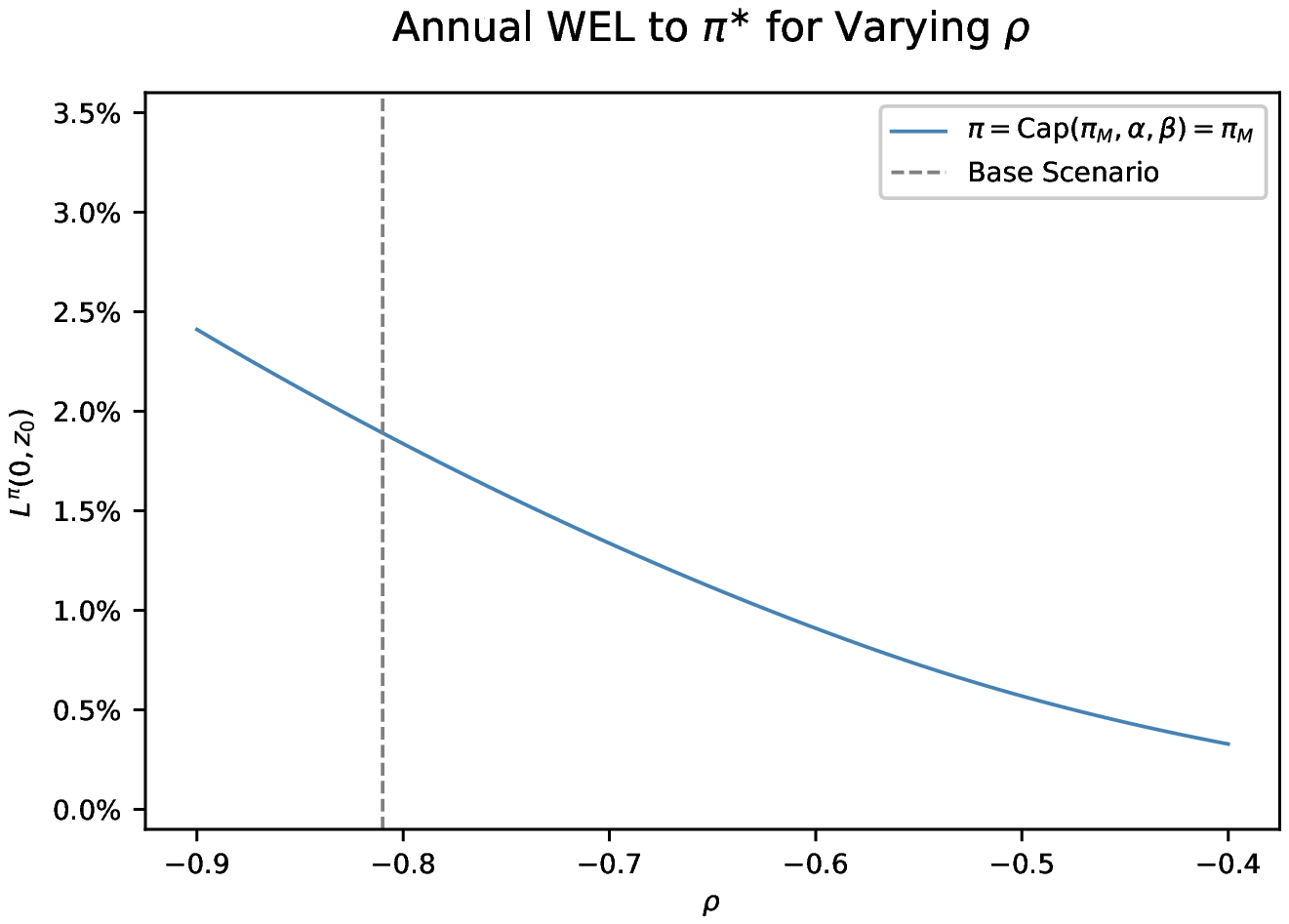}}
			\caption{Dependence of the annual wealth-equivalent loss $L^{\pi}(0,z_0)$ on individual parameters for $\pi = \text{Cap}(\pi_M,\alpha,\beta) = \pi_M,$ lower bound $\alpha = 0,$ upper bound $\beta = 1,$ and parameters as in Table \ref{tab: base market parameters}. Figure \ref{fig: WEL sensitivity b} illustrates the dependence on $b\in[-10,0.7]\backslash\{0\},$ whereas Figures \ref{fig: WEL sensitivity sigma}, \ref{fig: WEL sensitivity kappa} and \ref{fig: WEL sensitivity rho} display the dependence on $\sigma \in[0.2,1.0],$ $\kappa \in[1.5,5.0]$ and $\rho \in [-0.9,-0.4],$ respectively.}\label{fig: WEL Merton}
		\end{center}
	\end{figure}

	For small values of $b,$ where the allocation constraint $K = [0,1]$ is largely satisfied by the unconstrained portfolios $\pi_M$ and $\pi_u,$ the WELs displayed in Figure \ref{fig: WEL sensitivity b} are increasing in $b.$ However, as $b$ increases past an inflection point of approximately $b = -3,$ the allocation constraint $K$ becomes active. From this point onwards, $K$ forces $\pi^{\ast}$ and $\text{Cap}(\pi_M,\alpha,\beta)$ closer towards each other for increasing $b$ and therefore leads to decreasing WELs. Ultimately, for $b \geq -2,$ we have $\pi^{\ast}(t) = \text{Cap}(\pi_M,\alpha,\beta) = \beta = 1$ for all $t \in [0,T]$ and thus the WEL is zero.
	Figures \ref{fig: WEL sensitivity sigma}, \ref{fig: WEL sensitivity kappa} and \ref{fig: WEL sensitivity rho} display WELs which are increasing in $\sigma$ as well as decreasing in $\kappa$ and $\rho.$ This confirms the intuition voiced at the beginning of Section \ref{sec: Numerical Studies}, in which we argued that $\mathcal{M}_H$ is \enquote{more different} to $\mathcal{M}_{BS}$ under these circumstances. Within the chosen parameter ranges, we observe the largest annual WEL of $3.2\%$ for small $\kappa,$ whereas increasing $\sigma$ and decreasing $\rho$ leads to WELs of $3.0\%$ and $2.5\%,$ respectively. Changing any one of the values of the volatility parameters $\sigma,$ $\kappa$ or $\rho$ to less extreme levels, which are obtained during calibrations on long-term data sets\footnote{For an overview, consider e.g. Table 4 in \cite{Escobar-Anel2016}. Here, the authors consider values of $\kappa = 3.5,$ $\sigma = 0.3,$ $\rho = -0.4$ as an \enquote{Average Case} for their reviewed literature.} leads to significant decreases in annual WELs. Even if only $\sigma<0.5,$ while $\kappa$ and $\rho$ remain at crisis level, the annual WEL still drops to values of $0.75\%$ or lower. Note that Feller's condition is satisfied for all parameter values that were considered in our analysis.
	
	\subsection{Capped Optimal Unconstrained Heston Portfolio $\pi = \text{Cap}\left(\pi_u,\alpha,\beta\right)$}\label{subsec: Capped Optimal Unconstrained Heston Portfolio}
	In this subsection, we compare $\pi^{\ast}$ to the second naive portfolio process, the capped optimal unconstrained Heston portfolio $\text{Cap}\left(\pi_u,\alpha,\beta\right).$ In particular, we would like to illustrate the phenomenon described in Section \ref{subsec: Comparison to Unconstrained Portfolio}. Despite having a theoretical guarantee that $\text{Cap}\left(\pi_u,\alpha,\beta\right)$ is indeed different from $\pi^{\ast}$ for certain parameter settings, these differences appear to be mostly meaningful in terms of WEL for extreme market scenarios in combination with specific, large lower bounds $\alpha.$ According to Lemma \ref{lem: equivalence gap} and Corollary \ref{cor: stationary case}, we know that $\text{Cap}\left(\pi_u,\alpha,\beta\right)$ and $\pi^{\ast}$ are identical in the parameter setting of Table \ref{tab: base market parameters}, unless $\pi_M < \alpha.$ Further, Corollary \ref{cor: stationary case} suggests that we should consider market parameters which ensure $\pi_u(t)>2\pi_M$ for some $t\in [0,T].$ \\
	For these reasons, we adjust the previously considered parameter setting. In the following, we choose the most extreme volatility parameters from the sensitivity analysis in Figure \ref{fig: WEL Merton}, i.e. we set $\sigma = 1.0,$ $\kappa = 1.5$ and $\rho = -0.9,$ and increase the risk aversion coefficient to $b=-15$ to obtain realistic portfolio allocations.
	
	\begin{figure}[H]
		\begin{center}
			\subfigure[]{\label{fig: Portfolio Gap Heston Scale}\includegraphics[width=.45\linewidth]{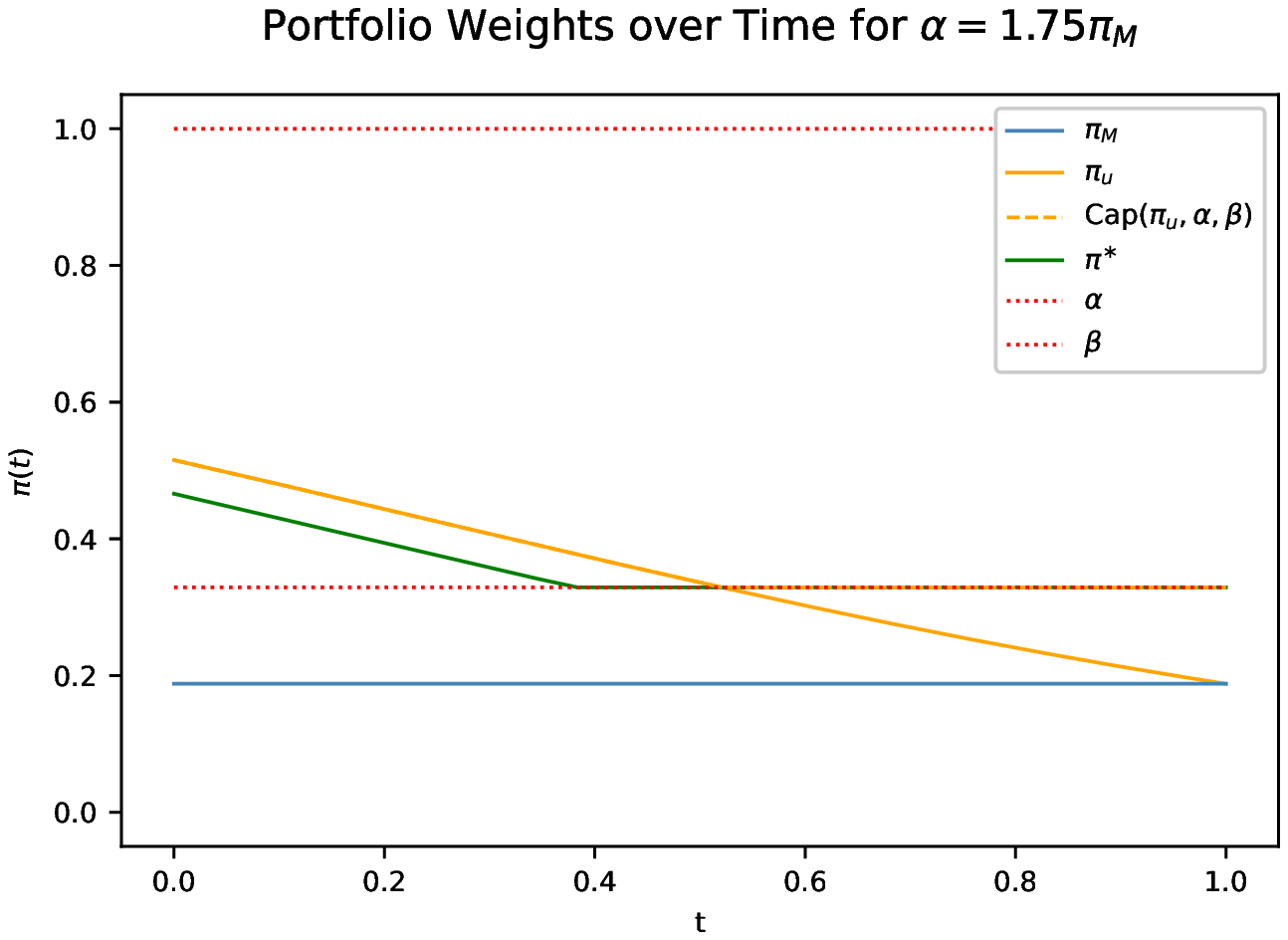}}
			\hspace{.05\linewidth}
			\subfigure[]{\label{fig: Portfolio Gap Heston Stationary}\includegraphics[width=.45\linewidth]{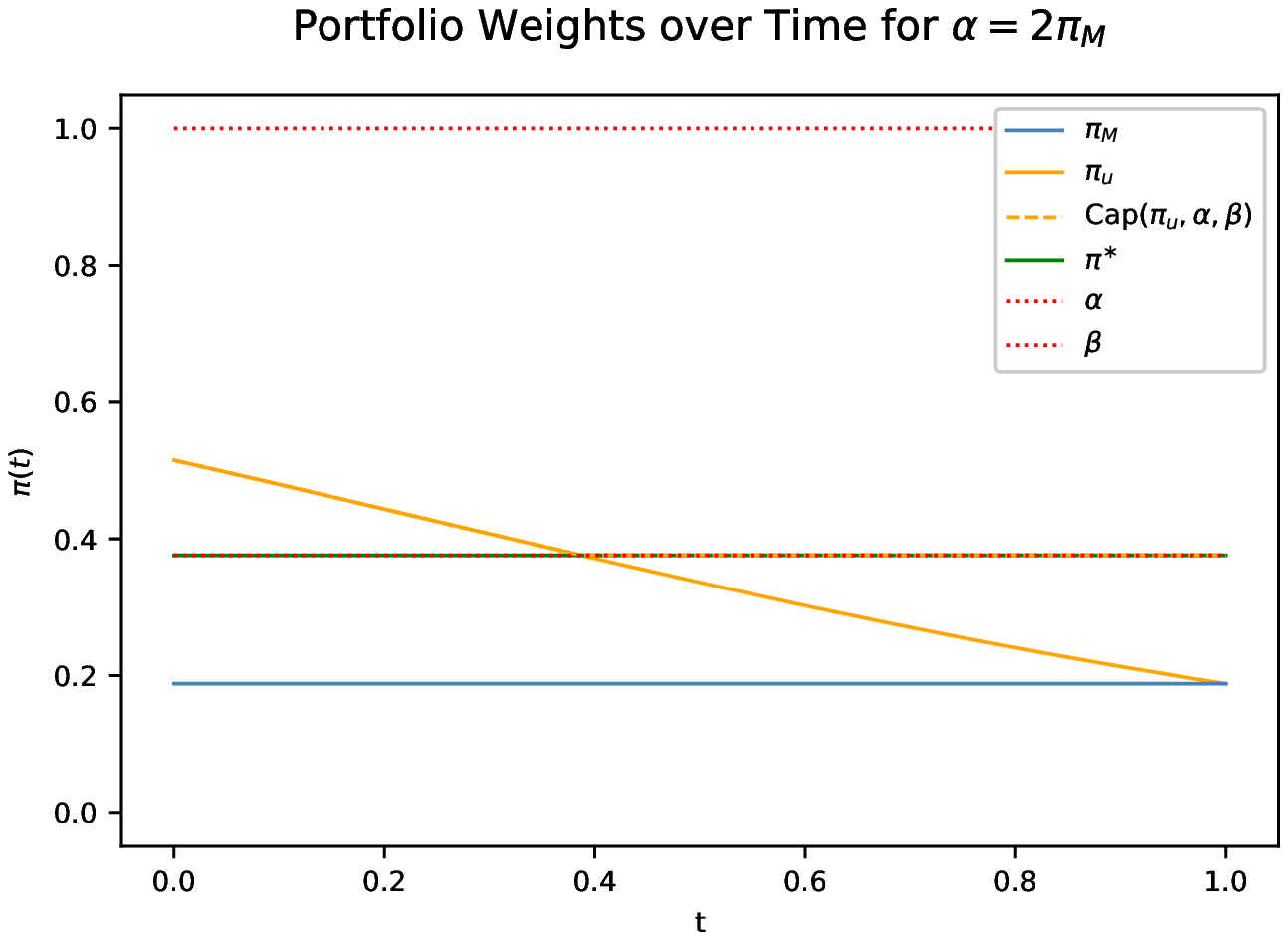}}
			\caption{Portfolio weights $\pi(t)$ for $t\in [0,T],$ upper bound $\beta = 1,$ and parameters as in Table \ref{tab: base market parameters}, except for $b=15,$ $\sigma = 1.0,$ $\kappa = 1.5$ and $\rho = -0.9.$ Figure \ref{fig: Portfolio Gap Heston Scale} considers a lower bound $\alpha = 1.75\pi_M,$ and Figure \ref{fig: Portfolio Gap Heston Stationary} considers a lower bound $\alpha = 2\pi_M.$}\label{fig: Portfolio Gap Heston}
		\end{center}
	\end{figure}
	
	Figure \ref{fig: Portfolio Gap Heston} compares the portfolio weights of $\pi^{\ast}(t)$ and $\text{Cap}(\pi_u(t),\alpha,\beta)$ for lower bounds $\alpha \in \{1.75\pi_M, 2\pi_M\}$ such that the corresponding ODE solution $B$ is (nearly) constant, as described in Corollary \ref{cor: stationary case}. If $\alpha = 1.75\pi_M$ then $\pi^{\ast}$ is initially larger than the lower bound $\alpha,$ then decreases until $\alpha$ is reached, while $\pi^{\ast}$ is constant for $\alpha = 2 \pi_M.$ Additionally, we observe that
	$$
	\pi^{\ast}(t) \leq  \text{Cap}(\pi_u(t),\alpha,\beta) \quad  \forall t\in [0,T],
	$$
	with equality only if $\pi_u(t)\leq \alpha.$  In both cases illustrated in Figure \ref{fig: Portfolio Gap Heston}, $\pi^{\ast}$ lowers the portfolio allocation early throughout the investment horizon, thus accounting for the fact that the lower bound forces the portfolio allocation to be larger than the optimal unconstrained allocation later during the investment horizon.

	\begin{figure}[H]
		\begin{center}
			\subfigure[]{\label{fig: Portfolio Gap Heston Varying alpha}\includegraphics[width=.45\linewidth]{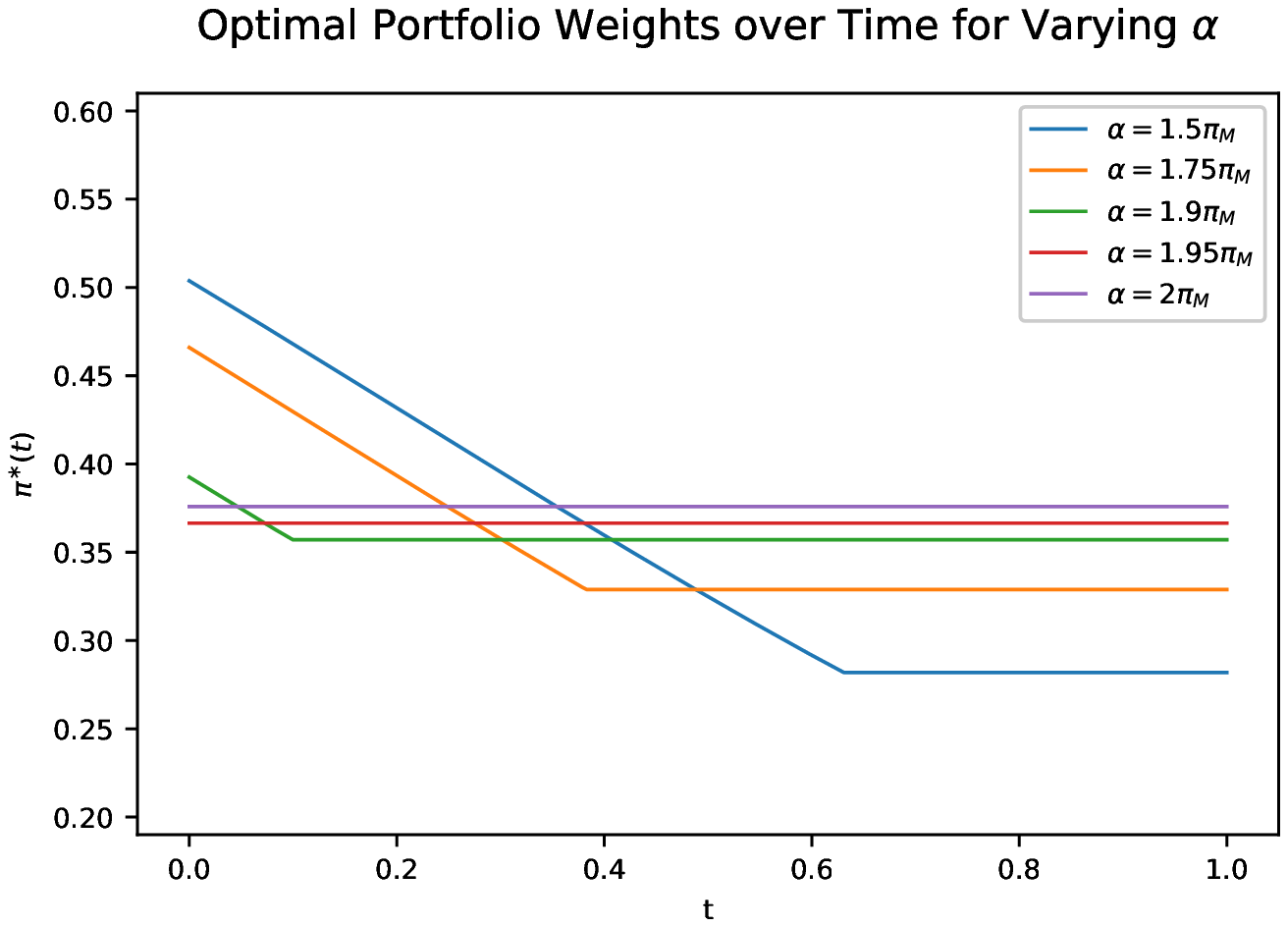}}
			\hspace{.05\linewidth}
			\subfigure[]{\label{fig: WEL Heston Varying alpha}\includegraphics[width=.45\linewidth]{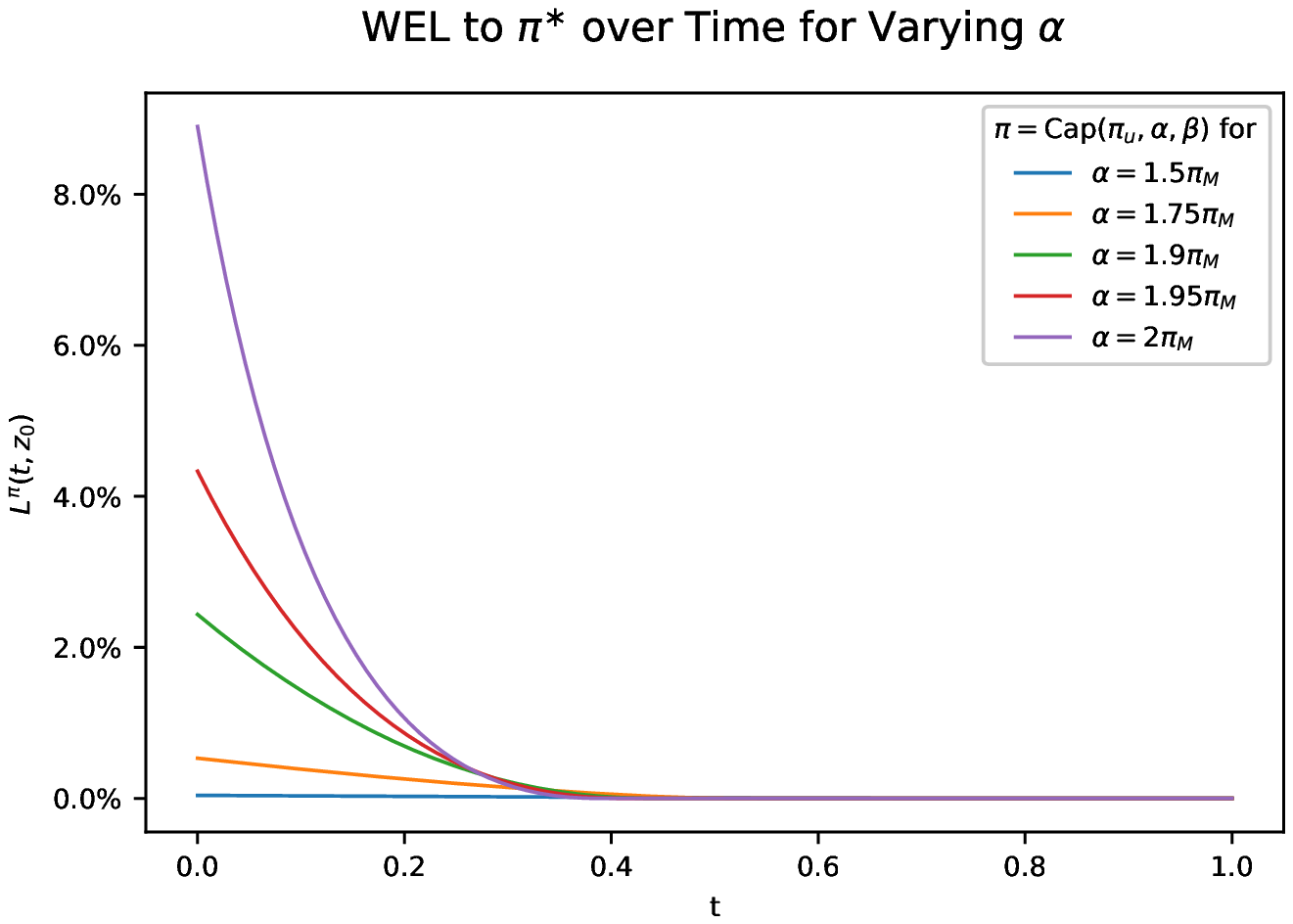}}
			\caption{Lower bounds $\alpha \in \{1.5\pi_M, 1.75\pi_M, 1.9 \pi_M, 1.95\pi_M, 2\pi_M\},$ upper bound $\beta = 1,$ $t\in [0,T],$ and parameters as in Table \ref{tab: base market parameters}, except for $b=15,$ $\sigma = 1.0,$ $\kappa = 1.5$ and $\rho = -0.9.$ Figure \ref{fig: Portfolio Gap Heston Varying alpha} displays the portfolio weights $\pi^{\ast}(t),$ and Figure \ref{fig: WEL Heston Varying alpha} displays the WEL $\L^{\pi}(t,z_0)$ for $\pi = \text{Cap}\left(\pi_u,\alpha,\beta\right).$}\label{fig: Illustration Portfolio and WEL varying alpha}
		\end{center}
		\
	\end{figure}
	Figure \ref{fig: Portfolio Gap Heston Varying alpha} illustrates the behaviour of $\pi^{\ast}$ for varying $\alpha.$ When increasing $\alpha,$ we observe that $\pi^{\ast}$ decreases to the lower bound at earlier time points $t.$ Despite $\pi^{\ast}$ being constant for $\alpha = 1.95\pi_M,$ the solution $B$ to the ODE (\ref{eq: ODE zones}) for $B$ is not stationary, but $\rho B(\tau)$ does not leave zone $Z_{-}$ for $\tau \leq T.$ Thus, we expect $\pi^{\ast}$ to be constant for all $\alpha \in [1.95\pi_M,2\pi_M]$ in our parameter setting. Figure \ref{fig: Portfolio Gap Heston Varying alpha} displays WELs $L^{\pi}(t,z_0)$ of $\pi = \text{Cap}(\pi_u,\alpha,\beta),$ which are increasing in $\alpha$ at $t=0.$ However, this monotonicity does not hold throughout the entire investment horizon, as increasing the lower bound $\alpha$ implies that $\pi^{\ast}$ and $\text{Cap}(\pi_u,\alpha,\beta)$ coincide for longer parts of the investment horizon. \\
	Clearly, Figures \ref{fig: Portfolio Gap Heston} and \ref{fig: Illustration Portfolio and WEL varying alpha} suggest a strong link between the value of $\alpha$ and the difference between $\pi^{\ast}$ and $\text{Cap}(\pi_u,\alpha,\beta).$ Therefore, we quantify this difference not only using WELs, but additionally define the maximum absolute weight difference between $\pi^{\ast}$ and a portfolio $\pi$ as
	\begin{align}\label{eq: max difference}
		\Delta^{\pi}_{\max}:= \max_{t\in [0,T]}\Big |\pi(t)-\pi^{\ast}(t)\Big |.
	\end{align}
	The relationship between the lower bound $\alpha$ and the maximum absolute difference $\Delta^{\pi}_{\max}$ and the annual WEL $L^{\pi}(0,z_0)$ are analysed in Figure \ref{fig: max Gap and WEL Heston}.
	
	\begin{figure}[H]
		\begin{center}
			\subfigure[]{\label{fig: max Gap Heston}\includegraphics[width=.45\linewidth]{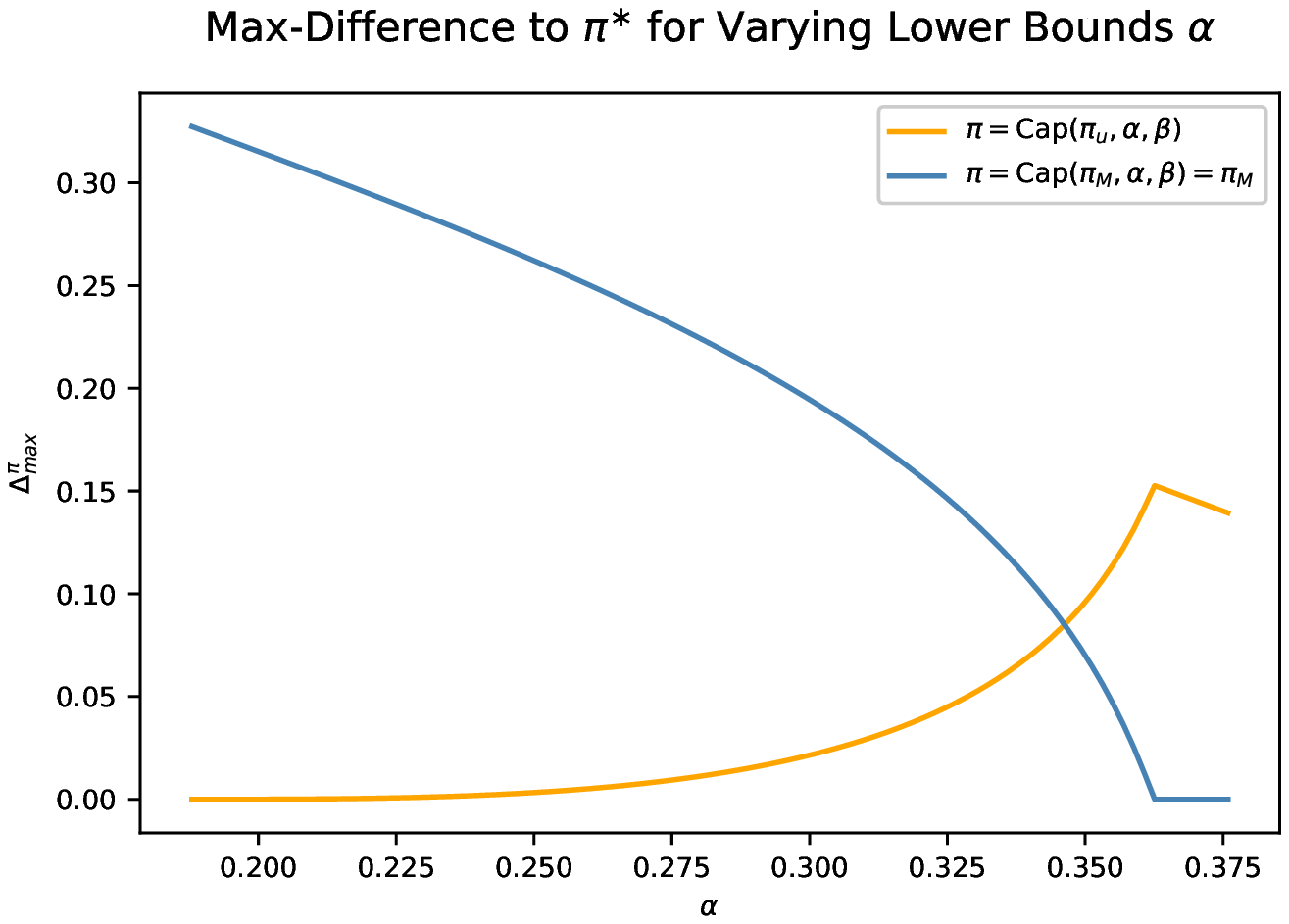}}
			\hspace{.05\linewidth}
			\subfigure[]{\label{fig: max WEL Heston}\includegraphics[width=.45\linewidth]{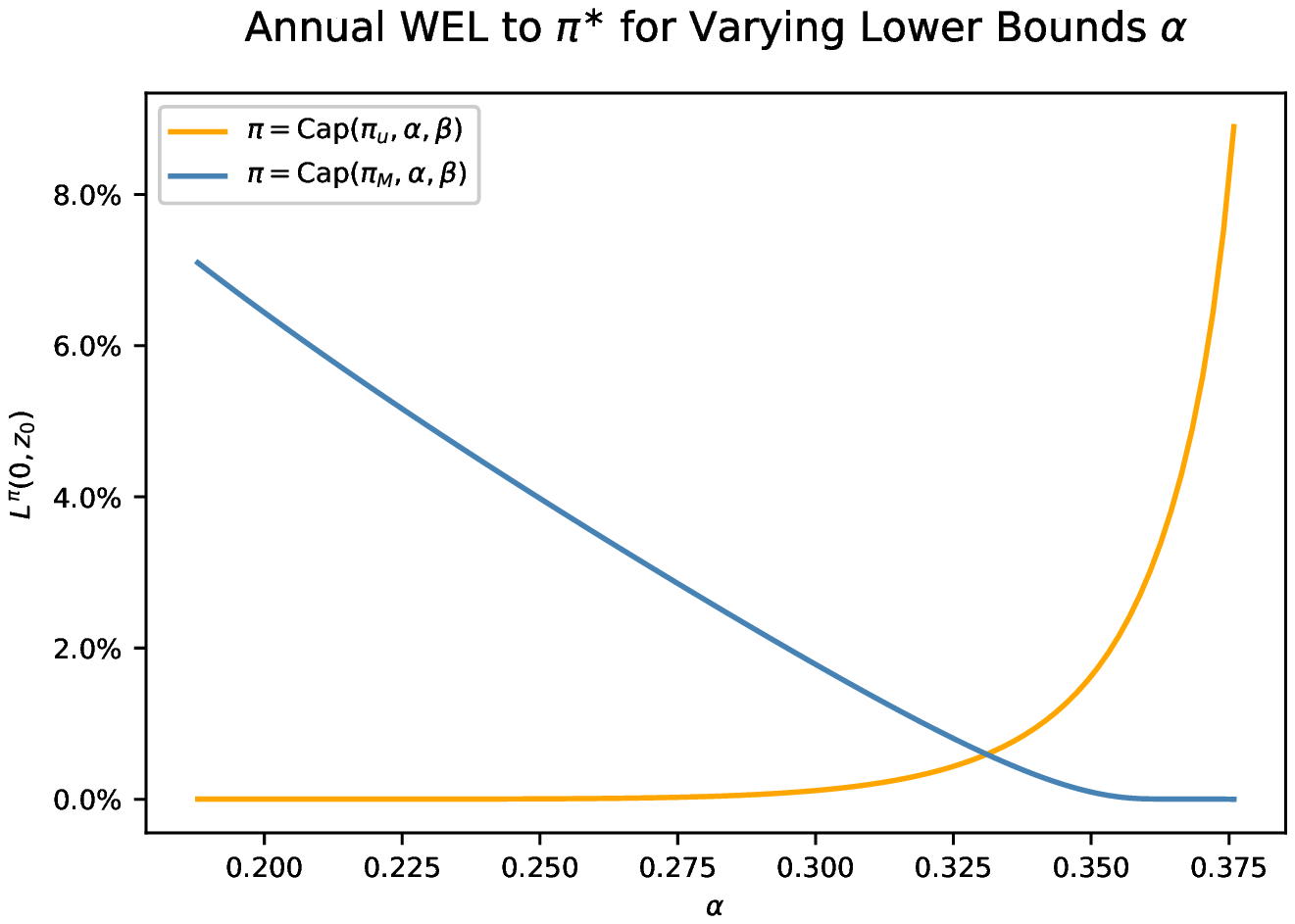}}
			\caption{Lower bounds $\alpha \in [\pi_M, 2\pi_M],$ upper bound $\beta = 1,$ $\pi= \text{Cap}\left(\pi_u, \alpha,\beta\right),$ and parameters as in Table \ref{tab: base market parameters}, except for $b=15,$ $\sigma = 1.0,$ $\kappa = 1.5$ and $\rho = -0.9.$ For $\pi = \text{Cap}\left(\pi_u,\alpha,\beta\right),$ Figure \ref{fig: max Gap Heston} displays the maximum absolute difference $\Delta^{\pi}$ to the portfolio weights $\pi^{\ast}(t),$ and Figure \ref{fig: max WEL Heston} displays the WEL $\L^{\pi}(0,z_0)$ to $\pi^{\ast}.$}\label{fig: max Gap and WEL Heston}
		\end{center}
	\end{figure}
	
	For $\pi = \text{Cap}(\pi_u,\alpha,\beta),$ the maximum absolute difference $\Delta^{\pi}_{\max}$ is generally increasing with $\alpha$ except for larger lower bounds $\alpha.$ For large $\alpha,$ the optimal constrained portfolio $\pi^{\ast}$ is constant throughout the investment horizon, as illustrated in Figure \ref{fig: Portfolio Gap Heston Varying alpha}. Then, the monotonicity of $\pi_u$ (see Remark \ref{rem: monotonicty of B}) and $\pi^{\ast}(T) = \alpha = \text{Cap}(\pi_u(T),\alpha,\beta)$ ensure that the maximum in (\ref{eq: max difference}) is attained at $t=0.$ Further, $\pi_u(0)>2\pi_M\geq \alpha$ does not depend on $\alpha$.  Therefore, if $\pi^{\ast}(t) = \alpha$ for all $t\in[0,T],$ then the difference 
	$$
	\Delta^{\text{Cap}(\pi_u,\alpha,\beta)}_{\max} = \max_{t\in [0,T]}\Big |\text{Cap}(\pi_u(t),\alpha,\beta)-\pi^{\ast}(t)\Big | = \Big | \pi_u(0)-\alpha \Big | = \pi_u(0)-\alpha
	$$
	decreases linearly in $\alpha,$ which causes a slight kink in Figure \ref{fig: max Gap Heston} for large lower bounds $\alpha.$ Irrespectively, the annual WEL for $\pi = \text{Cap}(\pi_u,\alpha,\beta)$ is increasing with $\alpha.$ However, note that for all but very large lower bounds (e.g. $\alpha \geq 1.75\pi_M$), the annual WEL is still negligible.

\section{Conclusion}\label{sec: Conclusion}
In this paper, we considered a portfolio optimisation problem with allocation constraints in Heston's stochastic volatility model. We derived an explicit expression for the optimal portfolio and analysed its properties. Surprisingly, this portfolio can be different from the naive constrained portfolio which caps off the optimal unconstrained portfolio at the boundaries of the constraint. In light of this fact, we have shown that the addition of allocation constraints can have a fundamentally different impact on the optimal portfolio in markets with stochastic volatility as compared to a Black-Scholes market with constant volatility - even in financial markets with only one risky asset. Irrespective of these theoretical certainties, we observed in a numerical study that the annual wealth equivalent loss incurred due to trading according to this naive portfolio is relatively small for the majority of realistic scenarios. In this sense, the naive \enquote{capped} portfolio is nearly optimal for most applications. However, in turbulent financial markets, such as the financial crisis of 2008, investors with a high degree of risk aversion and lower bound on their portfolio allocation can suffer high wealth equivalent losses. For such scenarios, investors should be mindful of the optimal constrained portfolio and the naive capped portfolio.

\bibliographystyle{rQUF}
\bibliography{references}

\appendix 

\section{Proofs}\label{sec: App. Proofs}
\vspace{0.25cm}
\begin{proof}[Proof of Lemma \ref{lem: dual ODEs Heston Model MK23}]
	Follows immediately from Lemma 4.5 in \citet{EKZ2023} with $d=m=1$ and $\Sigma_1 = 1.$
\end{proof}
\vspace{0.25cm}
\begin{proof}[Proof of Lemma \ref{lem: optimal portfolio Heston MK23}]
	Noting that $G_v(t,v,z) = \frac{b}{v}G(t,v,z)$ and $G_z(t,v,z) = B(T-t)G(t,v,z)$, this result follows immediately from Theorem 3.12 and Lemma 4.5 in \citet{EKZ2023} with $d=m=1$ and $\Sigma_1 = 1.$
\end{proof}
\vspace{0.25cm}
\begin{proof}[Proof of Lemma \ref{lem: ODE B for Box constraints}]
	The main objective of this proof is to determine the minimising argument $\la(B)$ for the minimisation problem
	$$\inf_{\lambda \in \R}\underbrace{\left( 2(1-b)\delta_K(\lambda) + \left(\eta + \lambda + \sigma \rho B \right)^2\right)}_{=: D(\lambda, B) }= \inf_{\lambda \in \R}D(\lambda, B)$$
	for any fixed value $B \in \R$. The remaining statement follows immediately after substituting the obtained minimiser $\la(B) = \la(B(\tau))$ in (\ref{eq:  ODE B constr. Heston}). \\
	
	Given any constraints $K=[\alpha, \beta]$ with $-\infty \leq \alpha \leq \beta \leq \infty$ (see (\ref{eq: 1-d convex constraints as interval})), we have for any $\lambda \in \R$
	$$\delta_K(\lambda) = -\inf_{\alpha \leq x \leq \beta}(x \lambda) = - \alpha \lambda \1_{\{ \lambda > 0 \}} - \beta \lambda \1_{\{ \lambda < 0 \}}.$$
	
	Thus, taking the derivative of $D(\cdot, B)$ on $(-\infty,0)$ and $(0,\infty)$ yields
	$$\frac{\partial}{\partial \lambda}D(\lambda, B) = \begin{cases}
		-2(1-b) \alpha + 2(\eta+\lambda + \sigma \rho B), &\quad \lambda > 0 \\
		-2(1-b) \beta + 2(\eta+\lambda + \sigma \rho B), &\quad \lambda < 0.
	\end{cases}$$
	Then, by the first-order optimality condition and the fact that $D(\cdot, B)$ is quadratic on $[0,\infty),$ 
	\begin{align*}
		\lambda^{-}(B) &= \big( (1-b)\alpha - (\eta+\sigma \rho B ) \big) \1_{\{ (1-b)\alpha - (\eta+\sigma \rho B )  > 0\}} =   \big( (1-b)\alpha - (\eta+\sigma \rho B ) \big) \1_{\{ \rho B <B_{-}  \}}
	\end{align*}
	minimises $D(\cdot, B)$ on $[0,\infty),$ and by the same argument 
	\begin{align*}
		\lambda^{+}(B) &=  \big( (1-b)\beta - (\eta+\sigma \rho B ) \big) \1_{\{  (1-b)\beta - (\eta+\sigma \rho B ) < 0\}} = \big( (1-b)\beta - (\eta+\sigma \rho B ) \big)\1_{ \{ \rho B > B_{+} \}}
	\end{align*}
	minimises $D(\cdot, B)$ on $(-\infty, 0]$. Therefore, noting that by construction 
	\begin{align}\label{eq: minimiser on both zones are better than lambda = 0}
		\max\big(D(\lambda^{-}(B),B), \ D(\lambda^{+}(B), B)\big) & =  \max\Big( \underbrace{\min_{\lambda \in (-\infty, 0]}D(\lambda,B)}_{\leq D(0,B)}, \underbrace{\min_{\lambda \in [0,\infty)}D(\lambda,B)}_{\leq D(0,B)}\Big) \nonumber \\
		& \quad \leq D(0,B) = (\eta+\sigma \rho B)^2 
	\end{align}
	and $B_{-}\leq B_{+}$, we finally derive 
	\begin{align*}
		\inf&_{\lambda \in \R}D(\lambda, B) \\
		&= \min\big(D(\lambda^{-}(B),B), \ D(\lambda^{+}(B), B)\big) \\
		&= \min \Big( \big[ 2(1-b)(-\alpha )\big((1-b)\alpha - \eta - \sigma \rho B \big)+\big((1-b)\alpha\big)^2\big]\1_{\{\rho B < B_{-}\}}+ \big[\eta + \sigma\rho B \big]^2 \1_{\{B_{-} \leq \rho B \}}, \\
		& \hspace{1.5cm} \big[ 2(1-b)(-\beta )\big((1-b)\beta  - \eta - \sigma \rho B \big)+\big((1-b)\beta\big)^2\big]\1_{\{ B_{+}<\rho B\}}+ \big[\eta + \sigma \rho B \big]^2\1_{\{\rho B\leq B_{+}\}} \Big) \\
		&\overset{(\ref{eq: minimiser on both zones are better than lambda = 0})}{=} \big[ 2(1-b)(-\alpha )\big((1-b)\alpha - \eta - \sigma \rho B \big)+\big((1-b)\alpha\big)^2\big] \1_{\{B_{-} \leq \rho B \}} +  \big[\eta + \sigma \rho B \big]^2 \1_{\{B_{-} \leq \rho B \leq B_{+}\}} \\
		&\qquad \quad + \big[ 2(1-b)(-\beta )\big((1-b)\beta  - \eta - \sigma \rho B \big)+\big((1-b)\beta\big)^2\big]\1_{\{ B_{+}<\rho B\}}\\
		&= D(\lambda^{-}(B), B)\1_{\{B_{-} \leq \rho B \}} + D(0, B)\1_{\{B_{-} \leq \rho B \leq B_{+}\}} + D(\lambda^{+}(B), B)\1_{\{ B_{+}<\rho B\}}.
	\end{align*}
	Thus, the minimiser $\la(B)$ is given as
	\begin{align*}
		\la(B) &=\lambda^{-}(B) \1_{\{\rho B < B_{-}\}} + \lambda^{+}(B) \1_{\{ B_{+} <\rho B\}} \\
		&=\big( (1-b)\alpha - (\eta+\sigma \rho B ) \big) \1_{\{\rho B < B_{-}\}} + \big( (1-b)\beta - (\eta+\sigma \rho B ) \big)\lambda^{+}(B) \1_{\{ B_{+} <\rho B\}}.
	\end{align*}
	Substituting $\la(B(\tau))$ in (\ref{eq:  ODE B constr. Heston}) and factoring $B(\tau)$ and $(B(\tau))^2$ concludes the proof.
\end{proof}
\vspace{0.25cm}
\begin{proof}[Proof of Theorem \ref{thm: solution constrained B in 1-D}]
	$B$ is specifically constructed in such a way that it is continuous and satisfies the corresponding Riccati ODE of (\ref{eq: ODE zones}), whenever $\rho B(\tau)$ is within each of the zones $Z_{-},$ $Z_0,$ or $Z_{+}$. Moreover, as the right hand side of the ODE (\ref{eq:  ODE B constr. Heston}) is continuous, so are the derivatives of the constructed $B$. Thus, $B$ is a solution to (\ref{eq: ODE zones}) and thereby (\ref{eq:  ODE B constr. Heston}).
\end{proof}
\vspace{0.25cm}
\begin{proof}[Proof of Lemma \ref{lem: A1}]
	By rewriting $\pi^{\ast}$ in terms of $\lambda^{\ast}$ and $B$, we see that 
	$$\frac{b \rho}{\sigma}\pi^{\ast}(t) + B(T-t) = \begin{cases}
		\frac{b\rho}{\sigma}\alpha + B(T-t), &\  \text{if } \rho B(T-t)< B_{-} \\
		\frac{b}{1-b}\frac{\rho}{\sigma}\eta +\left(1+\frac{b}{1-b}\rho^2\right)B(T-t),&\  \text{if } B_{-}\leq \rho B(T-t) \leq B_{+}\\
		\frac{b\rho}{\sigma}\beta + B(T-t), &\  \text{if }  B_{+}< \rho B(T-t)\\
	\end{cases}$$ 
	is non-decreasing in $B(T-t).$ Hence, as $B$ is itself monotonous (see Remark \ref{rem: monotonicty of B}), we have 
	\begin{align}\label{eq: two point maximisation}
		\sup_{t\in[0,T]}\left(\frac{b \rho}{\sigma}\pi^{\ast}(t) + B(T-t)\right) = \max \left( \frac{b \rho}{\sigma}\pi^{\ast}(T) + B(0),  \frac{b \rho}{\sigma}\pi^{\ast}(0) + B(T)\right).
	\end{align}
	Hence, when first assuming that the maximum in (\ref{eq: two point maximisation}) is attained at $t = T$, we obtain
	\begin{align*}
		\frac{b \rho}{\sigma}\underbrace{\pi^{\ast}(T)}_{\in [\alpha, \beta]} + \underbrace{B(0)}_{= 0} \leq \max \left(\frac{b \rho}{\sigma} \alpha, \frac{b \rho}{\sigma}\beta\right) \overset{(\ref{eq: boundedness assumption on constraints at t=T})}{\leq} \frac{\kappa}{\sigma^2}.
	\end{align*}
	On the other hand, if we assume that the maximum in (\ref{eq: two point maximisation}) is attained at $t=0$, we need to compare the different terminal values $B(T)$ based on the coefficients of the corresponding Riccati ODE at $T$. Due to the monotonicity of $B$ and the assumption that the maximum in (\ref{eq: two point maximisation}) is attained at $t=0$, it suffices if we can find a bound for $T\rightarrow \infty$. 
	\begin{itemize}
		\item Case $\rho B(T) \in Z_+$: \\
		Then, $\pi^{\ast}(0) = \beta$ and $B(T)$ is given as the solution (\ref{eq: solution Riccati ODE Filipovic}) to the Riccati ODE (\ref{eq: Riccati Filipovic ODE}) with coefficients $r^+_0,$ $r^+_1,$ and $r^+_2$. In particular, we have
		\begin{align*}
			\frac{b \rho}{\sigma}&\underbrace{\pi^{\ast}(0)}_{= \beta} + B(T)  \leq 	\frac{b \rho}{\sigma}\beta + \lim_{T\rightarrow \infty}B(T) \\
			&= \frac{b \rho}{\sigma}\beta + \lim_{T\rightarrow \infty}\left(\frac{2r^+_2r^+_3B_0+\left(e^{r^+_3T}-1\right)(r^+_1+r^+_3)\left(r^+_1+r^+_2B_0-r^+_3\right)}{2r^+_2r^+_3-r^+_2\left(e^{r^+_3T}-1\right)\left(r^+_1+r^+_2B_0-r^+_3\right)}\right) \\
			&= \frac{b \rho}{\sigma}\beta - \frac{r^+_1 + r^+_3}{r^+_2} = \frac{b \rho}{\sigma}\beta - \frac{b\sigma\rho \beta - \kappa}{\sigma^2} - \frac{r^+_3}{\sigma^2} = \frac{\kappa}{\sigma^2} - \underbrace{\frac{r^+_3}{\sigma^2}}_{\geq 0} \leq \frac{\kappa}{\sigma^2}.
		\end{align*}
		\item Case $\rho B(T) \in Z_{-}$: \\
		Analogous arguments as in the case $\rho B(T) \in Z_{+}$ yield that
		\begin{align*}
			\frac{b \rho}{\sigma}\pi^{\ast}(0)+ B(T) \leq \frac{\kappa}{\sigma^2}.
		\end{align*}
		\item Case $\rho B(T) \in Z_{0}$:
		Then, $\pi^{\ast}(0) = \frac{1}{1-b}\left(\eta + \sigma \rho B(T)\right)$ and $B(T)$ is given as the solution (\ref{eq: solution Riccati ODE Filipovic}) to Riccati ODE (\ref{eq: Riccati Filipovic ODE}) with coefficients $r_0,$ $r_1,$ and $r_2$. In particular, we have
		\begin{align*}
			\frac{b \rho}{\sigma}&\pi^{\ast}(0)+ B(T) =  	\frac{\rho}{\sigma}\frac{b}{1-b}\left(\eta + \sigma \rho B(T)\right) + B(T) = \frac{\rho}{\sigma}\frac{b}{1-b}\eta + \left(1+\frac{b}{1-b}\rho^2\right)B(T) \\
			& \leq \frac{\rho}{\sigma}\frac{b}{1-b}\eta + \left(1+\frac{b}{1-b}\rho^2\right)\underbrace{\lim_{T\rightarrow \infty}B(T)}_{=-\frac{r_1+r_3}{r_2}} = \frac{\rho}{\sigma}\frac{b}{1-b}\eta -\left(1+\frac{b}{1-b}\rho^2\right)\frac{r_1+r_3}{r_2}\\
			&= \frac{\rho}{\sigma}\frac{b}{1-b}\eta -\left(1+\frac{b}{1-b}\rho^2\right)\frac{r_1+r_3}{\sigma^2\left(1+\frac{b}{1-b}\rho^2\right)} = \frac{\rho}{\sigma}\frac{b}{1-b}\eta - \frac{r_1+r_3}{\sigma^2} \\
			&= \frac{\rho}{\sigma}\frac{b}{1-b}\eta - \frac{1}{\sigma^2}\left(\frac{b}{1-b}\eta \sigma \rho - \kappa \right) - \frac{r_3}{\sigma^2} = \frac{\kappa}{\sigma^2}- \underbrace{\frac{r_3}{\sigma^2}}_{\geq 0}\leq \frac{\kappa}{\sigma^2}.
		\end{align*}
	\end{itemize}
\end{proof}
\vspace{0.25cm}
\begin{proof}[Proof of Lemma \ref{lem: A2}]
	Note that we may express ODE (\ref{eq:  ODE B constr. Heston}) in terms of $\lambda^{\ast}$ and $\pi^{\ast}$ as
	\begin{align*}
		B'(\tau) = - \kappa B(\tau) + \frac{1}{2}\sigma^2B(\tau)^2 + \frac{1}{2}\frac{b}{1-b}\left(2(1-b)\delta_K\left(\lambda^{\ast}\left(B(\tau)\right)\right) + (1-b)^2\left(\pi^{\ast}(T-\tau)\right)^2\right).
	\end{align*}
	
	We distinguish between three cases, depending on whether the allocation constraint is active or not.
	\begin{itemize}
		\item Case $\pi^{\ast}(t)\in (\alpha, \beta)$: \\
		Then, $\la\left(B(T-t)\right)= \left(\la\right)'\left(B(T-t)\right) = \delta_K\left(\la\left(B(T-t)\right)\right) = 0$. Hence,
		\begin{align*}
			&\frac{1}{2}\frac{b}{1-b}\eta^2-\frac{1}{2}\frac{b}{1-b}\left(\la\left(B(T-t)\right) + \sigma \rho B(T-t) \right)^2 - \frac{1}{2}b^2\rho^2\left(\pi^{\ast}(t)\right)^2  \\
			& \quad + b \frac{\rho\kappa}{\sigma} \pi^{\ast}(t) + \frac{b}{1-b}\frac{\rho}{\sigma}\left[\left(\la\right)'\left(B(T-t)\right)+\sigma \rho \right]B'(T-t) \\
			= & \frac{1}{2}\frac{b}{1-b}\eta^2 - \frac{1}{2}\frac{b}{1-b}\left(\sigma \rho B(T-t) \right)^2 - \frac{1}{2}b^2\rho^2\left(\pi^{\ast}(t)\right)^2 \\
			& \quad + b \frac{\rho\kappa}{\sigma} \pi^{\ast}(t) + \frac{b}{1-b} \rho^2\left[-\kappa B(T-t)+ \frac{1}{2}\sigma^2\left(B(T-t)\right)^2 + \frac{1}{2}b(1-b)\left(\pi^{\ast}(t)\right)^2\right] \\
			= & \frac{1}{2}\frac{b}{1-b}\eta^2 -  \frac{1}{2}\frac{b}{1-b}\left(\sigma \rho B(T-t) \right)^2 + \frac{b}{1-b}\frac{\rho \kappa}{\sigma}\left(\eta + \sigma \rho B(T-t)\right) \\
			& \quad + \rho^2\frac{b}{1-b}\left[-\kappa B(T-t) +\frac{1}{2}\sigma^2\left(B(T-t)\right)^2\right] \\
			=& \frac{b}{1-b}\eta\left(\frac{\eta}{2} + \frac{\rho \kappa}{\sigma}\right)
		\end{align*}

		\item Case $\pi^{\ast}(t)= \alpha$: \\
		Then, $\la\left(B(T-t)\right) = (1-b)\alpha -\eta - \sigma \rho B(T-t)$ and
		$$
		\underbrace{\left[\left(\la\right)'\left(B(T-t)\right)+\sigma \rho\right]}_{= 0} B'(T-t) = 0.
		$$
		Hence, 
		\begin{align*}
			&\frac{1}{2}\frac{b}{1-b}\eta^2-\frac{1}{2}\frac{b}{1-b}\overbrace{\left(\la\left(B(T-t)\right) + \sigma \rho B(T-t) \right)^2}^{= \left((1-b)\alpha - \eta\right)^2} - \frac{1}{2}b^2\rho^2\underbrace{\left(\pi^{\ast}(t)\right)^2}_{=\alpha^2}  \\
			& \quad + b \frac{\rho\kappa}{\sigma} \underbrace{\pi^{\ast}(t)}_{= \alpha} + \frac{b}{1-b}\frac{\rho}{\sigma}\underbrace{\left[\left(\la\right)'\left(B(T-t)\right)+\sigma \rho\right]B'(T-t)}_{= 0} \\
			= & \frac{1}{2}\frac{b}{1-b}\eta^2-\frac{1}{2}\frac{b}{1-b}\left((1-b)\alpha - \eta\right)^2 - \frac{1}{2}b^2\rho^2\alpha^2 + b \frac{\rho\kappa}{\sigma}\alpha \\
			= & \frac{1}{2}\frac{b}{1-b}\eta^2-\frac{1}{2}\frac{b}{1-b}\left((1-b)^2\alpha^2-2(1-b)\alpha\eta + \eta^2\right) - \frac{1}{2}b^2\rho^2\alpha^2 + b \frac{\rho\kappa}{\sigma}\alpha \\
			=& -\frac{1}{2}\frac{b}{1-b}\left((1-b)^2\alpha^2-2(1-b)\alpha\eta \right) - \frac{1}{2}b^2\rho^2\alpha^2 + b \frac{\rho\kappa}{\sigma}\alpha \\
			=&-\frac{1}{2}b(1-b)\alpha^2 + b\alpha \eta - \frac{1}{2}b^2\rho^2\alpha^2+b\frac{\rho \kappa}{\sigma} \alpha \\
			=&b\alpha \left[\eta - \frac{1}{2}\alpha + \frac{\rho\kappa}{\sigma}+\frac{1}{2}\alpha b (1-\rho^2)\right].
		\end{align*}
		
		\item Case $\pi^{\ast}(t)= \beta$: \\
		Following the same arguments as in the case $\pi^{\ast}(t)= \alpha$ yields
		\begin{align*}
			&\frac{1}{2}\frac{b}{1-b}\eta^2-\frac{1}{2}\frac{b}{1-b}\left(\la\left(B(T-t)\right) + \sigma \rho B(T-t) \right)^2 - \frac{1}{2}b^2\rho^2\left(\pi^{\ast}(t)\right)^2  \\
			& \quad + b \frac{\rho\kappa}{\sigma} \pi^{\ast}(t) + \frac{b}{1-b}\frac{\rho}{\sigma}\left[\left(\la\right)'\left(B(T-t)\right)+\sigma \rho\right] B'(T-t)\\
			= &b\beta \left[\eta - \frac{1}{2}\beta + \frac{\rho\kappa}{\sigma}+\frac{1}{2}\beta b (1-\rho^2)\right].
		\end{align*}
	\end{itemize}
	Combining all three cases with Assumption \ref{ass: condition box constraints} finally yields the claim.	
\end{proof}
\vspace{0.25cm}
\begin{proof}[Proof of Theorem \ref{thm: strong verification in Heston market}]
	Let B be as in Theorem \ref{thm: solution constrained B in 1-D},
	$$
	A(\tau) = rb\tau + \kappa \theta \int_0^{\tau}B(s)ds.
	$$
	and define $G:[0,T]\times (0,\infty)^2\rightarrow \R$ as $G(t,v,z) = \frac{1}{b}v^b \exp(A(T-t)+B(T-t)z)$. Then, $A$ and $B$ are solutions to the ODEs (\ref{eq:  ODE A constr. Heston}) and (\ref{eq:  ODE B constr. Heston}) and thus continuous and bounded on $[0,T].$ Moreover, according to Lemma \ref{lem: dual ODEs Heston Model MK23}, G is a solution to the HJB PDE (\ref{eq: constr. HJB PDE Heston}). Hence, as per Lemma \ref{lem: optimal portfolio Heston MK23}, it only remains to verify that the sequence 
	$$
	\big(G(\tau^{0}_{n,t},V^{v_0, \pi^{\ast}}(\tau^{0}_{n,t}),z(\tau^{0}_{n,t}))\big)_{n\in \N}
	$$
	is uniformly integrable for all $t\in [0,T]$. Let $\mathcal{T}$ denote the set of all $\mathbb{F}$ stopping times taking values in $[0,T]$. We verify the stronger statement that
	$$
	\big(G(\tau,V^{v_0, \pi^{\ast}}(\tau),z(\tau))\big)_{\tau \in \mathcal{T}}
	$$
	is bounded in $L^q$ for some $q>1$ (see e.g. Theorem 4.6.2 in \citet{Durrett2019}). The argument is an adaptation of the proof of Theorem 5.3 in \citet{Kraft2005}. Consider arbitrary $t\in[0,T]$ and $q = 1+\epsilon$ with $\epsilon >0$ and define
	$$
	D(t) = bq\sqrt{1-\rho^2}\int_0^t \pi^{\ast}(s)\sqrt{z(s)}d\hat{W}(s).
	$$ Then, by defining the  deterministic, continuous function 
	$$det_1(t) = q\left(-\ln(|b|)+ b \ln(v_0) + brt  + A(T-t)\right),$$
	we obtain
	\begin{align}
		q \ln &\left( \left | G\left(t,V^{v_0, \pi^{\ast}}(t),z(t)\right)\right | \right) \nonumber\\
		&\hspace{0.25cm}= q\left(-\ln(|b|)+ b \ln\left(V^{v_0, \pi^{\ast}}(t)\right) + A(T-t) + B(T-t)z(t)\right) \nonumber \\
		&\hspace{0.25cm}= q\Bigg(-\ln(|b|)+ b \ln(v_0) + b \int_0^t\big( r+\eta\pi^{\ast}(s)z(s)-\frac{1}{2}\left(\pi^{\ast}(s)\right)^2z(s)\big)ds \nonumber\\
		& \hspace{0.25cm} \qquad  + b\int_0^t\pi^{\ast}(s)\sqrt{z(s)}\underbrace{dW(s)}_{\mathclap{=\rho dW^z(s)+\sqrt{1-\rho^2}d\hat{W}(s)}} + A(T-t) + B(T-t)z(t)\Bigg) \nonumber \\
		&\hspace{0.25cm} = det_1(t) + D(t) - \frac{1}{2}\langle D \rangle_t + \frac{1}{2}q^2b^2(1-\rho^2)\int_0^t \left(\pi^{\ast}(s)\right)^2z(s)ds + qB(T-t)z(t) \nonumber  \\
		& \hspace{0.25cm} \quad + qb\rho \int_0^t \pi^{\ast}(s)\sqrt{z(s)}dW^z(s) + qb \int_0^t\left(\pi^{\ast}(s)\eta-\frac{1}{2}\left(\pi^{\ast}(s)\right)^2\right)z(s)ds \nonumber \\
		&\hspace{0.25cm}= det_1(t) + D(t) - \frac{1}{2}\langle D \rangle_t  + qB(T-t)z(t) + qb\rho \int_0^t \pi^{\ast}(s)\sqrt{z(s)}dW^z(s) \nonumber \\
		& \hspace{0.25cm} \quad + q\int_0^t z(s)\left\{ b\left(\pi^{\ast}(s) \eta - \frac{1}{2}\left(\pi^{\ast}(s)\right)^2\right) + \frac{1}{2}qb^2\left(1-\rho^2\right)\left(\pi^{\ast}(s)\right)^2\right\}ds \nonumber \\
		&\overset{q = 1 +\epsilon}{=} det_1(t) + D(t) - \frac{1}{2}\langle D \rangle_t  + qB(T-t)z(t) + qb\rho \int_0^t \pi^{\ast}(s)\sqrt{z(s)}dW^z(s) \nonumber \\
		& \hspace{0.25cm} \quad + q\int_0^t z(s)\Bigg\{ \frac{1}{2}\frac{b}{1-b}\eta^2-\frac{1}{2}b(1-b)\left(\pi^{\ast}(s)-\frac{\eta}{1-b}\right)^2 \\
		&\hspace{2.75cm}- \frac{1}{2}b^2\rho^2\left(\pi^{\ast}(s)\right)^2 + \frac{1}{2}\epsilon b^2(1-\rho^2)\left(\pi^{\ast}(s)\right)^2\Bigg\}ds \nonumber \\
		&\hspace{0.25cm}\overset{(\ref{eq: optimal portfolio Heston implicit})}{=}  det_1(t) + D(t) - \frac{1}{2}\langle D \rangle_t  + qB(T-t)z(t) + qb\rho \int_0^t \pi^{\ast}(s)\sqrt{z(s)}dW^z(s)\nonumber \\
		&\hspace{0.25cm} \quad + q\int_0^t z(s)\Bigg \{ \frac{1}{2}\frac{b}{1-b}\eta^2-\frac{1}{2}\frac{b}{1-b}\left(\la\left(B(T-s)\right) + \sigma \rho B(T-s) \right)^2 \nonumber \\
		& \hspace{2.75cm} - \frac{1}{2}b^2\rho^2\left(\pi^{\ast}(s)\right)^2 + \frac{1}{2}\epsilon b^2(1-\rho^2)\left(\pi^{\ast}(s)\right)^2\Bigg \}ds \label{eq: 1-D-Verification Theorem eq 1} 
	\end{align} 
	$B$ is continuously differentiable and monotonous, since it is the solution to an autonomous ODE (see Remark \ref{rem: monotonicty of B}). Hence, $\pi^{\ast}$ is monotonous and differentiable and has finite variation. Therefore, It\^{o}'s product rule yields
	\begin{align*}
		d\left(\pi^{\ast}(t)z(t)\right) &= \pi^{\ast}(t)dz(t) + z(t)d\pi^{\ast}(t) \\
		&= \pi^{\ast}(t)dz(t) - \frac{z(t)}{1-b}\left[\left(\la\right)'\left(B(T-t)\right)+\sigma \rho \right]B'(T-t)dt \\
		&= \pi^{\ast}(t)\kappa\left(\theta-z(t)\right)dt + \pi^{\ast}(t)\sigma \sqrt{z(t)}dW^z(t) \\
		& \quad -\frac{z(t)}{1-b}\left[\left(\la\right)'\left(B(T-t)\right)+\sigma \rho \right]B'(T-t)dt.
	\end{align*}
	Hence,
	\begin{align}
		\int_0^t  \pi^{\ast}(s)\sqrt{z(s)}dW^z(s) &= \frac{1}{\sigma}\left( \pi^{\ast}(t)z(t)- \pi^{\ast}(0)z(0)\right)- \frac{\kappa}{\sigma}\int_0^t\pi^{\ast}(s)\left(\theta-z(s)\right)ds  \nonumber  \\
		& \quad + \frac{1}{\sigma}\frac{1}{1-b}\int_0^t  \left[\left(\la\right)'\left(B(T-s)\right)+ \sigma \rho\right]B'(T-s)z(s)ds \label{eq: 1-D-Verification Theorem auxiliary eq 1}
	\end{align}
	Substituting (\ref{eq: 1-D-Verification Theorem auxiliary eq 1}) in (\ref{eq: 1-D-Verification Theorem eq 1}), while defining the deterministic and continuous function
	$$det_2(t)=det_1(t) - \frac{qb\rho}{\sigma}\left[\pi^{\ast}(0)z(0)+\kappa \int_0^t\pi^{\ast}(s)\theta ds\right],$$ 
	yields
	\begin{align}
		q \ln &\left(\left | G\left(t,V^{v_0, \pi^{\ast}}(t),z(t)\right)\right | \right) \nonumber\\	
		&=  det_2(t) + D(t) - \frac{1}{2}\langle D \rangle_t  + qz(t)\left[\frac{b\rho}{\sigma}\pi^{\ast}(t) + B(T-t)\right] \nonumber \\
		&\quad + q\int_0^t z(s)\Bigg \{ \frac{1}{2}\frac{b}{1-b}\eta^2-\frac{1}{2}\frac{b}{1-b}\left(\la\left(B(T-s)\right) + \sigma \rho B(T-s) \right)^2 \nonumber\\
		& \hspace{2.5cm} - \frac{1}{2}b^2\rho^2\left(\pi^{\ast}(s)\right)^2  + \frac{1}{2} \epsilon b^2(1-\rho^2)\left(\pi^{\ast}(s)\right)^2 \nonumber \\
		&\hspace{2.5cm} + b \frac{\rho\kappa}{\sigma} \pi^{\ast}(s) + \frac{b}{1-b}\frac{\rho}{\sigma}\left[\left(\la\right)'\left(B(T-s)\right)+\sigma \rho \right]B'(T-s)\Bigg \}ds \label{eq: 1-D-Verification Theorem eq 2} 
	\end{align}
	Combining Lemma \ref{lem: A1} with (\ref{eq: 1-D-Verification Theorem eq 2}) yields
	\begin{align}
		q \ln &\left( \left | G\left(t,V^{v_0, \pi^{\ast}}(t),z(t)\right)\right | \right) \nonumber \\	
		&\leq  det_2(t) + D(t) - \frac{1}{2}\langle D \rangle_t  + q\frac{\kappa}{\sigma^2}z(t)  \nonumber  \\
		&\quad + q\int_0^t z(s)\Bigg \{ \frac{1}{2}\frac{b}{1-b}\eta^2-\frac{1}{2}\frac{b}{1-b}\left(\la\left(B(T-s)\right) + \sigma \rho B(T-s) \right)^2  \nonumber \\
		& \hspace{2.5cm} - \frac{1}{2}b^2\rho^2\left(\pi^{\ast}(s)\right)^2  + \frac{1}{2}\epsilon b^2(1-\rho^2)\left(\pi^{\ast}(s)\right)^2 \nonumber \\
		&\hspace{2.5cm} + b \frac{\rho\kappa}{\sigma} \pi^{\ast}(s) + \frac{b}{1-b}\frac{\rho}{\sigma}\left[\left(\la\right)'\left(B(T-s)\right)+\sigma \rho \right]B'(T-s)\Bigg \}ds \nonumber  \\
		&= det_2(t) + D(t) - \frac{1}{2}\langle D \rangle_t  + q\frac{\kappa}{\sigma^2}\left(z(0)+\int_0^t\kappa(\theta-z(s))ds + \sigma\int_0^t\sqrt{z}(s)dW^z(s)\right) \nonumber  \\
		&\quad + q\int_0^t z(s)\Bigg \{ \frac{1}{2}\frac{b}{1-b}\eta^2-\frac{1}{2}\frac{b}{1-b}\left(\la\left(B(T-s)\right) + \sigma \rho B(T-s) \right)^2 \nonumber \\
		& \hspace{2.5cm} - \frac{1}{2}b^2\rho^2\left(\pi^{\ast}(s)\right)^2  + \frac{1}{2}\epsilon b^2(1-\rho^2)\left(\pi^{\ast}(s)\right)^2 \nonumber \\
		&\hspace{2.5cm} + b \frac{\rho\kappa}{\sigma} \pi^{\ast}(s) + \frac{b}{1-b}\frac{\rho}{\sigma}\left[\left(\la\right)'\left(B(T-s)\right)+\sigma \rho \right]B'(T-s)\Bigg \}ds \nonumber \\
		&= \underbrace{det_2(t) + q \frac{\kappa}{\sigma^2}\left(z(0)+\int_0^t\kappa \theta ds\right)}_{=: det_3(t)}+ D(t) - \frac{1}{2}\langle D \rangle_t + \underbrace{\frac{q\kappa}{\sigma}\int_0^t\sqrt{z(s)}dW^z(s)}_{=: M(t)} \nonumber \\
		&\quad + q\int_0^t z(s)\Bigg \{ -\frac{\kappa^2}{\sigma^2} + \frac{1}{2}\frac{b}{1-b}\eta^2-\frac{1}{2}\frac{b}{1-b}\left(\la\left(B(T-s)\right) + \sigma \rho B(T-s) \right)^2 \nonumber  \\
		& \hspace{2.5cm} - \frac{1}{2}b^2\rho^2\left(\pi^{\ast}(s)\right)^2  + \frac{1}{2}\epsilon b^2(1-\rho^2)\left(\pi^{\ast}(s)\right)^2 \nonumber \\
		&\hspace{2.5cm} + b \frac{\rho\kappa}{\sigma} \pi^{\ast}(s) + \frac{b}{1-b}\frac{\rho}{\sigma}\left[\left(\la\right)'\left(B(T-s)\right)+\sigma \rho \right]B'(T-s)\Bigg \}ds \nonumber  \\
		&= det_3(t)+ D(t) - \frac{1}{2}\langle D \rangle_t + M(t) - \frac{1}{2}\langle M \rangle_t \nonumber \\
		&\quad + q\int_0^t z(s)\Bigg \{ -\frac{\kappa^2}{\sigma^2}+\frac{1}{2}\underbrace{q}_{=1+\epsilon}\frac{\kappa^2}{\sigma^2} + \frac{1}{2}\frac{b}{1-b}\eta^2-\frac{1}{2}\frac{b}{1-b}\left(\la\left(B(T-s)\right) + \sigma \rho B(T-s) \right)^2 \nonumber \\
		& \hspace{2.5cm} - \frac{1}{2}b^2\rho^2\left(\pi^{\ast}(s)\right)^2  + \frac{1}{2}\epsilon b^2(1-\rho^2)\left(\pi^{\ast}(s)\right)^2 \nonumber \\
		&\hspace{2.5cm} + b \frac{\rho\kappa}{\sigma} \pi^{\ast}(s) + \frac{b}{1-b}\frac{\rho}{\sigma}\left[\left(\la\right)'\left(B(T-s)\right)+\sigma \rho \right]B'(T-s)\Bigg \}ds \nonumber \\
		&=det_3(t)+ D(t) - \frac{1}{2}\langle D \rangle_t + M(t) - \frac{1}{2}\langle M \rangle_t \\
		&\quad + q\int_0^t z(s)\Bigg \{ -\frac{\kappa^2}{2\sigma^2} + \frac{1}{2}\frac{b}{1-b}\eta^2-\frac{1}{2}\frac{b}{1-b}\left(\la\left(B(T-s)\right) + \sigma \rho B(T-s) \right)^2 \nonumber \\
		& \hspace{2.5cm} - \frac{1}{2}b^2\rho^2\left(\pi^{\ast}(s)\right)^2  + \frac{1}{2}\epsilon b^2(1-\rho^2)\left(\pi^{\ast}(s)\right)^2+\frac{1}{2}\epsilon\frac{\kappa^2}{\sigma^2} \nonumber \\
		&\hspace{2.5cm} + b \frac{\rho\kappa}{\sigma} \pi^{\ast}(s) + \frac{b}{1-b}\frac{\rho}{\sigma}\left[\left(\la\right)'\left(B(T-s)\right)+\sigma \rho \right]B'(T-s)\Bigg \}ds \nonumber  \\
		&= det_3(t)+ D(t) - \frac{1}{2}\langle D \rangle_t + M(t) - \frac{1}{2}\langle M \rangle_t +q\int_0^tz(s)\Big \{ \left(  \ast\right) \Big \}ds \label{eq: 1-D-Verification Theorem eq 3} 
	\end{align}
	
	According to Lemma \ref{lem: A2},  the expression $(\ast)$ in (\ref{eq: 1-D-Verification Theorem eq 3}) is negative for all $s\in[0,T]$ if $\epsilon = 0$. Hence, due to the continuity of $(\ast)$ in $\epsilon$ and the boundedness of all deterministic functions in $(\ast)$ w.r.t. $t,$  there exists $\epsilon >0$ such that $(\ast)<0$ for all $s\in [0,T]$. For such a choice of $\epsilon$ we obtain
	\begin{align*}
		q \ln \left(\left | G\left(t,V^{v_0, \pi^{\ast}}(t),z(t)\right) \right |\right)  &\leq det_3(t)+ D(t) - \frac{1}{2}\langle D \rangle_t + M(t) - \frac{1}{2}\langle M \rangle_t \\
		& \leq \sup_{s\in[0,T]}\left(det_3(s)\right)+ D(t) - \frac{1}{2}\langle D \rangle_t + M(t) - \frac{1}{2}\langle M \rangle_t \\
		\Leftrightarrow \ \left |G\left(t,V^{v_0, \pi^{\ast}}(t),z(t)\right)\right |^q  &\leq \exp\left(\sup_{s\in[0,T]}\left(det_3(s)\right)\right)\cdot \underbrace{ \exp\left(D(t) - \frac{1}{2}\langle D \rangle_t\right) \cdot \exp\left(M(t) - \frac{1}{2}\langle M \rangle_t\right)}_{=: (\ast\ast)}.
	\end{align*}
	Since $D$ and $M$ are local martingales with independent diffusions, $ (\ast\ast)$ is a supermartingale. Hence, Doob's optional sampling theorem finally yields
	\begin{align*}
		&\sup_{\tau \in \mathcal{T}}\E\left[\left |G(\tau,V^{v_0, \pi^{\ast}}(\tau),z(\tau))\right |^q\right]  \\
		&\qquad \leq \exp\left(\sup_{s\in[0,T]}\left(det_3(s)\right)\right) \sup_{\tau \in \mathcal{T}}\E\left[\exp\left(D(t) - \frac{1}{2}\langle D \rangle_t\right) \cdot \exp\left(M(t) - \frac{1}{2}\langle M \rangle_t\right)\right] \\
		&\qquad \leq \exp\left(\sup_{s\in[0,T]}\left(det_3(s)\right)\right)\E\left[\exp\left(D(0) - \frac{1}{2}\langle D \rangle_0\right) \cdot \exp\left(M(0) - \frac{1}{2}\langle M \rangle_0\right)\right]\\
		&\qquad=\exp\left(\sup_{s\in[0,T]}\left(det_3(s)\right)\right)< \infty.
	\end{align*}
	Thus, $\left(G(\tau,V^{v_0, \pi^{\ast}}(\tau),z(\tau))\right)_{\tau \in \mathcal{T}}$ is bounded in $L^q$ for some $q>1$ and therefore uniformly integrable. Hence, $\pi^{\ast}$ is optimal for $\mathbf{(P)}$ as per Lemma \ref{lem: optimal portfolio Heston MK23}.
\end{proof}
\vspace{0.25cm}
\begin{proof}[Proof of Corollary \ref{cor: unconstrained Heston}]
	Noting that $\delta_K(\lambda)= \infty$ for all $x\neq 0$ if $K = \R$, the minimum in (\ref{eq: optimal lambda Heston}) is attained by $\la(B) = 0$ for all $B\in \R$ and the ODE (\ref{eq:  ODE B constr. Heston}) simplifies to
	\begin{align*}
		B_u'(\tau) &	= -r_0+r_1B_u(\tau) +\frac{1}{2} r_2 \big( B_u(\tau) \big)^2
	\end{align*}
	with coefficients 
	\begin{align*}
		r_0 = -\frac{b}{2(1-b)}\eta^2, \quad r_1 = \frac{b}{1-b}\eta\sigma\rho-\kappa, \quad r_2 = \sigma^2\left(1+\frac{b}{1-b}\rho^2\right).
	\end{align*}
	The expression for $\pi_u$ can be directly read off from Lemma \ref{lem: optimal portfolio Heston MK23}.
\end{proof}
\vspace{0.25cm}
\begin{proof}[Proof of Lemma \ref{lem: constrained portfolio equal capped portfolio}]
	Let $\rho = 0.$ Then, according to Remark \ref{rem: optimal constrained pi is capped on ODE solution} and Corollary \ref{cor: unconstrained Heston}, both $\hat{\pi}^{\ast}(t)=\pi_M$ and $\pi_u(t)=\pi_M$ are constant in time $t\in [0,T].$ Thus,
	\begin{align*}
		\pi^{\ast}(t) = \text{Cap}(\hat{\pi}(t),\alpha,\beta) = \text{Cap}(\pi_M,\alpha,\beta)=\text{Cap}(\pi_u(t),\alpha,\beta) \quad \forall t\in [0,T].
	\end{align*}
	Consider now the case $\rho \neq 0$ and $\pi_M \in K = [\alpha,\beta].$ Then, $\alpha(1-b) \leq \eta \leq \beta(1-b)$ and thus $B_- \leq 0 \leq B_+.$ Following Lemma \ref{lem: ODE B for Box constraints} and Remark \ref{rem: optimal constrained pi is capped on ODE solution}, $0 \in Z_{0}$ and the Riccati ODE (\ref{eq:  ODE B constr. Heston}) has coefficients $r_0,$ $r_1,$ and $r_2$ at $\tau = 0.$ Hence, $B(\tau) = B_u(\tau)$ for all $\tau \in [0,T]$ such that $\rho B(\tau) \in Z_{0}.$ Moreover, following Remark \ref{rem: monotonicty of B}, $B_u(\tau)$ and $B(\tau)$ (respectively $\pi_u(t)$ and $\hat{\pi}^{\ast}(t)$) are solutions to autonomous ODEs and thus monotone functions in $\tau$ (respectively $t$). Noting that $\pi_u(t)\in K = [\alpha, \beta] \Leftrightarrow \rho B_u(T-t)\in Z_0,$ we know if $\pi_u(t) \in K,$ then
	$$
	\pi_u(t) = \frac{1}{1-b}\left(\eta + \sigma\rho B_u(T-t)\right) = \frac{1}{1-b}\left(\eta + \sigma\rho B(T-t)\right) = \hat{\pi}(t).
	$$
	Due to the monotonicity of $\pi_u$ and $\hat{\pi}^{\ast},$ we have $\pi_u(t) \notin (\alpha, \beta)$ if and only if $\hat{\pi}(t) \notin (\alpha, \beta).$ Thus, we have in total
	$$
	\pi^{\ast}(t) = \text{Cap}\left(\hat{\pi}^{\ast}(t), \alpha, \beta\right) = \text{Cap}\left(\pi_u(t), \alpha, \beta\right) \quad \forall t\in [0,T].
	$$
	In particular, in both cases $\rho = 0$ and $\pi_M\in K$ we have $\mathcal{P}^H_K = \mathcal{P}^{BS}_K.$
\end{proof}
\vspace{0.25cm}
\begin{proof}[Proof of Lemma \ref{lem: equivalence gap}]
	The implication (ii) $\Rightarrow$ (i) is trivial, because if (ii) holds, then $\hat{\pi}^{\ast}(t)\neq \pi_u(t)$ and either $\hat{\pi}^{\ast}(t) \in (\alpha, \beta)$ or $\pi_u(t) \in (\alpha, \beta).$ Hence, 
	$$
	\pi^{\ast}(t)=\text{Cap}\left(\hat{\pi}^{\ast}(t),\alpha, \beta\right)\neq\text{Cap}\left(\pi_u(t),\alpha, \beta\right),
	$$
	and therefore  $\mathcal{P}^H_K\left(\pi_u,\alpha, \beta\right) \neq \mathcal{P}^{BS}_K\left(\pi_u,\alpha, \beta\right).$ \\
	Assume now that (i) holds. According to Lemma \ref{lem: constrained portfolio equal capped portfolio}, this implies $\rho \neq 0$ and $\pi_M \notin K = [\alpha,\beta].$ Moreover, as $B$ and $B_u$ are continuously differentiable,
	$$
	\hat{\pi}^{\ast}(t) = \frac{1}{1-b}\left(\eta + \sigma \rho B(T-t)\right) \quad \text{and} \quad \pi_u(t) = \frac{1}{1-b}\left(\eta + \sigma \rho B_u(T-t)\right)
	$$
	are continuously differentiable functions with $\hat{\pi}^{\ast}(T) =  \pi_u(T) = \pi_M \notin K.$ Thus, (i) can only be satisfied if there exists $0<t<T$ such that 
	\begin{align}\label{eq: proof equivalence 1}
		\left \{ \hat{\pi}^{\ast}(t), \pi_u(t) \right \} \cap K \neq \emptyset,
	\end{align}
	as otherwise $\pi^{\ast}(t)=  \text{Cap}(\hat{\pi}^{\ast}(t),\alpha,\beta)=  \text{Cap}(\pi_M,\alpha,\beta) =   \text{Cap}(\pi_u(t),\alpha,\beta)$ for all $t\in [0,T].$ Thus, we define the latest such time point as 
	\begin{align*}
		\hat{t} = \sup \left \{ 0 \leq t \leq T \ \Big | \ 	\left \{ \hat{\pi}^{\ast}(t), \pi_u(t) \right \} \cap K \neq \emptyset \right \}.
	\end{align*}
	Since there exists a positive $t>0$ satisfying (\ref{eq: proof equivalence 1}), we know that $\hat{t}>0.$ Moreover, as $\pi_M \notin K$ and $K$ is closed, we also know that $\hat{t}<T.$ If $\hat{\pi}^{\ast}(\hat{t}) \neq \pi_u(\hat{t}),$ then only one of these portfolio processes takes a value in $K$ and (ii) holds for $\hat{t}$. \\
	It thus remains to verify that $\hat{\pi}^{\ast}(\hat{t}) \neq \pi_u(\hat{t})$ to conclude the proof. This will be proven by contradiction. Assume that $\hat{\pi}^{\ast}(\hat{t}) = \pi_u(\hat{t})\in K$ which is equivalent to $\rho B(T-\hat{t})=\rho B_u(T-\hat{t})\in Z_0.$ Further,
	\begin{align*}
		T-\hat{t} = \inf\left \{ 0 \leq \tau \leq T \ \Big | \ 	\left \{ \rho B(\tau), \rho B_u(\tau) \right \} \cap Z_0 \neq \emptyset \right \}.
	\end{align*}
	Moreover, as $\pi_M \notin K \overset{\rho \neq 0}{\Leftrightarrow} \rho B(0)=\rho B_u(0)=0\notin Z_0$, $B$ and $B_u$ are non-constant and thus strictly monotone functions. However, this implies that at $\tau = T-\hat{t},$ $\rho B(\tau)$ and $\rho B_u(\tau)$ have the same value at the boundary of $Z_0$ and satisfy the same ODE while taking values in $Z_0.$ Hence, $B(\tau) = B_u(\tau)$ for all $\tau\in [0,T]$ such that $B(\tau)\in Z_0.$ Noting that
	$$
	\rho B(\tau) \in Z_0 \Leftrightarrow \hat{\pi}^{\ast}(\tau)\in K \quad \text{and} \quad \rho B_u(\tau) \in Z_0 \Leftrightarrow \pi_u(\tau)\in K,
	$$
	finally yields for all $t\in [0,T]$:
	\begin{align*}
		\pi^{\ast}(t) &= \text{Cap}\left(\hat{\pi}^{\ast}(t),\alpha,\beta\right) = \text{Cap}\Big(\frac{1}{1-b}(\eta + \sigma\underbrace{\rho B(T-t)}_{\mathclap{=\rho B_u(T-t) \ \text{while }\hat{\pi}^{\ast}(t)\in K}}),\alpha, \beta \Big) \\
		&= \text{Cap}\left(\frac{1}{1-b}(\eta + \sigma \rho B_u(T-t)),\alpha,\beta\right) =  \text{Cap}\left(\pi_u(t),\alpha,\beta\right),
	\end{align*}
	which is a contradiction to (i). Thus, $\hat{\pi}^{\ast}(\hat{t}) \neq \pi_u(\hat{t}).$
\end{proof}

\vspace{0.25cm}
\begin{proof}[Proof of Corollary \ref{cor: stationary case}]
	\textcolor{white}{123}\\
	\underline{Proof of (i):} 
	Let $\pi_M < \alpha$ and $0<\alpha = 2 \pi_M < \alpha$$\frac{2}{1-b}\eta.$ Then, $\eta >0,$ $\pi_M >0$ and $\eta = (1-b)\pi_M < (1-b)\alpha \Leftrightarrow B_{-}>0.$ Following Lemma \ref{lem: ODE B for Box constraints} and Remark \ref{rem: optimal constrained pi is capped on ODE solution}, $0 \in Z_{-}$ and the Riccati ODE (\ref{eq:  ODE B constr. Heston}) has coefficients $r_0^-,$ $r_1^-,$ and $r_2^-$ at $\tau = 0.$ However, due to the initial condition $B(0)=0$, this implies
	\begin{align*}
		B'(0) = -r^{-}_0+r^{-}_1B(0) + \frac{1}{2}r^{-}_2B(0)^2 =  -r^{-}_0 = -\frac{1}{2}b\alpha\left((1-b)\alpha-2\eta\right) = 0,
	\end{align*}
	i.e., according to Remark \ref{rem: monotonicty of B}, $B$ is constant with $B(\tau) = 0$ for all $\tau \in [0,T].$ In particular, for all $t\in [0,T]$ we have $\hat{\pi}^{\ast}(t) = \pi_M$  and
	$$
	\pi^{\ast}(t) = \text{Cap}(\hat{\pi}^{\ast}(t),\alpha, \beta) = \text{Cap}(\pi_M,\alpha, \beta) = \alpha.
	$$
	On the other hand, the optimal unconstrained portfolio $\pi_u$ does not violate the constraint at time $t^{\ast}$, i.e.
	$$
	\text{Cap}(\pi_u(t^{\ast}),\alpha,\beta) = \pi_u(t^{\ast}) \in (\alpha, \beta),
	$$
	and thus  $\mathcal{P}^H_K\neq\mathcal{P}^{BS}_K.$ \\
	\underline{Proof of (ii):} According to Corollary \ref{cor: unconstrained Heston}, 
		$$
		\frac{\partial}{\partial t}\left(\pi_u(t)\right) = \frac{1}{1-b}\frac{\partial}{\partial t}\left(\eta + \sigma \rho B_u(T-t)\right)
		$$
		Since $B_u$ is the solution to the autonomous ODE (\ref{eq: Unconstr ODE B}), $B_u$ is a monotone function (cf.\ Remark \ref{rem: monotonicty of B}). Therefore,
		\begin{align*}
			\text{sign}\left(\frac{\partial}{\partial t}\left(\pi_u(t)\right)\right) & \hspace{1mm} = \hspace{1mm} \text{sign}\left( \frac{-\sigma \rho}{1-b} B_u'(T-t)\right)  = \text{sign}\left( \frac{-\sigma \rho}{1-b} B_u'(0)\right) \overset{(\ref{eq: Unconstr ODE B})}{=}  \text{sign}\left( \frac{\sigma \rho}{1-b}r_0 \right) \\
			&\overset{(\ref{eq: ODE zones})}{=} \text{sign}\left( \frac{\sigma \rho}{1-b}\left(-\frac{b}{2(1-b)}\eta^2\right) \right) \\
		\end{align*}
		As $\frac{\sigma\eta^2}{2(1-b)^2}$ is positive, this implies $\text{sign}\left(\frac{\partial}{\partial t}\left(\pi_u(t)\right)\right) = -\text{sign}\left(b \rho\right).$ As $\beta >0$ and $\pi_M > \beta \Leftrightarrow 0 = \rho B(0) > B_{+},$ we follow the same line of argument for $\hat{\pi}^{\ast}$ to obtain
		\begin{align*}
			\text{sign}\left(\frac{\partial}{\partial t}\left(\hat{\pi}^{\ast}(t)\right)\right) &= \text{sign}\left(\frac{-\sigma\rho}{1-b}B'(0)\right) = \text{sign}\left(\frac{\sigma\rho}{1-b}r_0^{+}\right) \\
			&= \text{sign}\left(\frac{\sigma\rho}{1-b}\frac{1}{2}b\beta\left([1-b]\beta - 2\eta \right)\right) = \text{sign}\left(\frac{\sigma\rho b \beta}{2}\left(\beta - 2\pi_M\right) \right)
		\end{align*}
		Disregarding the positive factor $\frac{\sigma \beta}{2}$ finally yields
		\begin{align*}
			\text{sign}\left(\frac{\partial}{\partial t}\left(\hat{\pi}^{\ast}(t)\right)\right) = \text{sign}\big(\rho b\underbrace{\left(\beta - 2\pi_M\right)}_{<0} \big) = -\text{sign}\left(b \rho\right) = 	\text{sign}\left(\frac{\partial}{\partial t}\left(\pi_u(t)\right)\right).
		\end{align*}
		If now $\rho<0$ and $b<0,$ then both portfolio allocation $\pi^{\ast}(t) = \text{Cap}\left(\hat{\pi}^{\ast}(t),\alpha,\beta\right)$ and $\text{Cap}\left(\pi_u(t),\alpha,\beta\right)$ are non-increasing in time. In particular, as both allocations are equal to $\beta < \pi_M$ at $t=T,$ they must be equal (to $\beta$) throughout the entire investment horizon. Hence, the projections $\mathcal{P}^{H}_K$ and $\mathcal{P}^{BS}_K$ coincide under these assumptions.
\end{proof}

\vspace{0.25cm}

\begin{proof}[Proof of Theorem \ref{thm: strong verification PCSV}]
	
		We transform the portfolio optimisation problem $\mathbf{(P)}$ into an equivalent optimisation problem by applying the change of control $\pi_A(t):= A'\pi(t)$ for any given $\pi \in \Lambda^{PCSV}.$ Expressing the SDE of the wealth process $V^{v_0,\pi}$ in terms of $\pi_A$ yields
		\begin{align}\label{eq: dynamics in terms of pi_A}
			dV^{v_0,\pi}(t)&= V^{v_0,\pi}(t)V^{v_0,\pi}(t)\left[\left(r+\underbrace{\eta'A}_{=: \eta_A'}\diag(z(t))\underbrace{A'\pi(t)}_{=\pi_A(t)}\right)dt + \underbrace{\pi(t)'A}_{= \pi_A(t)'}\diag(\sqrt{z(t)})dW(t)\right] \nonumber \\
			&= V^{v_0,\pi}(t)\left[\left(r+\eta_A'\diag(z(t))\pi_A(t)\right)dt + \pi_A(t)'\diag(\sqrt{z(t)})dW(t)\right] 
		\end{align}
		and $\pi(t) \in K_{PCSV} \Leftrightarrow \pi_A(t)\in \bigtimes_{i=1}^d [0, \sqrt{\beta_i}].$ 
		In particular, we may equivalently rewrite the portfolio optimisation problem in terms of $\pi_A$ as
		\begin{align*}
			\mathbf{(P_A)}\begin{cases}
				\Phi(v_0) = \underset{\pi \in \Lambda_A}{\sup} \mathbbm{E}\big[U(V^{v_0, \pi}(T)) \big] \\
				\Lambda_A = \big \{ \pi_A \in \Lambda_{PCSV} \ \big | \ \pi_A(t) \in \bigtimes_{i=1}^d [0, \sqrt{\beta_i}] \ \mathcal{L}[0,T]\otimes Q-\text{a.e.} \}
			\end{cases}.
		\end{align*}
		We proceed by solving $\mathbf{(P_A)}$ and then inverting the change of control to obtain a solution for the original optimisation problem $\mathbf{(P)}.$
	As $K_A = \bigtimes_{i=1}^d [0, \sqrt{\beta_i}]$ is a d-dimensional interval, $\mathbf{(P_A)}$ fits into the setting of Section 4.2 in \citet{EKZ2023} with $m=d$ and $\Sigma_i = 1$ for $i=1,..,d.$ According to Lemma 4.5 in \citet{EKZ2023}, if $A:[0,T]\rightarrow \R,$ $B:[0,T]\rightarrow \R^d$ with $A(0) = B_i(0) = 0$ for $i=1,...,d$ satisfy
	\begin{align}
		A'(\tau) &= br +  \sum_{i=1}^d \kappa_i \theta_i \nonumber \\
		B_i'(\tau) &= -\kappa_i B_i(\tau) + \frac{1}{2}\sigma_i\left(B_i(\tau)\right)^2 + \frac{1}{2}\frac{b}{1-b} \inf_{\lambda \in \R}\left\{ 2(1-b)\delta_{[0,\sqrt{\beta_i}]}(\lambda) + \left(\left(\eta_A\right)_i + \lambda_i + \sigma_i \rho_i B_i(\tau)\right)^2\right\},\label{eq: ODEs for B PCSV}
	\end{align}
	then the function $G(t,v,z):= \frac{1}{b}v^b\exp\left(A(T-t)+B(T-t)'z\right)$ is a solution to the HJB PDE associated with $\mathbf{(P_A)}.$ However, for each $i=1,...,d,$ equation (\ref{eq: ODEs for B PCSV}) contains the same ODE and optimisation that were considered in Lemma \ref{lem: ODE B for Box constraints} and Section \ref{sec: Heston's Stochastic Volatility Model}. Since Assumption \ref{ass: condition box constraints} is satisfied, the solution $B_i$ to (\ref{eq: ODEs for B PCSV}) is given by Theorem \ref{thm: solution constrained B in 1-D}. Finally, the candidate optimal portfolio $\pi^{\ast}_A$ for $\mathbf{(P)}$ is given by 
	\begin{align*}
		\left(\pi^{\ast}_A\right)_i(t)= \text{Cap}\left(\frac{1}{b}\left(\left(\eta_A\right)_i + \sigma_i\rho_i B_i(T-t)\right),0,\sqrt{\beta_i}\right).
	\end{align*}
	We still need to formally verify the optimality of $\pi^{\ast}_A$ for $\mathbf{(P_A)}.$ For this purpose, we define the sequence of stopping times  $\tau_{n,t}$ as $\tau_{n,t}=\min(T,\hat{\tau}_{n,t})$, with
	\begin{align*}
		\hat{\tau}_{n,t} = \inf \Big \{t\leq u \leq T \ \Big | \ &\int_t^u \left \Vert b \left(\sqrt{z(s)} \odot \pi_A(s)\right) G(s,V^{v_0, \pi^{\ast}}(s),z(s))\right \Vert^2ds \geq n, \\ 
		&\int_t^u  \left \Vert \left( \sigma \odot \sqrt{z(s)} \odot B(T-s) \right) G(s,V^{v_0, \pi^{\ast}}(s),z(s))\right \Vert ^2ds \geq n \Big \}.
	\end{align*}
	According to Theorem 3.12 in \citet{EKZ2023}, the optimality of $\pi^{\ast}_A$ is verified if
	$$
	\left(G\left(\tau_{n,t}, V^{v_0, \pi^{\ast}}(\tau_{n,t}), z(\tau_{n,t})\right) \right)_{n \in \N}
	$$
	is uniformly integrable for every $t \in [0,T].$ However, as Assumptions \ref{ass: condition box constraints} and \ref{ass: boundedness near maturity} are satisfied for every $i=1,...,d,$ following the same steps as in the proof of Theorem \ref{thm: strong verification in Heston market}, we can show that there exists a constant $q>1$ and a bounded, deterministic function $det(t)$ such that the local martingales
	\begin{align*}
		D_i(t) &= bq\sqrt{1-\rho_i^2}\int_0^t \left(\pi_A\right)_i^{\ast}(s)\sqrt{z_i(s)}d\hat{W}_i(s) \\
		M_i(t)&= \frac{q\kappa_i}{\sigma_i}\int_0^t \sqrt{z_i(s)}dW^{z}_i(s)
	\end{align*}
	can be used to bound $\left |G\left(t,V^{v_0,\pi^{\ast}}(t),z(t)\right) \right |^q$ for every $t\in[0,T]$ through 
	\begin{align*}
		\Big | G&\left(t,V^{v_0,\pi^{\ast}}(t),z(t)\right) \Big |^q  \leq  \exp \left( \sup_{s\in [0,T]} det(s) \right) \underbrace{\exp\left(\sum_{i=1}^d D_i(t) - \frac{1}{2}\langle D_i\rangle_t + M_i(t) - \frac{1}{2}\langle M_i\rangle_t\right)}_{(\ast)}.
	\end{align*}
	The diffusions of the local martingales $D_i,$ $M_i$ are independent and therefore $(\ast)$ is a supermartingale. Since the above bound holds pointwise for every $t\in [0,T],$ Doob's optional sampling theorem yields the $L^q$-boundedness (and thereby uniform integrability) of
	$$
	\Big(G(\tau^{0}_{n,t},V^{v_0, \pi^{\ast}}(\tau^{0}_{n,t}),z(\tau^{0}_{n,t}))\Big)_{n\in \N}.
	$$
	Hence, $\pi_A^{\ast}$ is optimal for $\mathbf{(P_A)}$ and thus $\pi^{\ast}(t) = A\pi^{\ast}_A(t)$ is optimal for $\mathbf{(P)}$ in $\mathcal{M}_{PCSV}.$
\end{proof}
\vspace{0.25cm}
\begin{proof}[Proof of Theorem \ref{thm: optimal portfolio vola-inverse constraints}]
	
		The proof follows similar arguments as the proof of Corollary 5.4 in \citet{Kraft2005}.
		For given $\pi\in \Lambda^{\gamma, \Sigma},$ let $\hat{\pi}_{BS}$ and $\hat{\pi}_H$ be defined through
		\begin{align}\label{eq: change of control mpr}
			\hat{\pi}_{BS}(t) := \Sigma(z(t))\pi(t) \qquad \text{or} \qquad \hat{\pi}_H(t) :=  \frac{\Sigma(z(t))}{\sqrt{z(t)}}\pi(t).
		\end{align}

	\underline{Proof of (i):} By construction, $\pi \in \Lambda_{K(\cdot)}$ if and only if $\hat{\pi}_{BS}(t) \in \hat{K}:=[\alpha,\beta]$ $\mathcal{L}[0,T]\otimes Q$-a.e., i.e. the constraint on $\hat{\pi}_{BS}$ is constant. Under the present assumptions, the wealth process satisfies
	\begin{align*}
		dV^{v_0,\pi}(t) &= V^{v_0,\pi}(t)\left([r + \eta \Sigma(z(t))\pi(t)]dt + \pi(t)\Sigma(z(t))dW(t)\right) \\
		&= V^{v_0,\pi}(t)\left([r + \eta \hat{\pi}_{BS}(t)]dt + \hat{\pi}_{BS}(t)dW(t)\right).
	\end{align*}
	Hence, maximising $\E \left[U\left(V^{v_0,\pi}(T)\right)\right]$ over $\hat{\pi}_{BS}$ subject to the constraint $\hat{\pi}_{BS}(t) \in \hat{K}=[\alpha,\beta]$ $\mathcal{L}[0,T]\otimes Q\text{-a.e.}$ is equivalent to the constrained portfolio optimisation problem in a Black-Scholes market $\mathcal{M}_{BS}$ with market price of risk $\eta$ and volatility $1.$ Therefore, as discussed in Section \ref{subsec: Comparison to Unconstrained Portfolio}, the constant-mix strategy
	\begin{align*}
		\hat{\pi}^{\ast}_{BS} = \text{Cap}\left(\pi_M,\alpha,\beta\right) \Leftrightarrow \pi^{\ast}(t) = \Sigma(z(t))^{-1} \text{Cap}\left(\pi_M,\alpha,\beta\right)
	\end{align*}
	maximises the expected utility over all admissible $\hat{\pi}_{BS}.$ Inverting the change of control through multiplication by $\Sigma(z(t))^{-1}$ yields the claim.
	\\
	\underline{Proof of (ii):} By construction, $\pi \in \Lambda_{K(\cdot)}$ if and only if $\hat{\pi}_H(t) \in \hat{K}:=[\alpha,\beta]$ $\mathcal{L}[0,T]\otimes Q$-a.e., i.e. the constraint on $\hat{\pi}_H$ is constant. Under the present assumptions, the wealth process satisfies
	\begin{align*}
		dV^{v_0,\pi}(t) &= V^{v_0,\pi}(t)\left([r + \eta \sqrt{z(t)} \Sigma(z(t))\pi(t)]dt + \pi(t)\Sigma(z(t))dW(t)\right) \\
		&= V^{v_0,\pi}(t)\left([r + \eta z(t) \hat{\pi}_H(t)]dt + \hat{\pi}_H(t)\sqrt{z(t)}dW(t)\right).
	\end{align*}
	Hence, maximising $\E \left[U\left(V^{v_0,\pi}(T)\right)\right]$ over $\hat{\pi}_H$ subject to the constraint $\hat{\pi}_H(t) \in \hat{K}=[\alpha,\beta]$ $\mathcal{L}[0,T]\otimes Q\text{-a.e.}$ is equivalent to the constrained portfolio optimisation problem in the Heston market $\mathcal{M}_H,$ as considered in Section \ref{sec: Heston's Stochastic Volatility Model}. As all requirements of Theorem \ref{thm: strong verification in Heston market} are satisfied, $\pi^{\ast}$ (as defined in Theorem \ref{thm: strong verification in Heston market}) maximises the expected utility over all admissible $\hat{\pi}_{H}.$ Inverting the change of control through multiplication by $\sqrt{z(t)}\Sigma(z(t))^{-1}$ yields the claim.
\end{proof}
\vspace{0.25cm}

\newpage 
\section{Supplementary Results}\label{sec: App. Supplementary Results}

\begin{lemma}\label{lem: convexity implies Lipschitz continuity}
	\begin{itemize}
		\item [(i)] Let $f:\R\rightarrow \R$ be a convex function. Then, for every $x_0\in \R,$ $\epsilon >0,$ there exists an $L>0$ such that if $|x-x_0|<\epsilon,$ $|y-x_0|<\epsilon,$ then 
		$$
		|f(x)-f(y)|<L|x-y|.
		$$
		In particular, f is locally Lipschitz continuous.
		\item[(ii)] Consider an interval $I\subset \R$ and let $f_1,f_2:I\rightarrow \R$ be Lipschitz continuous on $I.$ Then, \linebreak $f:= min(f_1,f_2)$ is Lipschitz continuous on $I.$
	\end{itemize}
\end{lemma}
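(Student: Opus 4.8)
The plan is to prove (i) by the classical three-chord (monotone secant-slope) property of convex functions, localised to a slightly enlarged interval, and to prove (ii) by a one-line estimate on the minimum of two functions.

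For (i), I would first record the three-chord inequality: if $f:\R\to\R$ is convex and $a<b<c$, then
$$\frac{f(b)-f(a)}{b-a}\;\le\;\frac{f(c)-f(a)}{c-a}\;\le\;\frac{f(c)-f(b)}{c-b},$$
which follows by writing $b=\lambda a+(1-\lambda)c$ with $\lambda=\frac{c-b}{c-a}\in(0,1)$, applying the convexity inequality $f(b)\le\lambda f(a)+(1-\lambda)f(c)$, and rearranging. Fixing $x_0\in\R$ and $\epsilon>0$, I then introduce the four auxiliary points $a=x_0-2\epsilon<b=x_0-\epsilon<c=x_0+\epsilon<d=x_0+2\epsilon$. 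For arbitrary $x,y$ with $|x-x_0|<\epsilon$, $|y-x_0|<\epsilon$ and (without loss of generality) $x<y$, one has the strict chain $a<b<x<y<c<d$, so applying the three-chord inequality four times — to $(a,b,x)$, $(b,x,y)$, $(x,y,c)$, $(y,c,d)$ — yields the squeeze
$$\frac{f(b)-f(a)}{b-a}\;\le\;\frac{f(y)-f(x)}{y-x}\;\le\;\frac{f(d)-f(c)}{d-c}.$$
Setting $L:=\max\!\left\{\bigl|\tfrac{f(b)-f(a)}{b-a}\bigr|,\ \bigl|\tfrac{f(d)-f(c)}{d-c}\bigr|\right\}+1$, a constant depending only on $x_0$ and $\epsilon$, gives $|f(x)-f(y)|<L\,|x-y|$ for all such $x\ne y$ (the case $x=y$ being vacuous). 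Since this is precisely the statement that $f$ is Lipschitz on every bounded neighbourhood, local Lipschitz continuity follows at once.

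For (ii), let $L_1,L_2$ be Lipschitz constants of $f_1,f_2$ on $I$ and put $L:=\max\{L_1,L_2\}$. The key elementary fact is that for real numbers $|\min\{p_1,p_2\}-\min\{q_1,q_2\}|\le\max\{|p_1-q_1|,|p_2-q_2|\}$, which one verifies by assuming without loss of generality $\min\{p_1,p_2\}\ge\min\{q_1,q_2\}$, choosing $j$ with $\min\{q_1,q_2\}=q_j$, and noting $0\le\min\{p_1,p_2\}-\min\{q_1,q_2\}\le p_j-q_j\le|p_j-q_j|$. Applying this with $p_i=f_i(x)$, $q_i=f_i(y)$ gives $|f(x)-f(y)|\le\max_i|f_i(x)-f_i(y)|\le L|x-y|$ for all $x,y\in I$, which is the claim. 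Both parts are essentially routine; the only point requiring a little thought is the choice of the four auxiliary points in (i) so that the secant slope over the small interval is trapped between two \emph{fixed} numbers, giving a bound uniform in $x$ and $y$ — that is the (very modest) main obstacle.
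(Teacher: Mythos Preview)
Your proposal is correct. For part (i), the paper takes a shorter route: it observes that a real convex function is continuous, hence bounded on the closed ball $\{x:|x-x_0|\le\epsilon\}$, and then simply invokes a standard textbook lemma (Section~A, Lemma~3.1.1 in Hiriart-Urruty and Lemar\'echal) which says that a convex function bounded by $m$ and $M$ on a ball is Lipschitz there with an explicit constant depending on $m,M,\epsilon$. Your argument via the three-chord inequality and the four auxiliary points $x_0\pm\epsilon$, $x_0\pm2\epsilon$ is the standard self-contained proof of exactly that textbook lemma, so you are effectively unpacking the citation; this buys you independence from the reference at the cost of a few more lines. For part (ii), the paper argues via the identity $\min(p,q)=\tfrac{1}{2}(p+q-|p-q|)$ together with the reverse triangle inequality, arriving at the Lipschitz constant $L_1+L_2$. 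Your route through the inequality $|\min\{p_1,p_2\}-\min\{q_1,q_2\}|\le\max_i|p_i-q_i|$ is equally elementary and in fact yields the sharper constant $\max\{L_1,L_2\}$; either approach suffices for the application in the paper, where only local Lipschitz continuity (and not the value of the constant) is needed for the Picard--Lindel\"of existence argument.
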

\begin{proof}[Proof of Lemma \ref{lem: convexity implies Lipschitz continuity}]
	\textcolor{white}{1}\\
	\underline{Proof of (i):} Since $f$ is convex on $\R,$ f is continuous and thereby bounded on any compact subset of $\R.$ In particular, there exists $m,M\in \R$ such that 
	$$m\leq f(x)\leq M \quad \text{for all } x \ \text{with } |x-x_0|\leq \epsilon.$$
	Hence, the statement follows from Section A, Lemma 3.1.1 in \citet{Hiriart-Urruty2001}. \\
	\underline{Proof of (ii):} Let $L_1, L_2$ denote the Lipschitz constant of $f_1,$ $f_2$ respectively. For any real numbers $x,y \in \R$ recall the identity
	\begin{align}\label{eq: identity minimum}
		\min(x,y) = \frac{1}{2}\left(x+y-|x-y|\right).
	\end{align}
	Therefore, for any $x,y\in I$ we have
	\begin{align*}
		| f(x)-f(y) | &\hspace{1mm}= \hspace{1mm}\Big | \min\left(f_1(x),f_2(x)\right)- \min\left(f_1(y),f_2(y)\right)\Big | \\
		&\overset{(\ref{eq: identity minimum})}{=}\frac{1}{2}\Big |f_1(x)+f_2(x)-|f_1(x)-f_2(x)|- f_1(y)-f_2(y)+|f_1(y)-f_2(y)|\Big | \\
		&\hspace{1mm} \leq \hspace{1mm}\frac{1}{2} \left(|f_1(x)-f_1(y)| + |f_2(x)-f_2(y)| + \Big | |f_1(x)-f_2(x)| - |f_1(y)-f_2(y)| \Big | \right) \\
		&\hspace{1mm} \leq \hspace{1mm} \frac{1}{2} \big(|f_1(x)-f_1(y)| + |f_2(x)-f_2(y)| +  | f_1(x)-f_2(x) - \left(f_1(y)-f_2(y)\right)  | \big) \\
		& \hspace{1mm} = \hspace{1mm} \frac{1}{2} \big(|f_1(x)-f_1(y)| + |f_2(x)-f_2(y)| + \underbrace{\left | f_1(x)-f_2(x) - f_1(y)+f_2(y) \right |}_{\leq |f_1(x)-f_1(y)| + |f_2(x)-f_2(y)|} \big) \\
		&\hspace{1mm} \leq \hspace{1mm} \underbrace{|f_1(x)-f_1(y)|}_{\leq L_1 |x-y|} + \underbrace{|f_2(x)-f_2(y)|}_{\leq L_2|x-y|} \\
		&\hspace{1mm}\leq \hspace{1mm} (L_1+L_2)|x-y|.
	\end{align*}
	Hence, $f$ is Lipschitz continuous on $I$ with Lipschitz constant $L=L_1+L_2.$
\end{proof}

\begin{theorem}\label{thm: Picard-Lindelöf}
	Let $f:\R\rightarrow \R$ be locally Lipschitz continuous on $\R$ and $\tau_0,B_0 \in \R$ be constants. Consider the ordinary differential equation
	\begin{align}\label{eq: autonomous ODE}
		B'(\tau) = f(B(\tau)), \quad B(\tau_0) = B_0. 
	\end{align}
	Then there exists an $\epsilon > 0$ such that the ODE (\ref{eq: autonomous ODE}) has a unique solution $B(\tau)$ for $\tau \in (\tau_0-\epsilon, \tau_0+\epsilon).$
\end{theorem}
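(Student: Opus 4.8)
The plan is to recast the initial value problem as a fixed-point equation for an integral operator and apply the Banach fixed-point theorem; this is the classical Picard iteration. First I would note that, since $f$ is continuous, a continuous function $B$ on an interval about $\tau_0$ solves (\ref{eq: autonomous ODE}) if and only if it solves the integral equation
\begin{align*}
	B(\tau) = B_0 + \int_{\tau_0}^{\tau} f(B(s))\, ds ,
\end{align*}
the point being that the right-hand side is automatically continuously differentiable in $\tau$ whenever $B$ is continuous, and differentiating recovers both the ODE and the initial condition. To localise, I would use that $f$ is locally Lipschitz to pick $\delta>0$ and $L>0$ with $|f(x)-f(y)| \le L|x-y|$ for all $x,y\in[B_0-\delta,B_0+\delta]$, and set $M := \sup_{|x-B_0|\le\delta}|f(x)| < \infty$.

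Next I would choose $\epsilon>0$ small enough that both $\epsilon M \le \delta$ and $\epsilon L < 1$ hold, and work in the metric space
\begin{align*}
	X := \bigl\{\, g \in C\bigl([\tau_0-\epsilon,\tau_0+\epsilon]\bigr) \ : \ \sup_{|\tau-\tau_0|\le\epsilon}|g(\tau)-B_0| \le \delta \,\bigr\},
\end{align*}
a closed subset of the Banach space $\bigl(C([\tau_0-\epsilon,\tau_0+\epsilon]),\|\cdot\|_\infty\bigr)$ and hence complete. On $X$ I would define the Picard operator $(\Gamma g)(\tau) := B_0 + \int_{\tau_0}^{\tau} f(g(s))\, ds$. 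The bound $\|\Gamma g - B_0\|_\infty \le \epsilon M \le \delta$ shows $\Gamma(X)\subset X$, and the Lipschitz estimate gives $\|\Gamma g_1 - \Gamma g_2\|_\infty \le \epsilon L\,\|g_1-g_2\|_\infty$ with $\epsilon L<1$, so $\Gamma$ is a contraction. The Banach fixed-point theorem then furnishes a unique $B\in X$ with $\Gamma B = B$, which by the first step is a solution of (\ref{eq: autonomous ODE}) on $(\tau_0-\epsilon,\tau_0+\epsilon)$.

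It remains to upgrade uniqueness in $X$ to uniqueness among all solutions. Here I would observe that any solution $\tilde B$ is continuous with $\tilde B(\tau_0)=B_0$, so after possibly shrinking $\epsilon$ it remains in $[B_0-\delta,B_0+\delta]$ on $(\tau_0-\epsilon,\tau_0+\epsilon)$ and therefore belongs to $X$, whence $\tilde B = B$; alternatively, subtracting the two integral equations and invoking Gr\"onwall's inequality with constant $L$ yields $\tilde B \equiv B$ directly on the common interval of existence. I expect the only mildly delicate step to be the bookkeeping in the choice of $\epsilon$ — ensuring the self-mapping property $\epsilon M\le\delta$ and the contraction property $\epsilon L<1$ hold simultaneously, and on a two-sided interval around $\tau_0$ — since the rest of the argument is entirely routine.
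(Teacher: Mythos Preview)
Your argument is the standard Picard--Lindel\"of proof via the Banach fixed-point theorem and is correct; the one point to be careful about is the final uniqueness upgrade, where the phrase ``after possibly shrinking $\epsilon$'' is slightly circular (you would then only conclude uniqueness on the smaller interval), but your Gr\"onwall alternative---or an open--closed connectedness argument using local uniqueness at each point---closes this cleanly.

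By way of comparison, the paper does not give its own proof at all: it simply cites the existence and uniqueness theorem in Chapter~\S6 of Walter's \emph{Ordinary Differential Equations}. Your proposal therefore supplies the classical argument that the cited reference contains, so there is no substantive methodological difference---you have written out what the paper outsources.
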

\begin{proof}[Proof of \ref{thm: Picard-Lindelöf}]
	This statement follows directly from Chapter \S6, \enquote{VII. Existence and Uniqueness Theorem} in \citet{Walter1998}.
\end{proof}

\begin{lemma}\label{lem: Monotonicity of Autonomous ODEs}
	Consider the setting of Theorem \ref{thm: Picard-Lindelöf} and an open interval $I\subset \R$ with $\tau_0 \in I,$ and let $B:I\rightarrow \R$ satisfy ODE (\ref{eq: autonomous ODE}). Then, $B$ is a monotone function in $\tau.$ If $f(B_0)\neq 0,$ then $B$ is strictly monotone in $\tau.$ Moreover, if $f(B_0) = 0,$ then $B$ is constant in $\tau.$ 
\end{lemma}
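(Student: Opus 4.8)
The plan is to split into the two cases $f(B_0)=0$ and $f(B_0)\neq 0$ and, in each, lean on the \emph{uniqueness} half of Theorem~\ref{thm: Picard-Lindelöf}.

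In the case $f(B_0)=0$, the constant function $\tilde{B}(\tau)\equiv B_0$ solves $\tilde{B}'=f(\tilde{B})$ with $\tilde{B}(\tau_0)=B_0$, so by uniqueness $B$ agrees with it on some interval around $\tau_0$. I would then run a standard connectedness argument on $I$: the set $\{\tau\in I: B(\tau)=B_0\}$ is nonempty, closed (continuity of $B$), and open — at any $\tau$ in it one has $f(B(\tau))=f(B_0)=0$, so local uniqueness again forces $B$ to equal $B_0$ near $\tau$. Hence $B\equiv B_0$ on $I$, which is in particular monotone.

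In the case $f(B_0)\neq 0$, say $f(B_0)>0$ (the case $f(B_0)<0$ is entirely analogous, or one passes to $-B$), the crux is the claim that $f(B(\tau))>0$ for \emph{every} $\tau\in I$; once this is shown, $B'(\tau)=f(B(\tau))>0$ on $I$ and $B$ is strictly increasing. To prove the claim on $I\cap[\tau_0,\infty)$, I would set
\[
\tau^{\ast}=\sup\bigl\{\tau\in I:\ \tau\ge\tau_0,\ f(B(s))>0\ \text{for all }s\in[\tau_0,\tau]\bigr\}.
\]
Since $B\in C^1$ and $f$ is continuous, $f\circ B$ is continuous and positive at $\tau_0$, so the set above is a nondegenerate interval and $\tau^{\ast}>\tau_0$. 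Suppose $\tau^{\ast}$ were not the right endpoint of $I$. Then $f(B(s))>0$ for $\tau_0\le s<\tau^{\ast}$, continuity gives $f(B(\tau^{\ast}))\ge 0$, and $f(B(\tau^{\ast}))>0$ is impossible (it would push the supremum past $\tau^{\ast}$), so $f(B(\tau^{\ast}))=0$. But then $B(\tau^{\ast})$ is an equilibrium of $f$, and uniqueness in Theorem~\ref{thm: Picard-Lindelöf} at the point $(\tau^{\ast},B(\tau^{\ast}))$ forces $B$ to coincide with the constant $B(\tau^{\ast})$ on a two-sided neighbourhood of $\tau^{\ast}$, hence $f(B(s))=0$ for $s$ slightly below $\tau^{\ast}$ — contradicting $f(B(s))>0$ there. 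Therefore $\tau^{\ast}$ is the right endpoint of $I$, and the mirror-image argument on $I\cap(-\infty,\tau_0]$ completes the proof of the claim. Combining the two cases, $B$ is always either constant or strictly monotone, hence monotone.

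The main obstacle is exactly the positivity-propagation step in the second case: ruling out that $f\circ B$ ever vanishes in the interior of $I$. The mechanism that makes it go through is that a zero of $f\circ B$ is an equilibrium of the autonomous ODE, so local uniqueness pins $B$ to a constant there and then pushes the vanishing of $f\circ B$ backwards in time, contradicting the definition of $\tau^{\ast}$; the remaining bookkeeping (continuity of $f\circ B$, the open–closed argument in the first case, shrinking the uniqueness neighbourhood so it sits inside $[\tau_0,\tau^\ast)$) is routine.
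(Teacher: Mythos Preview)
Your proposal is correct and follows essentially the same strategy as the paper: in both cases the key mechanism is that a zero of $f\circ B$ is an equilibrium of the autonomous ODE, and local uniqueness (Theorem~\ref{thm: Picard-Lindelöf}) then forces $B$ to be constant near that point, contradicting the nonvanishing of $B'$ just before. The paper's version is more terse---it simply picks the closest $\bar{\tau}$ to $\tau_0$ at which $B'$ vanishes and invokes uniqueness there---whereas you spell out the supremum/connectedness bookkeeping explicitly, but the underlying idea is the same.
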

\begin{proof}[Proof of Lemma \ref{lem: Monotonicity of Autonomous ODEs}]
	We first consider the case $f(B_0)\neq 0.$ The following argument proceeds by contradiction. Assume that $B$ is not a strictly monotone function of $\tau.$ Since $f$ is locally Lipschitz continuous and therefore continuous, $B$ is continuously differentiable and there must exist $\bar{\tau}\in I$ with $B(\bar{\tau})=\bar{B}$ such that 
	$$
	B'(\bar{\tau}) = f(\bar{B}) = 0 \text{ and } B'(\tau)\neq 0 \text{ for all } \tau \text{ with } |\tau-\tau_0|<|\bar{\tau}-\tau_0|.
	$$ 
	In particular, $B$ is also a solution to the ODE
	\begin{align}\label{eq: stationary autonomous ODE}
		B'(\tau)=f(B(\tau)),\quad B(\bar{\tau})=\bar{B}.
	\end{align}
	On the other hand, as $f(\bar{B}) = f(B(\bar{\tau})) = B'(\bar{\tau})= 0,$ the function $\hat{B}(\tau):= \bar{B}$ is also a solution to the ODE (\ref{eq: stationary autonomous ODE}). However, for all $\tau$ with $|\tau -\tau_0|<|\bar{\tau}-\tau_0|$ we know that $f(B(\tau))=B'(\tau) \neq 0$ and thus $B(\tau)\neq \hat{B}(\tau).$ This is a contradiction to Theorem \ref{thm: Picard-Lindelöf}. Therefore, $B$ must be strictly monotone in $\tau.$ \\
	If $f(B_0)=0,$ by an analogous argument, $B(\tau) = B_0$ is the unique solution to ODE (\ref{eq: autonomous ODE}) and therefore $B$ is constant in $\tau.$
\end{proof}

\begin{lemma}\label{lem: Riccati ODE Filipovic}
	Consider a terminal time point $T>0$, real coefficients $B_0$, $r_0, \ r_1, \ r_2$ and the following Riccati-ODE 
	\begin{align}\label{eq: Riccati Filipovic ODE}
		B'(\tau) = -r_0 + r_1B(\tau) + \frac{1}{2} r_2B(\tau)^2, \quad B(0) = {B_0},
	\end{align}
	where $r_2 \neq 0,$ $r_1^2+2r_0r_2 > 0$ and define $r_3 = \sqrt{r_1^2+2r_0r_2}.$
	\begin{itemize}
		\item[(i)] The function 
		\begin{align}\label{eq: solution Riccati ODE Filipovic}
			B(\tau) = \frac{2r_2r_3B_0+\left(e^{r_3\tau}-1\right)(r_1+r_3)\left(r_1+r_2B_0-r_3\right)}{2r_2r_3-r_2\left(e^{r_3\tau}-1\right)\left(r_1+r_2B_0-r_3\right)}
		\end{align}
		is the unique solution of equation (\ref{eq: Riccati Filipovic ODE}) on its maximal interval  $[0,t_{+}(B_0))$ with life-time $t_{+}(B_0)>0.$ Moreover, for any $T\in [0,t_{+}(B_0))$ we have
		\begin{align}´\label{eq: integrated Riccati Filipovic}
			\int_0^T B(\tau)d\tau = \frac{2}{r_2}\ln \left(\frac{2r_3 e^{\frac{r_3 - B_0}{2}T}}{r_3 \left(e^{r_3 T}+1\right)-r_1\left(e^{r_3 T}-1\right)-r_2\left(e^{r_3 T}-1\right)B_0}\right).
		\end{align}
		\item[(ii)] The life time $t_{+}(B_0)$ of $B$ (in the sense of Lemma 10.1 in \citet{Filipovic2009}) is given as
		\begin{align*}
			t_{+}(B_0) &= \begin{cases}
				\frac{1}{r_3} \ln \left(\frac{r_1 +r_2B_0 + r_3}{r_1 + r_2B_0 -r_3}\right),& \quad \text{if} \ r_1+r_2B_0 -r_3 > 0 \\
				\infty, & \quad \text{if} \ r_1+r_2B_0 -r_3 \leq 0. \\
			\end{cases}
		\end{align*}
	\end{itemize}
\end{lemma}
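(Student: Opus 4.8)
The plan is to establish both parts by direct computation, using separation of variables for the Riccati ODE with constant coefficients and then identifying the blow-up time from the explicit solution formula.

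For part (i), the key observation is that $r_3 = \sqrt{r_1^2 + 2r_0 r_2}$ is precisely (up to scaling) the discriminant governing the roots of the quadratic $q(B) = \tfrac12 r_2 B^2 + r_1 B - r_0$. First I would factor $q(B) = \tfrac{r_2}{2}(B - B_+)(B - B_-)$ where $B_{\pm} = \frac{-r_1 \pm r_3}{r_2}$ are the (distinct, since $r_3 > 0$) equilibria. Then I would separate variables in $B'(\tau) = q(B(\tau))$ and use partial fractions $\frac{1}{(B-B_+)(B-B_-)} = \frac{1}{B_+ - B_-}\left(\frac{1}{B-B_+} - \frac{1}{B-B_-}\right)$, with $B_+ - B_- = \frac{2r_3}{r_2}$, to integrate. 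This yields $\ln\left|\frac{B(\tau) - B_+}{B(\tau) - B_-}\right| = r_3 \tau + C$, and solving for $B(\tau)$ with the initial condition $B(0) = B_0$ gives a Möbius expression in $e^{r_3\tau}$; a short algebraic rearrangement puts it in the stated form \eqref{eq: solution Riccati ODE Filipovic}. Uniqueness on the maximal interval is immediate from Picard--Lindelöf (Theorem \ref{thm: Picard-Lindelöf}), since the right-hand side is a polynomial and hence locally Lipschitz. For the integral \eqref{eq: integrated Riccati Filipovic}, I would note that $B = \frac{2}{r_2}\frac{d}{d\tau}\ln(\text{denominator of }B)$ up to the affine term — more precisely, writing $B(\tau) = \frac{N(\tau)}{D(\tau)}$ from \eqref{eq: solution Riccati ODE Filipovic}, one checks that $\int_0^T B\,d\tau$ telescopes into a logarithm of $D(T)/D(0)$ plus a linear term, which after simplification using $D(0) = 2r_2 r_3$ gives the claimed formula; alternatively, one can verify \eqref{eq: integrated Riccati Filipovic} by differentiating the right-hand side and checking it equals $B(\tau)$ with the correct value at $\tau = 0$, which is the more robust route.

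For part (ii), the life time $t_+(B_0)$ is the first time the denominator $D(\tau) = 2r_2 r_3 - r_2(e^{r_3\tau} - 1)(r_1 + r_2 B_0 - r_3)$ in \eqref{eq: solution Riccati ODE Filipovic} vanishes. Setting $D(\tau) = 0$ and solving for $\tau$: if $r_1 + r_2 B_0 - r_3 > 0$, then $e^{r_3 \tau} - 1 = \frac{2r_3}{r_1 + r_2 B_0 - r_3}$, i.e. $e^{r_3\tau} = \frac{r_1 + r_2 B_0 + r_3}{r_1 + r_2 B_0 - r_3}$, which is $> 1$ (so the logarithm is positive and a genuine finite blow-up occurs), giving $t_+(B_0) = \frac{1}{r_3}\ln\frac{r_1 + r_2 B_0 + r_3}{r_1 + r_2 B_0 - r_3}$. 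If $r_1 + r_2 B_0 - r_3 \leq 0$, the factor $-r_2(e^{r_3\tau}-1)(r_1 + r_2 B_0 - r_3)$ has the same sign as $r_2$ for all $\tau > 0$ (or is identically zero when $B_0 = B_-$), so $D(\tau)$ never crosses zero and $t_+(B_0) = \infty$; here one should double-check the edge case $r_1 + r_2 B_0 - r_3 = 0$ (equilibrium start) separately, where $B \equiv B_0$ is constant and obviously global.

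The main obstacle is bookkeeping rather than conceptual difficulty: the sign of $r_2$ is unrestricted (only $r_2 \neq 0$ is assumed), so the case analysis for when $D(\tau)$ vanishes must be handled carefully to cover both $r_2 > 0$ and $r_2 < 0$ uniformly, and one must confirm that the formula \eqref{eq: solution Riccati ODE Filipovic} genuinely solves \eqref{eq: Riccati Filipovic ODE} on the whole interval $[0, t_+(B_0))$ (not just formally) — this is best done by direct substitution rather than by retracing the separation-of-variables derivation, since the latter implicitly assumes $B(\tau) \neq B_{\pm}$. I would therefore close the proof by plugging \eqref{eq: solution Riccati ODE Filipovic} back into the ODE as a verification, which also sidesteps any worry about the integration constant or sign of the absolute value.
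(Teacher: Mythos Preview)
Your proposal is correct and complete in spirit, but it takes a different route from the paper. The paper does not derive the solution formula at all: it simply invokes Lemma 10.12 of \citet{Filipovic2009} (with the identification $A=\tfrac12 r_2$, $B=r_1$, $C=r_0$, $u=B_0$) and then performs an algebraic rearrangement to show that Filipovic's expression coincides with \eqref{eq: solution Riccati ODE Filipovic}. Your approach, by contrast, is self-contained: you factor the quadratic right-hand side, separate variables via partial fractions, and integrate directly. This buys independence from an external reference and arguably gives more insight into why the formula looks the way it does, at the cost of some additional bookkeeping (in particular the sign of $r_2$, which you rightly flag). For part (ii) the two arguments are essentially identical --- both read off $t_+(B_0)$ from the first zero of the denominator --- and your case analysis is in fact slightly more careful than the paper's, which tacitly writes ``the denominator stays positive'' without addressing $r_2<0$. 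One small slip: the edge case $r_1+r_2B_0-r_3=0$ corresponds to $B_0=B_+=\frac{-r_1+r_3}{r_2}$, not $B_-$; the conclusion (constant, hence global, solution) is unaffected.
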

\begin{proof}[Proof of Lemma \ref{lem: Riccati ODE Filipovic}]
	Statement (i) follows directly from Lemma 10.12 in \citet{Filipovic2009} with $A = \frac{1}{2}r_2,$ $B=r_1,$ $C=r_0,$ $u= B_0$ and
	\begin{align*}
		& -\frac{2r_0\left(e^{r_3 \tau}-1\right)-\left(r_3\left(e^{r_3 \tau}+1\right)+r_1\left(e^{r_3 \tau}-1\right)\right)B_0}{r_3\left(e^{r_3 \tau}+1\right) - r_1\left(e^{r_3 \tau}-1\right)-r_2\left(e^{r_3 \tau}-1\right)B_0} \\
		& \quad = -\frac{-2r_3B_0 + \left(e^{r_3 \tau}-1\right)\left(2r_0 - \left(r_1+r_3\right)B_0\right)}{2r_3 - \left(e^{r_3 \tau}-1\right)\left(r_1+r_2B_0-r_3\right)} \\
		& \quad = \frac{2r_2r_3B_0 - \left(e^{r_3 \tau}-1\right)\big(\overbrace{2r_0r_2}^{=r_3^2-r_1^2} - \left(r_1+r_3\right)r_2B_0\big)}{2r_2r_3 - r_2\left(e^{r_3 \tau}-1\right)\left(r_1+r_2B_0-r_3\right)} \\
		& \quad = \frac{2r_2r_3B_0 - \left(e^{r_3 \tau}-1\right)\left(\left(r_3-r_1\right)\left(r_3+r_1\right) - \left(r_1+r_3\right)r_2B_0\right)}{2r_2r_3 - r_2\left(e^{r_3 \tau}-1\right)\left(r_1+r_2B_0-r_3\right)} \\
		& \quad = \frac{2r_2r_3B_0+\left(e^{r_3\tau}-1\right)(r_1+r_3)\left(r_1+r_2B_0-r_3\right)}{2r_2r_3-r_2\left(e^{r_3\tau}-1\right)\left(r_1+r_2B_0-r_3\right)}.
	\end{align*}
	The function $B$ as in (\ref{eq: solution Riccati ODE Filipovic}) is well-defined and continuously differentiable as long as the denominator in (\ref{eq: solution Riccati ODE Filipovic}) is non-zero. In particular, taking derivatives verifies that $B$ satisfies (\ref{eq: Riccati Filipovic ODE}) as long as its denominator is non-zero. This is true for $\tau = 0,$ when the denominator is equal to $r_2r_3 > 0,$ and the denominator stays positive if $r_1+r_2B_0 -r_3 \leq 0,$ i.e. $t_{+}(B_0) = \infty$. However, if $r_1+r_2B_0 -r_3>0,$ then 
	\begin{alignat*}{3}
		&\quad 0 &&\overset{!}{=} \quad && 2r_2r_3-r_2\left(e^{r_3t_{+}(B_0)}-1\right)\left(r_1+r_2B_0-r_3\right) \\
		\Leftrightarrow & \quad 2r_3 &&= \quad && \left(e^{r_3t_{+}(B_0)}-1\right)\left(r_1+r_2B_0-r_3\right) \\
		\Leftrightarrow & \quad t_{+}(B_0) &&= \quad && \frac{1}{r_3}\ln \left( \frac{r_1+r_2B_0 +r_3}{r_1+r_2B_0-r_3}\right).
	\end{alignat*}
\end{proof}

\begin{corollary}\label{cor: transition times Filipovic}
	Consider the setting of Lemma \ref{lem: Riccati ODE Filipovic}, let $B$ be as in (\ref{eq: solution Riccati ODE Filipovic}) and let $\hat{B} \in \R$ be given. \\
	If 
	\begin{align*}
		\tau_{\hat{B}} := \frac{1}{r_3}\ln{\left(\frac{2r_2r_3\left(\hat{B}-B_0\right) + \left(r_1+r_2B_0-r_3\right)\left(r_1+\hat{B}r_2+r_3\right)}{\left(r_1+r_2B_0-r_3\right)\left(r_1+\hat{B}r_2+r_3\right)} \right)} \leq t_{+}(B_0)
	\end{align*}
	then $B(\tau_{\hat{B}}) =\hat{B}$. 
\end{corollary}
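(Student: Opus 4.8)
The plan is to exploit the fact that the only $\tau$-dependence in the closed-form solution (\ref{eq: solution Riccati ODE Filipovic}) of Lemma \ref{lem: Riccati ODE Filipovic} enters through the single scalar $e^{r_3\tau}$. Consequently, after clearing the denominator, the equation $B(\tau)=\hat{B}$ is linear in $e^{r_3\tau}$ and can be inverted explicitly; the expression defining $\tau_{\hat{B}}$ is precisely this inverse, and the hypothesis $\tau_{\hat{B}}\le t_{+}(B_0)$ is exactly what guarantees that $\tau_{\hat{B}}$ lies in the interval on which (\ref{eq: solution Riccati ODE Filipovic}) genuinely solves the Riccati ODE (\ref{eq: Riccati Filipovic ODE}).

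First I would record, using Lemma \ref{lem: Riccati ODE Filipovic}, that the denominator $2r_2r_3-r_2\bigl(e^{r_3\tau}-1\bigr)\bigl(r_1+r_2B_0-r_3\bigr)$ in (\ref{eq: solution Riccati ODE Filipovic}) equals $r_2r_3\neq 0$ at $\tau=0$ and remains nonzero throughout $[0,t_{+}(B_0))$ (it either stays positive, when $r_1+r_2B_0-r_3\le 0$, or vanishes for the first time precisely at $t_{+}(B_0)$). Hence, for $\tau\le t_{+}(B_0)$ one may cross-multiply in $B(\tau)=\hat{B}$. Abbreviating $x:=e^{r_3\tau}-1$ and $c:=r_1+r_2B_0-r_3$, this reads
\[
\hat{B}\bigl(2r_2r_3-r_2xc\bigr)=2r_2r_3B_0+x(r_1+r_3)c,
\]
and collecting the terms containing $x$ yields the single linear relation $2r_2r_3(\hat{B}-B_0)=x\,c\,\bigl(r_1+r_2\hat{B}+r_3\bigr)$.

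Solving this for $x$, substituting back $x=e^{r_3\tau}-1$, adding $1$, and placing everything over the common denominator $(r_1+r_2B_0-r_3)(r_1+r_2\hat{B}+r_3)$ produces
\[
e^{r_3\tau}=\frac{2r_2r_3(\hat{B}-B_0)+(r_1+r_2B_0-r_3)(r_1+r_2\hat{B}+r_3)}{(r_1+r_2B_0-r_3)(r_1+r_2\hat{B}+r_3)},
\]
and applying $\tfrac1{r_3}\ln(\cdot)$ to both sides gives exactly the expression defining $\tau_{\hat{B}}$. Since every manipulation above is an equivalence, this shows that for $\tau\in[0,t_{+}(B_0)]$ one has $B(\tau)=\hat{B}$ if and only if $\tau=\tau_{\hat{B}}$; in particular, under the hypothesis $\tau_{\hat{B}}\le t_{+}(B_0)$, the point $\tau_{\hat{B}}$ lies in the domain of validity of (\ref{eq: solution Riccati ODE Filipovic}) and $B(\tau_{\hat{B}})=\hat{B}$.

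I expect the only point requiring care, rather than actual work, to be the bookkeeping around well-definedness: one must exclude the degenerate cases $r_1+r_2B_0-r_3=0$ and $r_1+r_2\hat{B}+r_3=0$, i.e.\ $B_0$ or $\hat{B}$ equal to an equilibrium $\tfrac{-r_1\pm r_3}{r_2}$ of (\ref{eq: Riccati Filipovic ODE}). In the former case $B\equiv B_0$ is constant (cf.\ Lemma \ref{lem: Monotonicity of Autonomous ODEs}), so no finite transition time to a different level exists and the defining formula for $\tau_{\hat{B}}$ itself degenerates; these configurations are therefore implicitly ruled out, equivalently the logarithm in the definition of $\tau_{\hat{B}}$ is understood to have a strictly positive argument. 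Once this is granted, the proof is the elementary inversion described above.
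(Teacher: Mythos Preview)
Your proposal is correct and follows essentially the same route as the paper's own proof: cross-multiply in the closed-form expression (\ref{eq: solution Riccati ODE Filipovic}), isolate the factor $e^{r_3\tau}-1$, and invert via the logarithm. Your additional remarks on the degenerate equilibrium cases $r_1+r_2B_0-r_3=0$ and $r_1+r_2\hat{B}+r_3=0$ are a nice piece of care that the paper leaves implicit.
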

\begin{proof}
	If $\tau_{\hat{B}} < t_{+}(B_0)$, then
	\begin{align*}
		B(\tau_{\hat{B}}) =  \frac{2r_2r_3B_0+\left(e^{r_3\tau_{\hat{B}} }-1\right)(r_1+r_3)\left(r_1+r_2B_0-r_3\right)}{2r_2r_3-r_2\left(e^{r_3\tau_{\hat{B}} }-1\right)\left(r_1+r_2B_0-r_3\right)}.
	\end{align*}
	This in turn implies
	\begin{align*}
		&\hat{B} \overset{!}{=} B(\tau_{\hat{B}})= \frac{2r_2r_3B_0+\left(e^{r_3\tau_{\hat{B}} }-1\right)(r_1+r_3)\left(r_1+r_2B_0-r_3\right)}{2r_2r_3-r_2\left(e^{r_3\tau_{\hat{B}} }-1\right)\left(r_1+r_2B_0-r_3\right)} \\
		\Leftrightarrow \quad & \hat{B}\left(2r_2r_3-r_2\left(e^{r_3\tau_{\hat{B}}}-1\right)\left(r_1+r_2B_0-r_3\right)\right) \\
		& \quad = 2r_2r_3B_0 + r_2\left(e^{r_3\tau_{\hat{B}}}-1\right)\left(r_1+r_3\right)\left(r_1+r_2B_0-r_3\right) \\
		\Leftrightarrow \quad & 2r_2r_3\left(\hat{B}-B_0\right) = \left(e^{r_3\tau_{\hat{B}}}-1\right)\left(r_1+r_2B_0-r_3\right)\left(r_1+\hat{B}r_2+r_3\right) \\
		\Leftrightarrow \quad & \left(e^{r_3\tau_{\hat{B}}}-1\right) =\frac{2r_2r_3\left(\hat{B}-B_0\right)}{\left(r_1+r_2B_0-r_3\right)\left(r_1+\hat{B}r_2+r_3\right)} \\
		\Leftrightarrow \quad & \tau_{\hat{B}} = \frac{1}{r_3}\ln{\left(\frac{2r_2r_3\left(\hat{B}-B_0\right)}{\left(r_1+r_2B_0-r_3\right)\left(r_1+\hat{B}r_2+r_3\right)} +1 \right)} \\
		& \hspace{1.5mm }= \frac{1}{r_3}\ln{\left(\frac{2r_2r_3\left(\hat{B}-B_0\right) + \left(r_1+r_2B_0-r_3\right)\left(r_1+\hat{B}r_2+r_3\right)}{\left(r_1+r_2B_0-r_3\right)\left(r_1+\hat{B}r_2+r_3\right)} \right)}
	\end{align*}
\end{proof}

\begin{lemma}\label{lem: utility sub-optimal strategy}
	Consider a bounded deterministic relative portfolio process $\pi\in \Lambda_K.$ Let $\Ap, \ \Bp:[0,T]\rightarrow \R$ with $\Ap(0) = \Bp(0) = 0$ be solutions to the system of ODEs
	\begin{align}
		\Ap'(T-t) &= rb + \kappa \theta \Bp(T-t) \label{eq: general Riccati A}\\
		\Bp'(T-t) &= -\left[\frac{1}{2}\pi(t)^2b(1-b)-b\eta\pi(t)\right]+\left[\sigma \rho b \pi(t)-\kappa\right]\Bp(T-t) + \frac{1}{2}\sigma^2\Bp(T-t)^2. \label{eq: general Riccati B}
	\end{align}
	If $\Jp$ (as defined in (\ref{eq: expected utility functional})) is the unique solution to the Feynman-Kac-PDE (omitting the argument $(t,v,z)$ for readability)
	\begin{align}\label{eq: Feynman-Kac PDE}
		0 = \Jp_t + \left(r+ \eta \pi(t)z\right)v\Jp_v + \kappa\left(\theta-z\right)\Jp_z + \sigma\rho\pi(t)zv\Jp_{zv} +\frac{1}{2}v^2\pi(t)^2z\Jp_{vv} + \frac{1}{2}\sigma^2z\Jp_{zz},
	\end{align}
	with boundary condition $\Jp(T,v,z)=U(v)=\frac{1}{b}v^b,$ then 
	$$
	\Jp(t,v,z) = \frac{1}{b}v^b\exp\left(\Ap(T-t)+\Bp(T-t)z\right).
	$$
\end{lemma}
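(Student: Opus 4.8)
To prove Lemma~\ref{lem: utility sub-optimal strategy}, the plan is to verify directly that the exponentially affine candidate solves the Feynman--Kac PDE~(\ref{eq: Feynman-Kac PDE}) with the prescribed terminal data, and then to invoke the uniqueness already built into the hypothesis. Concretely, I would set
\begin{align*}
	\tilde{J}(t,v,z) := \frac{1}{b}v^b\exp\big(\Ap(T-t)+\Bp(T-t)z\big),
\end{align*}
which is $C^1$ in $t$ and $C^2$ in $(v,z)$ on $[0,T]\times(0,\infty)\times(0,\infty)$, since $\Ap$ and $\Bp$ are continuously differentiable by hypothesis. Writing $\tau:=T-t$ and letting ${}'$ denote differentiation with respect to the argument, exactly as in~(\ref{eq: general Riccati A})--(\ref{eq: general Riccati B}), the relevant partial derivatives of $\tilde{J}$ can be expressed through $\tilde{J}$ itself: one has $v\tilde{J}_v=b\tilde{J}$, $v^2\tilde{J}_{vv}=b(b-1)\tilde{J}$, $\tilde{J}_z=\Bp(\tau)\tilde{J}$, $\tilde{J}_{zz}=\Bp(\tau)^2\tilde{J}$, $v\tilde{J}_{zv}=b\Bp(\tau)\tilde{J}$, and $\tilde{J}_t=-\big(\Ap'(\tau)+\Bp'(\tau)z\big)\tilde{J}$, the last line coming from the chain rule.

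Substituting these into~(\ref{eq: Feynman-Kac PDE}) and dividing through by the nowhere-vanishing factor $\tilde{J}$ collapses the PDE to the scalar identity
\begin{align*}
	0 = -\Ap'(\tau)-\Bp'(\tau)z + b\big(r+\eta\pi(t)z\big) + \kappa(\theta-z)\Bp(\tau) + b\sigma\rho\pi(t)z\,\Bp(\tau) + \frac{1}{2}b(b-1)\pi(t)^2 z + \frac{1}{2}\sigma^2 z\,\Bp(\tau)^2,
\end{align*}
which must hold for every $z>0$. As the right-hand side is affine in $z$, it vanishes identically precisely when its constant part and its $z$-coefficient both vanish. The constant part yields $\Ap'(\tau)=br+\kappa\theta\Bp(\tau)$, i.e. ODE~(\ref{eq: general Riccati A}); the $z$-coefficient, after regrouping and using $\frac{1}{2}b(b-1)=-\frac{1}{2}b(1-b)$, yields $\Bp'(\tau)=-\big[\frac{1}{2}\pi(t)^2 b(1-b)-b\eta\pi(t)\big]+\big[\sigma\rho b\pi(t)-\kappa\big]\Bp(\tau)+\frac{1}{2}\sigma^2\Bp(\tau)^2$, i.e. ODE~(\ref{eq: general Riccati B}) with $\pi(t)=\pi(T-\tau)$. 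Hence, since $\Ap$ and $\Bp$ are assumed to solve these two ODEs on all of $[0,T]$, the candidate $\tilde{J}$ satisfies~(\ref{eq: Feynman-Kac PDE}) on $[0,T]\times(0,\infty)\times(0,\infty)$. The terminal condition is automatic: $\Ap(0)=\Bp(0)=0$ gives $\tilde{J}(T,v,z)=\frac{1}{b}v^b=U(v)$. Therefore $\tilde{J}$ is a classical solution of the Feynman--Kac PDE with boundary value $U$, and the hypothesis that $\Jp$ is its \emph{unique} solution forces $\Jp=\tilde{J}$, which is the assertion. (If one prefers to avoid the uniqueness hypothesis, an alternative is to apply It\^o's formula to $\tilde{J}\big(t,V^{v_0,\pi}(t),z(t)\big)$: the drift term vanishes by the ODE identities just derived, so the process is a local martingale, and boundedness of $\pi$ together with Feller's condition upgrades it to a true martingale; taking conditional expectations then gives $\tilde{J}=\Jp$ directly from~(\ref{eq: expected utility functional}).)

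I do not expect a real obstacle here --- this is a standard separation-of-variables verification rather than a genuinely hard argument. The only places that call for some care are the sign bookkeeping in the chain rule when differentiating $\Ap(T-t)$ and $\Bp(T-t)$ in $t$ (so that the Riccati equations come out with the correct orientation and initial conditions), and the observation that one may legitimately split the collapsed PDE into its two coefficients, which is valid because the bracketed expression is an affine polynomial in $z$ that must vanish on an open half-line. Note that the boundedness and determinism of $\pi$ play no role in the verification step itself; they merely support the separate (and here assumed) statement that $\Jp$ genuinely is the unique classical solution of~(\ref{eq: Feynman-Kac PDE}).
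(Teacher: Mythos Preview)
Your proof is correct and follows essentially the same route as the paper: define the exponentially affine candidate, compute its partial derivatives, substitute into the Feynman--Kac PDE, factor out the nonvanishing candidate, split the resulting affine-in-$z$ expression into its constant and $z$-coefficients to recover the two ODEs, check the terminal condition via $\Ap(0)=\Bp(0)=0$, and conclude by the assumed uniqueness. The parenthetical It\^o-based alternative you sketch is not in the paper but is a reasonable aside.
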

\begin{proof}
	We first verify that $G(t,v,z) = \frac{1}{b}v^b\exp(\Ap(T-t)+\Bp(T-t)z)$ is a solution to the Feynman-Kac PDE (\ref{eq: Feynman-Kac PDE}). The partial derivatives of $G$ can be computed as
	\begin{align*}
		G_t(t,v,z)&= -\left(\Ap'(T-t) + \Bp'(T-t)z\right)G(t,v,z), \quad G_v(t,v,z)= \frac{b}{v}G(t,v,z), \\
		G_{vv}(t,v,z)&= \frac{b(b-1)}{v^2}G(t,v,z), \quad 	G_{z}(t,v,z)= \Bp(T-t)G(t,v,z), \\
		G_{zz}(t,v,z)&= \left(\Bp(T-t)\right)^2G(t,v,z), \quad \text{and} \quad G_{zv}(t,v,z)= \frac{b\Bp(T-t)}{v}G(t,v,z).
	\end{align*}
	Substituting these derivatives in (\ref{eq: Feynman-Kac PDE}), while omitting the arguments $(t,v,z),$ yields
	\begin{align*}
		0 &= G_t + \left(r+ \eta \pi(t)z\right)vG_v + \kappa\left(\theta-z\right)G_z + \sigma\rho\pi(t)zvG_{zv} + \frac{1}{2}v^2\pi(t)^2zG_{vv} + \frac{1}{2}\sigma^2zG_{zz} \\
		&= G\Big[-\left(\Ap' + \Bp'z\right)+ \left(r+ \eta \pi(t)z\right)b + \kappa\left(\theta-z\right)\Bp \\
		& \hspace{1.5cm} + \sigma\rho\pi(t)bz\Bp  + \frac{1}{2}\pi(t)^2zb(b-1) + \frac{1}{2}\sigma^2z\Bp^2\Big] \\
		\overset{G\neq 0}{\Leftrightarrow} \quad 0 &= -\left(\Ap' + \Bp'z\right)+ \left(r+ \eta \pi(t)z\right)b + \kappa\left(\theta-z\right)\Bp \\
		& \hspace{1.5cm} + \sigma\rho\pi(t)bz\Bp  + \frac{1}{2}\pi(t)^2zb(b-1) + \frac{1}{2}\sigma^2z\Bp^2 \\
		&= \underbrace{-\Ap' + rb + \kappa \theta \Bp}_{= 0 \ \text{by (\ref{eq: general Riccati A})}}  + z\underbrace{\left[-\Bp' + b\eta \pi(t)- \frac{1}{2}\pi(t)^2b(1-b) + \left(\sigma \rho b \pi(t)-\kappa\right)\Bp + \frac{\sigma^2}{2}\Bp^2\right]}_{= 0 \ \text{by (\ref{eq: general Riccati B})}}.
	\end{align*}
	Moreover, $G$ satisfies the boundary condition $G(T,v,z) =U(v)$ because $\Ap(0) = \Bp(0) = 0.$ Hence, \linebreak
	$G$ is a solution to the Feynman-Kac PDE (\ref{eq: Feynman-Kac PDE}). By assumption, such a solution is unique, and therefore 
	$$\frac{1}{b}v^b\exp\left(\Ap(T-t)+\Bp(T-t)z\right)=G(t,v,z) = \Jp(t,v,z) \qquad \forall (t,v,z)\in [0,T]\times(0,\infty)\times(0,\infty).$$
\end{proof}

\begin{corollary}\label{cor: Wealth-Equivalent Loss Deterministic Strategy}
	Let $\pi\in \Lambda_K$ be a deterministic portfolio process. Let $\Ap,$ $\Bp$ be as in Lemma \ref{lem: utility sub-optimal strategy} and $A,$ $B$ as in Lemma \ref{lem: dual ODEs Heston Model MK23}. Then, the wealth-equivalent loss $L^{\pi}$ is given as 
	\begin{align*}
		L^{\pi}(t,z) = 1-\exp \left(\frac{1}{b}\left(\Ap(T-t)-A(T-t) + \left[\Bp(T-t)-B(T-t)\right]z\right)\right) \qquad \forall (t,z)\in [0,T]\times(0,\infty). 
	\end{align*}
\end{corollary}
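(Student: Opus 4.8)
The plan is to use the two exponentially affine representations that are already available -- one for the value function $\Phi$, one for the expected utility functional $\Jp$ of the deterministic strategy $\pi$ -- and then to solve the implicit equation defining $L^{\pi}$ by elementary algebra.

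First I would record that, under the standing hypotheses, Theorem \ref{thm: strong verification in Heston market} together with Lemma \ref{lem: optimal portfolio Heston MK23} gives $\Phi(t,v,z) = \frac{1}{b}v^{b}\exp\!\bigl(A(T-t)+B(T-t)z\bigr)$ on $[0,T]\times(0,\infty)\times(0,\infty)$, where $A,B$ are as in Lemma \ref{lem: dual ODEs Heston Model MK23}. Next, since $\pi\in\Lambda_K$ is a bounded deterministic portfolio process, Lemma \ref{lem: utility sub-optimal strategy} applies and yields $\Jp(t,v,z)=\frac{1}{b}v^{b}\exp\!\bigl(\Ap(T-t)+\Bp(T-t)z\bigr)$. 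Substituting both into the defining relation $\Phi\bigl(t,v(1-L^{\pi}(t,z)),z\bigr)=\Jp(t,v,z)$ and using that $\Phi$ is a power function in its wealth argument gives
\[
\frac{1}{b}v^{b}\bigl(1-L^{\pi}(t,z)\bigr)^{b}\exp\!\bigl(A(T-t)+B(T-t)z\bigr)=\frac{1}{b}v^{b}\exp\!\bigl(\Ap(T-t)+\Bp(T-t)z\bigr).
\]
Cancelling the common non-zero factor $\frac{1}{b}v^{b}$ -- which in passing confirms that $L^{\pi}$ does not depend on $v$ -- and isolating the power yields $\bigl(1-L^{\pi}(t,z)\bigr)^{b}=\exp\!\bigl(\Ap(T-t)-A(T-t)+[\Bp(T-t)-B(T-t)]z\bigr)$. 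Because the right-hand side is strictly positive and $b\neq 0$, the map $x\mapsto x^{b}$ is a bijection of $(0,\infty)$ onto itself, so raising both sides to the power $1/b$ is unambiguous and produces exactly the asserted formula for $L^{\pi}(t,z)$; reading the identity the other way shows that this expression does solve the defining equation and that $1-L^{\pi}(t,z)>0$, so the wealth-equivalent loss is well defined.

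There is no genuine obstacle here beyond bookkeeping. The points that need attention are: citing Theorem \ref{thm: strong verification in Heston market} so that the exponentially affine form of $\Phi$ is available (this is where Assumptions \ref{ass: condition box constraints} and \ref{ass: boundedness near maturity} enter); checking the applicability of Lemma \ref{lem: utility sub-optimal strategy}, i.e.\ that $\Jp$ is the unique solution of the corresponding Feynman--Kac PDE, which holds for bounded deterministic $\pi$; and keeping track of the sign of $b$ when inverting $x\mapsto x^{b}$, which is why the strict positivity of the exponential on the right-hand side is the fact actually used.
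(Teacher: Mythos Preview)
Your proposal is correct and follows essentially the same route as the paper's own proof: both invoke the exponentially affine form of $\Phi$ (via Lemma \ref{lem: optimal portfolio Heston MK23}) and of $\Jp$ (via Lemma \ref{lem: utility sub-optimal strategy}), substitute into the defining equation for $L^{\pi}$, cancel the common factor $\frac{1}{b}v^{b}$, and solve for $L^{\pi}$. Your version is slightly more careful in noting the boundedness hypothesis needed for Lemma \ref{lem: utility sub-optimal strategy} and in justifying the inversion of $x\mapsto x^{b}$, but the argument is the same.
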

\begin{proof}
	$L^{\pi}(t,z)$ is defined as the solution to the equation
	\begin{align*}
		\Phi(t,v(1-L^{\pi}(t,z)),z) &= \E \left[ U\left(V^{v_0,\pi^{\ast}}(T)\right) \ | \ V^{v_0,\pi}(t)=v(1-L^{\pi}(t,z)), \ z(t)=z \right] \\
		&= \E \left[ U\left(V^{v_0,\pi}(T)\right) \ | \ V^{v_0,\pi}(t)=v, \ z(t)=z \right]  \\
		&= 	\Jp(t,v,z)
	\end{align*}
	From Lemma \ref{lem: optimal portfolio Heston MK23} we know that
	\begin{align*}
		\Phi(t,v(1-L^{\pi}(t,z)),z) = \frac{1}{b}\left(v(1-L^{\pi}(t,z))\right)^b\exp(A(T-t) + B(T-t)z).
	\end{align*}
	Similarly, we have by Lemma \ref{lem: utility sub-optimal strategy}
	\begin{align*}
		\Jp(t,v,z) =\frac{1}{b}v^b\exp\left(\Ap(T-t)+\Bp(T-t)z\right).
	\end{align*}
	Thus,
	\begin{align*}
		&\Phi(t,v(1-L^{\pi}(t,z)),z) = \Jp(t,v,z) \\
		\Leftrightarrow \quad & \frac{1}{b}\left(v(1-L^{\pi}(t,z))\right)^b\exp(A(T-t) + B(T-t)z) = \frac{1}{b}v^b\exp\left(\Ap(T-t)+\Bp(T-t)z\right)  \\
		\Leftrightarrow \quad & \left(\frac{v(1-L^{\pi}(t,z))}{v}\right)^b = \exp \left(\Ap(T-t)-A(T-t) + \left[\Bp(T-t)-B(T-t)\right]z\right) \\
		\Leftrightarrow \quad &L^{\pi}(t,z) =  1-\exp \left(\frac{1}{b}\left(\Ap(T-t)-A(T-t) + \left[\Bp(T-t)-B(T-t)\right]z\right)\right).
	\end{align*}
\end{proof}

\end{document}